\newcommand{\vc}[1]{\boldsymbol{#1}}
\newcommand{\mat}[1]{\vc{\mathord{\mathrm{#1}}}}
\newcommand{\No}{\mathcal{N}}
\newcommand{\cD}{\mathcal{D}}
\newcommand{\cL}{\mathcal{L}}
\newcommand{\cS}{\mathcal{S}}
\newcommand{\cV}{\mathcal{V}}
\newcommand{\cC}{\mathcal{C}}
\newcommand{\cH}{\mathcal{H}}
\newcommand{\cT}{\mathcal{T}}
\newcommand{\cR}{\mathcal{R}}
\newcommand{\dec}{\operatorname{dec}}
\newcommand{\midsep}{}
\newcommand{\R}{\mathbb{R}}
\newcommand{\vol}{\operatorname{Vol}}
\newtheorem{conjecture}[theorem]{Conjecture}
\title{Polytopes, lattices, and spherical codes\protect\\ for the nearest neighbor problem}
\titlerunning{Polytopes, lattices, and spherical codes for the nearest neighbor problem}
\author{Thijs Laarhoven}{Eindhoven University of Technology, Eindhoven, The Netherlands}{mail@thijs.com}{https://orcid.org/0000-0002-2369-9067}{}
\authorrunning{T. Laarhoven}
\keywords{(approximate) nearest neighbor problem, spherical codes, polytopes, lattices, locality-sensitive hashing (LSH)}
\begin{document}

\maketitle

\begin{abstract}
We study locality-sensitive hash methods for the nearest neighbor problem for the angular distance, focusing on the approach of first projecting down onto a random low-dimensional subspace, and then partitioning the projected vectors according to the Voronoi cells induced by a well-chosen spherical code. This approach generalizes and interpolates between the fast but asymptotically suboptimal hyperplane hashing of Charikar [STOC 2002], and asymptotically optimal but practically often slower hash families of e.g.\ Andoni--Indyk [FOCS 2006], Andoni--Indyk--Nguyen--Razenshteyn [SODA 2014] and Andoni--Indyk--Laarhoven--Razenshteyn--Schmidt [NIPS 2015]. We set up a framework for analyzing the performance of any spherical code in this context, and we provide results for various codes appearing in the literature, such as those related to regular polytopes and root lattices. Similar to hyperplane hashing, and unlike e.g.\ cross-polytope hashing, our analysis of collision probabilities and query exponents is \textit{exact} and does not hide any order terms which vanish only for large $d$, thus facilitating an easier parameter selection in practical applications.

For the two-dimensional case, we analytically derive closed-form expressions for arbitrary spherical codes, and we show that the equilateral triangle is optimal, achieving a better performance than the two-dimensional analogues of hyperplane and cross-polytope hashing. In three and four dimensions, we numerically find that the tetrahedron and $5$-cell (the $3$-simplex and $4$-simplex) and the $16$-cell (the $4$-orthoplex) achieve the best query exponents, while in five or more dimensions orthoplices appear to outperform regular simplices, as well as the root lattice families $A_k$ and $D_k$ in terms of minimizing the query exponent. We provide lower bounds based on spherical caps, and we predict that in higher dimensions, larger spherical codes exist which outperform orthoplices in terms of the query exponent, and we argue why using the $D_k$ root lattices will likely lead to better results in practice as well (compared to using cross-polytopes), due to a better trade-off between the asymptotic query exponent and the concrete costs of hashing.
\end{abstract}


\section{Introduction}

Given a large database of high-dimensional vectors, together with a target data point which does not lie in this database, a natural question to ask is: which item in the database is the most similar to the query? And can we somehow preprocess and store the database in a data structure that allows such queries to be answered faster? These and related questions have long been studied in various contexts, such as machine learning, coding theory, pattern recognition, and cryptography~\cite{duda00, shakhnarovich05, bishop06, dubiner10, laarhoven15crypto, may15}, under the headers of \emph{similarity search} and \emph{nearest neighbor searching}. Observe that a naive solution might consist of simply storing the data set in a big list, and to search this list in linear time to find the nearest neighbor to a query point. This solution requires an amount of time and space scaling linearly in the size of the data set, and solutions we are interested in commonly require more preprocessed space (and time), but achieve a sublinear query time complexity to find the nearest neighbor.

Depending on the context of the problem, different solutions for these problems have been proposed and studied. For the case where the dimensionality of the original problem is constant, efficient solutions are known to exist~\cite{arya94}. Throughout the remainder of the paper, we will therefore assume that the dimensionality of the data set is superconstant. In the late 1990s, Indyk--Motwani~\cite{indyk98} proposed the \emph{locality-sensitive hashing} framework, and until today this method remains one of the most prominent and popular methods for nearest neighbor searching in high-dimensional vector spaces, both due to its asymptotic performance when using theoretically optimal hash families~\cite{andoni14, andoni15cp}, and due to its practical performance when instantiated with truly efficient locality-sensitive hash functions~\cite{achlioptas01, charikar02, annbench1}. And whereas many other methods scale poorly as the dimensionality of the problem increases, locality-sensitive hashing remains competitive even in high-dimensional settings.

Although solutions for both asymptotic and concrete settings have been studied over the years, there is often a separation between both worlds: some methods work well in practice but do not scale optimally when the parameters increase (e.g.\ Charikar's hash family~\cite{charikar02}); while some other methods are known to achieve a superior performance for sufficiently large parameter sizes, but may not be quite as practical in some applications due to large hidden order terms in the asymptotic analysis (e.g.\ hash families studied in~\cite{andoni06, andoni14, andoni15cp} and filter families from~\cite{becker16lsf, andoni17, christiani17}). Moreover, the latter methods are often not as easy to deploy in practice due to these unspecified hidden order terms, making it hard to choose the scheme parameters that optimize the performance. A key problem in this area thus remains to close the gap between theory and practice, and to offer solutions that interpolate between \textit{quick--and--dirty} simple approaches that might not scale well with the problem size, and more sophisticated methods that only start outperforming these simpler methods as the parameters are sufficiently large.

\subsection{Related work} 

In this paper we will focus on methods for solving the nearest neighbor problem for the \emph{angular distance}, which is closely related to the nearest neighbor problem in various $\ell_p$-norms: as shown by Andoni and Razenshteyn~\cite{andoni15}, a solution for the nearest neighbor problem for the angular distance (or the Euclidean distance on the sphere) can be optimally extended to a solution for the $\ell_2$-norm for all of $\mathbb{R}^d$. Solutions for the $\ell_2$-norm can further be translated to e.g.\ solutions for the $\ell_1$-norm via appropriate embeddings. 

For the angular distance, perhaps the most well-known and widely deployed approach for finding similar items in a large database is to use the hyperplane hash family of Charikar~\cite{charikar02}. For spherically-symmetric, random data sets on the sphere of size $n$, it can find a nearest neighbor at angle at most e.g.\ $\theta = \frac{\pi}{3}$ in sublinear time $\tilde{O}(n^{\rho})$ and space $\tilde{O}(n^{1 + \rho})$, with $\rho \approx 0.5850$. Various improvements later showed that smaller query exponents $\rho$ can be achieved in sufficiently high dimensions~\cite{andoni06, andoni14}, the most practical of which is based on cross-polytopes or orthoplices~\cite{terasawa07, andoni15cp, kennedy17}: for large $d$ and for the same target angle $\theta = \frac{\pi}{3}$, the query time complexity scales as $n^{\rho + o(1)}$ with $\rho = 1/3$. Note that the convergence to the limit is rather slow~\cite[Theorem 1]{andoni15cp} and depends on both $d$ and $n$ being sufficiently large -- for fixed $d$ and large $n$, the exponent is still larger than $1/3$. In certain practical applications with moderate-sized data sets, hyperplane hashing still appears to outperform cross-polytope hashing and other advanced hashing and filtering schemes~\cite{mariano15, mariano17, becker16cp, albrecht19, becker16lsf} due to its low cost for hashing, and the absence of lower order terms in the exponent.

Related to the topic of this paper, various other works have further observed the relation between finding good locality-sensitive hash families for the angular distance and finding ``nice'' spherical codes that partition the sphere well, and allow for efficient decoding~\cite{andoni06, terasawa07, terasawa09, andoni15cp, becker16lsf, chandrasekaran18}. The requirements on a spherical code to be a \emph{good} spherical code are somewhat intricate to state, and to date it is an open problem to exactly quantify which spherical codes are the most suited for nearest neighbor searching. It may well be possible to improve upon the state-of-the-art orthoplex (cross-polytope) locality-sensitive hash family~\cite{andoni15cp, falconn} in practice with a method achieving the same asymptotic scaling, but with a faster convergence to the limit, and thus potentially a better performance in practice for large problem instances.

\subsection{A framework for evaluating spherical codes}

As a first contribution of this paper, we provide a framework for analyzing the performance of arbitrary spherical codes in the context of locality-sensitive hashing for the angular distance, where we focus on the approach of (1) projecting down onto a random low-dimensional subspace, and (2) partitioning the resulting subspace according to the Voronoi cells induced by a spherical code. More specifically, we relate the collision probabilities appearing in the analysis of these hash functions to so-called \emph{orthant probabilities} of multivariate normal distributions with non-trivial correlation matrices. Below we informally state this relation for general spherical codes, which provides us a recipe for computing the collision probabilities (and the query exponent) as a function of the set of vertices $\cC$ of the corresponding spherical code. Here a hash family being $(\theta, p_1, p_2)$-sensitive means that uniformly random vectors on the sphere collide with probability at most $p_2$, and target nearest neighbors at angle at most $\theta$ from a random query vector collide in a random hash function with probability at least $p_1$.
\begin{theorem}[Spherical code locality-sensitive hashing]
Let $\cC \subset \cS^{k-1}$ be a $k$-dimensional spherical code, and consider a hash family where:
\begin{itemize}
\item We first project onto a $k$-dimensional subspace using a random matrix $\mat{A} \sim \No(0,1)^{k \times d}$;
\item We then assign hash values based on which $\vc{c} \in \cC$ is nearest to the projected vector. 
\end{itemize}
Then for any $\theta \in (0, \frac{\pi}{2})$, this family is $(\theta, p_1, p_2)$-sensitive, where $p_2$ can be expressed in terms of the relative volumes of the Voronoi cells induced by $\cC$, and $p_1$ can be expressed as a sum of orthant probabilities $\Pr_{\vc{z} \sim \No(\vc{0},\mat{\Sigma}_i)}\left(\vc{z} \geq \vc{0}\right)$, where each $\mat{\Sigma}_i$ has size $(2k) \times (2k)$.
\end{theorem}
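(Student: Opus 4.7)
The plan is to exploit the rotational invariance of the Gaussian projection matrix $\mat{A}$ to reduce both collision probabilities to geometric quantities on the lower-dimensional sphere $\cS^{k-1}$, and then to translate collision events for correlated pairs into Gaussian orthant probabilities over the polyhedral Voronoi cones. The key observation is that for any fixed $\vc{x} \in \cS^{d-1}$, the image $\mat{A}\vc{x}$ is a weighted sum of the i.i.d.\ Gaussian columns of $\mat{A}$ with covariance $\|\vc{x}\|^2 I_k = I_k$, so $\mat{A}\vc{x} \sim \No(\vc{0}, I_k)$ and its direction is uniform on $\cS^{k-1}$. Consequently the probability that $\mat{A}\vc{x}$ lies in the Voronoi cell $V_{\vc{c}}$ of a code point $\vc{c}$ equals the normalized spherical area $\mu(V_{\vc{c}})$. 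For $p_2$, two independent uniform vectors give a joint Gaussian $(\mat{A}\vc{x}, \mat{A}\vc{y})$ whose cross-covariance $\langle \vc{x}, \vc{y}\rangle \cdot I_k$ vanishes in the orthogonal (or asymptotically random) regime, so the two projections are independent uniform directions and $p_2 = \sum_{\vc{c} \in \cC} \mu(V_{\vc{c}})^2$, which is the asserted expression in terms of relative Voronoi volumes.

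For $p_1$, I would parameterize by the fixed angle: for $\vc{x}, \vc{y} \in \cS^{d-1}$ with $\langle \vc{x}, \vc{y}\rangle = \cos\theta$, the stacked vector $(\mat{A}\vc{x}, \mat{A}\vc{y}) \in \mathbb{R}^{2k}$ is jointly Gaussian with block covariance having $I_k$ on the diagonal blocks and $\cos\theta \cdot I_k$ on the off-diagonal blocks. Using $\|\vc{c}\| = \|\vc{c}'\| = 1$, each Voronoi cell is a polyhedral cone $V_{\vc{c}} = \{\vc{z} \in \mathbb{R}^k : \mat{M}_{\vc{c}} \vc{z} \geq \vc{0}\}$, where the rows of $\mat{M}_{\vc{c}}$ are the facet normals $\vc{c} - \vc{c}'$ for the neighboring code points $\vc{c}'$. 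The event of a joint collision in cell $\vc{c}$ then becomes $\mat{M}_{\vc{c}}\mat{A}\vc{x} \geq \vc{0}$ and $\mat{M}_{\vc{c}}\mat{A}\vc{y} \geq \vc{0}$, i.e.\ the stacked Gaussian $\vc{z}_{\vc{c}} := (\mat{M}_{\vc{c}} \mat{A}\vc{x}, \mat{M}_{\vc{c}} \mat{A}\vc{y})$ lands in the positive orthant, with covariance $\mat{\Sigma}_{\vc{c}}$ obtained by conjugating the block covariance above by $\operatorname{diag}(\mat{M}_{\vc{c}}, \mat{M}_{\vc{c}})$. Summing the disjoint cell events then yields $p_1 = \sum_{\vc{c} \in \cC} \Pr_{\vc{z} \sim \No(\vc{0}, \mat{\Sigma}_{\vc{c}})}[\vc{z} \geq \vc{0}]$, as claimed.

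The main obstacle is matching the asserted $(2k) \times (2k)$ size of each $\mat{\Sigma}_{\vc{c}}$: a general Voronoi cell on $\cS^{k-1}$ may have more than $k$ facets, in which case the naive $\mat{M}_{\vc{c}}$ constructed above has more than $k$ rows and the associated orthant probability would live in higher dimension. I expect to resolve this in one of two ways: either restrict attention to simplicial Voronoi cells (as occurs for regular simplices and the root-lattice codes highlighted in the paper), where $\mat{M}_{\vc{c}}$ is $k \times k$ and the orthant probability is directly of dimension $2k$; or, more generally, triangulate each non-simplicial cell into simplicial sub-cones and replace the single term indexed by $\vc{c}$ with a sum over those pieces, each piece then contributing a genuine $(2k) \times (2k)$ orthant probability. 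Verifying positive semidefiniteness of the $\mat{\Sigma}_{\vc{c}}$ and checking that such a triangulated sum recovers the original Voronoi cell exactly are the routine but careful book-keeping steps that remain.
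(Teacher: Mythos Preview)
Your approach is essentially the same as the paper's: fix the pair $(\vc{x},\vc{y})$ at angle $\theta$, note that $(\mat{A}\vc{x},\mat{A}\vc{y})$ is jointly Gaussian with block covariance $\begin{bmatrix}1 & \cos\theta\\ \cos\theta & 1\end{bmatrix}\otimes I_k$, express each Voronoi cell as a polyhedral cone via linear inequalities in the difference vectors $\vc{c}-\vc{c}'$, and push the Gaussian through those linear maps to obtain orthant probabilities whose covariances are Kronecker products. Your derivation of $p_2$ as $\sum_{\vc{c}}\mu(V_{\vc{c}})^2$ also matches the paper.

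Where you diverge is in how you handle the dimension of $\mat{\Sigma}_i$, and here the paper's resolution is simpler than either of your proposed fixes. The paper does \emph{not} restrict to simplicial cells or triangulate. Instead, in the formal version of the theorem it takes $\mat{C}_i\in\mathbb{R}^{c\times k}$ to have \emph{all} rows $\vc{c}_i-\vc{c}_j$ (not just the facet normals), and sets $\mat{\Sigma}_i=\begin{bmatrix}1 & \cos\theta\\ \cos\theta & 1\end{bmatrix}\otimes(\mat{C}_i\mat{C}_i^T)$, which is a $(2c)\times(2c)$ matrix of rank at most $2k$. The orthant probability $\Pr_{\vc{z}\sim\No(\vc{0},\mat{\Sigma}_i)}(\vc{z}\geq\vc{0})$ is perfectly well-defined for this degenerate Gaussian, since $\vc{z}$ is just the linear image of the $2k$-dimensional standard Gaussian $(\vc{a}_1,\vc{a}_2)$ under $\begin{bmatrix}1 & 0\\ \cos\theta & \sin\theta\end{bmatrix}\otimes\mat{C}_i$. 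The ``$(2k)\times(2k)$'' in the informal statement you were given should be read as the effective dimension of the underlying randomness, not the literal matrix size; later the paper reduces the matrix size from $2c$ to $2|\cR_i|$ by keeping only rows for relevant vectors (your $\mat{M}_{\vc{c}}$), but never forces it down to $2k$. So your triangulation idea is correct but unnecessary: simply allow $\mat{\Sigma}_i$ to be rank-deficient and the proof goes through directly.
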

The locality-sensitive hashing exponent $\rho = \log p_1 / \log p_2$ describes the distinguishing power of a hash family, as for large $n$ this tells us that we can solve the (average-case) nearest neighbor problem with target angle $\theta$ in query time $\tilde{O}(n^{\rho})$ with space $\tilde{O}(n^{1 + \rho})$. The theorem above essentially provides us a recipe which, given any spherical code $\cC$ and target angle $\theta$ as input, tells us how to compute these probabilities $p_1$ and $p_2$ (and thus $\rho$) exactly. 

While the above theorem is somewhat abstract, we show how to explicitly construct the correlation matrices $\mat{\Sigma}_i$ appearing in the theorem, allowing us to always at least obtain numerical estimates on the performance of different spherical codes in the context of nearest neighbor searching. We further study how to reduce the dimensionality of the problem (and in particular, the sizes of the matrices $\mat{\Sigma}_i$) when the spherical code exhibits many symmetries, such as being isogonal, and we show how using only the \emph{relevant vectors} of each code word can further simplify the computations. In most cases the resulting orthant probabilities will still remain too complex to evaluate analytically, but in some cases we can obtain exact expressions this way.

\subsection{A survey of known spherical codes}

\subparagraph{One and two dimensions.} Using this new framework, we then apply it to various spherical codes, starting in very low dimensions. For the one-dimensional case we rederive the celebrated result of Charikar~\cite{charikar02} for one-dimensional spherical codes, noting that the collision probability analysis in fact dates back to an old result from the late 1800s~\cite{sheppard99}. Applying the same framework to two-dimensional codes, among others we establish a general formula for collision probabilities and query exponents for arbitrary polygons, as well as a more compact description of the collision probabilities for regular polygons.
\begin{theorem}[Collision probabilities for regular polygons]
Let $\cC \subset \cS^1$ consist of the vertices of the regular $c$-gon, for $c \geq 2$, and let $\theta \in (0, \frac{\pi}{2})$. Then the corresponding project--and--partition hash family $\cH$ is $(\theta, p_1, p_2)$-sensitive, with:
\begin{align}
p_1 &= \frac{1}{c} + c \left(\frac{\pi - \theta}{2 \pi}\right)^2 - c \left(\frac{\arccos(- \cos \theta \cos \frac{2 \pi}{c})}{2 \pi}\right)^2, \qquad \qquad p_2 = 1/c. \label{eq:2d}
\end{align}
\end{theorem}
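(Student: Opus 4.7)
The plan is to apply the preceding theorem's framework to the regular $c$-gon in $\cS^1$ and to exploit the strong rotational symmetry of the induced 4-dimensional Gaussian to reduce the orthant probability to a tractable one-dimensional integral. For $p_2$, the projection $\mat{A}\vc{x} \in \R^2$ is rotationally symmetric, so the angle of $\vc{u} = \mat{A}\vc{x}$ is uniform on $[0, 2\pi)$, each of the $c$ sectorial Voronoi cells has probability $1/c$, and therefore $p_2 = \sum_{i=0}^{c-1}(1/c)^2 = 1/c$.

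For $p_1$, let $\vc{u} = \mat{A}\vc{x}$ and $\vc{v} = \mat{A}\vc{y}$ with $\langle\vc{x},\vc{y}\rangle = \cos\theta$. The joint distribution of $(\vc{u},\vc{v}) \in \R^2 \times \R^2$ is a 4D centered Gaussian that is invariant under the diagonal action $(\vc{u},\vc{v}) \mapsto (R\vc{u}, R\vc{v})$ of $R \in \mathrm{SO}(2)$. Hence the angles $\Phi = \arg\vc{u}$ and $\Psi = \arg\vc{v}$ have joint density $f(\phi,\psi) = g(\phi - \psi)/(2\pi)$, where $g$ is the symmetric density of the angle difference $\Delta = \Phi - \Psi \in [-\pi, \pi]$. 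The same invariance under rotations by multiples of $2\pi/c$ gives $p_1 = c \cdot \Pr[\Phi, \Psi \in (-\pi/c, \pi/c)]$. Changing variables to $(\delta,\sigma) = (\phi-\psi,\phi+\psi)$ and integrating out $\sigma$ produces a linear profile $2\pi/c - |\delta|$; one integration by parts then yields
\begin{equation*}
p_1 = \frac{c}{\pi}\int_0^{2\pi/c}\bar G(t)\,dt, \qquad \bar G(t) := \int_0^t g(s)\,ds.
\end{equation*}

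The remaining task is to evaluate $\bar G(t)$ in closed form. Writing the joint density of $(\vc{u},\vc{v})$ in polar coordinates $(\|\vc{u}\|, \Phi, \|\vc{v}\|, \Psi)$ and integrating out the two radii (a routine Gaussian integral) produces an explicit expression for $g$ in terms of $\gamma(t) := \arccos(-\cos\theta\cos t)$; together with the chain-rule identity $\gamma'(t) = -\cos\theta\sin t/\sin\gamma(t)$, one verifies by direct differentiation that
\begin{equation*}
\bar G(t) = \bigl(t - \gamma(t)\gamma'(t)\bigr)/(2\pi).
\end{equation*}
Integrating this and using $\gamma\gamma' = \tfrac12(\gamma^2)'$ gives the telescoping identity
\begin{equation*}
\int_0^{2\pi/c}\bar G(t)\,dt = \frac{(2\pi/c)^2 + \gamma(0)^2 - \gamma(2\pi/c)^2}{4\pi},
\end{equation*}
with $\gamma(0) = \pi - \theta$. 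Substituting back and rearranging yields the claimed expression. The main obstacle is arriving at the closed form for $\bar G$: the polar-integral expression for $g$ itself is unwieldy and naive antidifferentiation is unpleasant, so the clean approach is to guess the form for $\bar G$ suggested by the squared $\arccos$ terms appearing in the target formula and verify it by differentiation; once $\bar G$ is in hand, the remaining manipulations are elementary.
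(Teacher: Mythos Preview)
Your argument is correct, and the final formula drops out as you claim; in particular the identity $\bar G'(t)=g(t)$ does hold once one writes $g$ in the form $g(\delta)=\tfrac{\sin^2\theta}{2\pi\sin^2\gamma}\bigl(1+\beta\gamma/\sin\gamma\bigr)$ with $\beta=\cos\theta\cos\delta$ and $\gamma=\arccos(-\beta)$, which is what the radial integration yields (this is the classical phase--difference density for correlated circular complex Gaussians). Your route, however, is genuinely different from the paper's. The paper never touches polar coordinates or the angle--difference density: it first specializes the general isogonal--code framework to the $c$-gon to obtain the $4\times4$ correlation matrix $\mat{\Sigma}=\bigl(\begin{smallmatrix}1&\cos\theta\\\cos\theta&1\end{smallmatrix}\bigr)\otimes\bigl(\begin{smallmatrix}1&-\cos(2\pi/c)\\-\cos(2\pi/c)&1\end{smallmatrix}\bigr)$, and then invokes a closed-form orthant probability for exactly this Kronecker structure due to Cheng (1968), obtained via Plackett's reduction and path integration. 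That makes the paper's proof essentially two citations plus a trigonometric simplification. Your approach is more self-contained and more geometric---the $\mathrm{SO}(2)$-invariance reduction to a single integral in the angle difference is very clean---but you pay for it with the radial integration and the guess-and-verify step for $\bar G$, which effectively rederives Cheng's identity for this special case from scratch. Either way the substance is the same $4$-dimensional Gaussian computation; the paper outsources it, you do it by hand.
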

As a direct corollary of the above theorem, we establish that \emph{triangular locality-sensitive hashing} achieves a superior asymptotic performance to hyperplane hashing, and achieves the lowest query exponents $\rho$ among all regular polygons.
\begin{corollary}[Triangular locality-sensitive hashing]
Consider the hash family where we first project onto a random plane using a random projection matrix $\mat{A} \sim \No(0,1)^{2 \times d}$, and then decode to the nearest corner of a fixed equilateral triangle centered at $(0,0)$. Then, for target angles $\theta \in (0, \frac{\pi}{2})$, this hash family achieves query exponents $\rho$ of the form:
\begin{align}
\rho = \ln\left[\frac{1}{3} + 3 \left(\frac{\pi - \theta}{2 \pi}\right)^2 - 3 \left(\frac{\arccos(\frac{1}{2} \cos \theta)}{2 \pi}\right)^2\right] \Big/ \ln \left[ \frac{1}{3}\right].
\end{align}
Among all such project--and--partition hash families based on regular $k$-gons, this family achieves the lowest values $\rho$ for any target angle $\theta \in (0, \frac{\pi}{2})$.
\end{corollary}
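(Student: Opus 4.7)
The formula for $\rho$ is obtained by specializing the preceding theorem to $c=3$: since $\cos(2\pi/3) = -\tfrac{1}{2}$, we have $-\cos\theta\cos(2\pi/3) = \tfrac{1}{2}\cos\theta$ and $p_2 = 1/3$, so plugging into $\rho = \ln p_1/\ln p_2$ yields the stated expression. This step is essentially bookkeeping. The substantive content is the optimality claim: writing $\rho(c,\theta) = \ln p_1(c,\theta)/\ln(1/c)$ with $p_1(c,\theta)$ as in \eqref{eq:2d}, I need to show $\rho(3,\theta) \leq \rho(c,\theta)$ for every integer $c \geq 2$ with $c \neq 3$ and every $\theta \in (0,\pi/2)$. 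I would split the argument into the cases $c = 2$ and $c \geq 4$.

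For $c = 2$, using $\cos\pi = -1$ and $\arccos(\cos\theta) = \theta$ collapses $p_1(2,\theta)$ to $1 - \theta/\pi$, which recovers Charikar's hyperplane collision probability, so the target inequality becomes $\log_{1/3} p_1(3,\theta) \leq \log_{1/2}(1-\theta/\pi)$. Both sides vanish at $\theta = 0$ and equal $1$ at $\theta = \pi/2$, so it suffices to study the sign of the difference on the open interval. The plan is to show that the difference $D(\theta) = \rho(2,\theta) - \rho(3,\theta)$ has a unique interior critical point which is a strict maximum, by differentiating once and reducing to an elementary inequality relating $\arccos(\tfrac{1}{2}\cos\theta)$ and $\theta$.

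For $c \geq 4$ I would extend $c$ to a real parameter on $[3,\infty)$ and establish $\partial_c \rho(c,\theta) > 0$ for every fixed $\theta \in (0,\pi/2)$. Combined with the limit $\rho(c,\theta) \to 1$ as $c \to \infty$, obtained from $\arccos(-\cos\theta\cos(2\pi/c)) \to \pi - \theta$ (whence $p_1 = 1/c + o(1/c)$ and $\ln p_1/\ln(1/c) \to 1$), this forces $\rho(c,\theta) \geq \rho(3,\theta)$ for all real $c \geq 3$, and in particular for every integer $c \geq 4$.

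The main obstacle is this monotonicity step. Differentiating $p_1$ in $c$ produces algebraic contributions from the explicit $c$ factors together with transcendental contributions of the form $\cos\theta\sin(2\pi/c)/\sqrt{1 - \cos^2\theta\cos^2(2\pi/c)}$, and the sign of $\partial_c \rho$ is not manifest from direct inspection. A natural tactic is the substitution $\phi = 2\pi/c \in (0, 2\pi/3]$, which recasts the problem as analyzing a single-variable function on a bounded interval, and then to exploit concavity/convexity properties of $\phi \mapsto \arccos(-\cos\theta\cos\phi)^2$ uniformly in $\theta$. Should a clean analytic argument prove elusive, one can fall back on verifying $c \in \{4,5,6\}$ by direct comparison of closed-form expressions and handling $c \geq 7$ via a quantitative version of the $c \to \infty$ asymptotic.
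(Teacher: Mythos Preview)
Your derivation of the formula is correct, and your overall strategy for the optimality claim is sound: reduce to the two cases $c=2$ and $c\geq 4$, handle the latter by monotonicity in $c$, and treat the former by direct comparison. The paper's argument follows the same monotonicity idea but exploits one observation you missed, which makes the $c=2$ case free rather than requiring a separate analysis.

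The key point is that $\rho(2,\theta) = \rho(4,\theta)$ identically. This is because the square partition is exactly two independent hyperplane partitions: $p_1(4,\theta) = p_1(2,\theta)^2$ and $p_2(4) = p_2(2)^2$, so the ratio of logarithms is unchanged. (You can also verify it directly from \eqref{eq:2d} with $\cos(2\pi/4)=0$.) Given this, once you establish that $\rho(c,\theta)$ is strictly increasing in $c$ for $c \geq 3$ --- which is exactly your monotonicity step for the $c\geq 4$ case --- you immediately get $\rho(3,\theta) < \rho(4,\theta) = \rho(2,\theta)$, and your entire separate treatment of $c=2$ (the critical-point analysis of $D(\theta)$) becomes unnecessary. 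The paper in fact asserts slightly more, namely that $\partial_c \rho$ has a unique zero $c_0(\theta)\in(2,3)$ and is positive thereafter, but this is stated ``by inspection'' without a detailed derivation; so the analytic burden of the monotonicity step is left at roughly the same level of rigor in both approaches. Your fallback plan (handle small $c$ explicitly and large $c$ via the asymptotic $p_1 = 1/c + o(1/c)$) is a reasonable hedge, but the $\rho(2)=\rho(4)$ identity is the clean structural shortcut you should use.
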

Note again that the above statement of $\rho$ is exact, and does not hide any order terms in $d$ or $n$ -- in fact, the collision probabilities do not depend on $d$ at all, due to our choice of $\mat{A}$ being Gaussian. Further note that the $2$-gon in the plane corresponds to the one-dimensional antipodal code of Charikar, and triangular hashing therefore strictly improves upon hyperplane hashing for any $\theta$ in terms of the query exponent $\rho$. For instance, we can find neighbors at angle at most $\frac{\pi}{3}$ in time $\tilde{O}(n^{\rho})$ with $\rho \approx 0.56996$, offering a concrete but minor improvement over the hyperplane hashing approach of Charikar~\cite{charikar02} with query exponent $\rho \approx 0.5850$. This is the best we can do with any two-dimensional isogonal spherical code, and we numerically predict that this code achieves the lowest values $\rho$ among all (not necessarily isogonal) two-dimensional spherical codes as well.

\subparagraph{Three and four dimensions.} For three-dimensional codes, through numerical integration of the resulting orthant probabilities we conclude that the \emph{tetrahedron} appears to minimize the query exponent $\rho$ out of all three-dimensional codes, beating the three-dimensional analogues of e.g.\ the cross-polytope and hypercube, as well as various sphere packings and other regular polytopes appearing in the literature, such as the Platonic and Archimedean solids. In four dimensions an interesting phenomenon occurs: the so-called \emph{$5$-cell} ($4$-simplex) and \emph{$16$-cell} ($4$-orthoplex) are optimal in different regimes, with the $5$-cell inducing a more coarse-grained partition of the space and achieving a better performance when the nearest neighbor lies relatively far away from the data set, and the $16$-cell inducing a more fine-grained partition, and working better when the nearest neighbor lies relatively close to the target vector. This crossover effect is also visualized in Figure~\ref{fig:comp}, and it strengthens our intuition that as the dimensionality goes up, or as the nearest neighbor lies closer to the query, more fine-grained partitions are necessary to obtain the best performance. An overview of some of the exponents $\rho$ for various low-dimensional spherical codes, for different target angles $\theta \in \frac{\pi}{12} \{1, 2, 3, 4, 5\}$, is given in Table~\ref{tab:overview}. The best exponents $\rho$ are highlighted in bold.

\subparagraph{Five and more dimensions.} For higher-dimensional spherical codes, we obtain further improvements in the query exponents $\rho$ through the use of suitable spherical codes, as shown in Table~\ref{tab:overview} and Figures~\ref{fig:comp}--\ref{fig:prac}. For dimensions $5$ and $6$, the corresponding orthoplices achieve a better performance than the simplices, and the exotic polytopes $1_{21}$ and $2_{21}$, related to the root lattice $E_6$, seem useful in the context of nearest neighbor searching as well. 

We further study the performance of the following five non-trivial infinite families of spherical codes. Color codes below correspond to the same colors used in Table~\ref{tab:overview} and Figures~\ref{fig:comp}--\ref{fig:prac} to differentiate these families of codes.
\begin{itemize}
\item The \textbf{simplices} {\color{red}$S_k$} on $k + 1$ vertices;
\item The \textbf{orthoplices} {\color{green!50!black}$O_k$} on $2k$ vertices (also known as cross-polytopes~\cite{terasawa09, andoni15cp});
\item The \textbf{hypercubes} {\color{blue}$C_k$} on $2^k$ vertices (as studied in~\cite{laarhoven17hypercube});
\item The \textbf{expanded simplices} {\color{orange}$A_k$} on $k(k + 1)$ vertices;
\item The \textbf{rectified orthoplices} {\color{purple}$D_k$} on $2k (k - 1)$ vertices.
\end{itemize}
The latter two families are connected to the root lattices $A_k$ and $D_k$, while the first three are related to the lattice $\mathbb{Z}^k$. We conjecture that, should other nice families of dense lattices be found (e.g.\ the recent \cite{vladut19}), these may give rise to suitable spherical codes in nearest neighbor applications as well. For each of the above five families we give closed-form expressions on the correlation matrices $\mat{\Sigma}$, but the resulting orthant probabilities that need to be evaluated for computing $\rho$ do not appear to admit simple closed-form expressions. Apart from the family of hypercubes, these are all asymptotically optimal (with $\rho \to (1 - \cos \theta) / (1 + \cos \theta)$ as $k \to \infty$), although the convergence to the limit may differ for each family.

\subsection{Lower bounds via spherical caps}

Although this work tries to be exhaustive in covering as many (families of) spherical codes as possible, better spherical codes may exist, achieving even lower query exponents $\rho$. As these codes may be hard to find, and as it may be difficult to rule out the existence of other, better spherical codes, the next best thing one might hope for is a somewhat tight lower bound on the performance of any $k$-dimensional spherical code in our framework, which hopefully comes close to the performance of the spherical codes we have considered in this survey.

As has been established in several previous works on nearest neighbor searching on the sphere~\cite{andoni06, andoni14, becker16lsf, laarhoven15nns, andoni17, christiani17}, ideally we would like the hash regions induced by the partitions to take the shape of a spherical cap. Such a shape minimizes the angular radius, given that the region has a fixed volume, and in a sense it is the most natural and smoothest shape that a region on a sphere can take. So in a utopian world, one might hope that a spherical code partitions the sphere into $c$ regions, and each region corresponds exactly to a spherical cap of volume $\vol(\cS^{k-1}) / c$. Clearly such spherical codes do not exist for $k > 2$ and $c > 2$, but such an extremal example does give us an indication on the limits of what might be achievable in dimension $k$, and how the optimal $\rho$ decreases with $k$. Note that random spherical codes might approach this utopian setting in high dimensions.

Following the above reasoning, and using a classic result of Baernstein--Taylor~\cite{baernstein76}, we state a formal lower bound on the performance parameter $\rho$ of any $k$-dimensional spherical code of size $c$, and by minimizing over $c$ for given $k$, one obtains a lower bound for any spherical code living in $k$ dimensions. Here $I_x(a,b)$ denotes the regularized incomplete beta function, which comes from computing the volume of a sphere in $k$ dimensions, while $\|\cdot\|$ denotes the Euclidean norm.
\begin{theorem}[Spherical cap lower bounds] \label{thm:lb}
Let $\cC \subset \cS^{k-1}$ be a spherical code, and let $\cH$ be the associated project--and--partition hash family. Then the parameter $\rho$ for $\cC$, for target angle $\theta \in (0, \frac{\pi}{2})$, must satisfy:
\begin{align}
\rho(\theta) \geq \rho_k(\theta) := \min_{c \geq 2} \rho_k(c, \theta),
\end{align}
where, with $\alpha_k(c)$ denoting the solution $\alpha$ to $\frac{1}{2} \cdot I_{1 - \alpha^2}(\frac{k-1}{2}, \frac{1}{2}) = \frac{1}{c}$, $\rho_k(c, \theta)$ is given by:
\begin{align*}
\rho_k(c, \theta) := \log\left\{\Pr_{\vc{x}, \vc{y} \sim \No(0,1)^k} \left(\frac{x_1}{\|\vc{x}\|} \geq \alpha_k(c), \frac{x_1 \cos \theta + y_1 \sin \theta}{\|\vc{x} \cos \theta + \vc{y} \sin \theta\|} \geq \alpha_k(c) \right)\right\} \bigg/ \log\left(\frac{1}{c}\right).
\end{align*}
\end{theorem}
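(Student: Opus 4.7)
I would prove the bound by comparing $p_1$ and $p_2$ separately against the quantities corresponding to an idealized partition of $\cS^{k-1}$ into $c = |\cC|$ equal-volume spherical caps, then taking the logarithmic ratio.

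For $p_2$, the earlier ``Spherical code locality-sensitive hashing'' theorem expresses $p_2 = \sum_{\vc{c} \in \cC} v_{\vc{c}}^2$, where $v_{\vc{c}}$ denotes the relative volume of the Voronoi cell of $\vc{c}$. Since $\sum_{\vc{c}} v_{\vc{c}} = 1$, Cauchy--Schwarz immediately gives $p_2 \geq 1/c$, with equality iff all cells have equal volume.

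For $p_1$, each Voronoi cell $V_{\vc{c}}$ contributes a two-point integral on $\cS^{k-1}$ against a kernel that depends only on the geodesic distance between the normalized projected pair $(\vc{u}, \vc{v})$ and is non-increasing in that distance. The classical spherical-rearrangement inequality of Baernstein--Taylor then asserts that such functionals are maximized, under a volume constraint, by the spherical cap; applied cell-by-cell, this replaces each $V_{\vc{c}}$ by a spherical cap of volume $v_{\vc{c}}$ and only increases the corresponding contribution to $p_1$. A subsequent equalization step on the volumes $\{v_{\vc{c}}\}$ constrained by $\sum_{\vc{c}} v_{\vc{c}} = 1$ --- using concavity, or equivalently a Schur-majorization argument, of the single-cap two-point probability $g(v) := \Pr[\vc{u}, \vc{v} \in B_v]$ in the cap volume $v$ --- reduces to the symmetric case of common volume $1/c$, giving an upper bound on $p_1$ in terms of the single-cap probability $f$ appearing in the theorem via the height $\alpha_k(c)$. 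Combining (both $\log p_1$ and $\log p_2$ are negative, so signs must be tracked carefully when passing to the ratio), one obtains $\rho = \log p_1/\log p_2 \geq \rho_k(|\cC|, \theta)$, and minimizing over $c \geq 2$ on the right yields the code-independent bound $\rho_k(\theta)$ as stated.

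The main technical hurdle will be the rigorous application of Baernstein--Taylor to the specific two-point kernel arising from normalizing Gaussian-projected directions, and the subsequent volume-equalization step. Verifying the radial-monotonicity hypothesis is geometrically natural but requires explicit computation with the joint law of $(\vc{u}, \vc{v})$ after normalization; and the equalization step hinges on concavity of $g(v)$, which is intuitive (smaller, more concentrated caps carry disproportionately less two-point mass under the angular-separation kernel) but will likely need a direct rearrangement-type argument, rather than a naive Jensen step, to justify in full generality.
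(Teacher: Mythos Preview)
Your broad two-step structure (Baernstein--Taylor to pass to caps, then equalize volumes) matches the paper exactly, and your setup of the rearrangement step --- recognizing that the joint law of the normalized projected pair has an $O(k)$-invariant density $K(\vc{u}\cdot\vc{w})$ and that Baernstein--Taylor applies once $K$ is shown non-decreasing --- is in fact cleaner than the paper's own instantiation with the indicator $h=\mathds{1}\{\vc{x}\cdot\vc{y}\geq\cos\theta\}$.

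The genuine gap is the equalization step. The single-cap probability $g(v)$ is \emph{not} concave: at $\theta=\pi/2$ the projected directions become independent and $g(v)=v^2$; for general $\theta$ one expects $g$ to interpolate between $v$ (perfect correlation) and $v^2$ and remain convex. Your own parenthetical --- that smaller caps carry \emph{disproportionately less} two-point mass --- is precisely the assertion that $g(v)/v$ is increasing, i.e.\ convexity of $g$ through the origin, not concavity. So Jensen (and equally Schur-concavity, which for the symmetric sum $\sum_i g(v_i)$ reduces to concavity of $g$) yields $\sum_i g(v_i)\geq c\,g(1/c)$, the wrong direction; the separate bounds $p_1\leq c\,g(1/c)$ and $p_2\geq 1/c$ cannot both be obtained, and hence cannot be combined as you propose.

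The paper's route around this is to avoid bounding $p_1$ and $p_2$ separately. Since the rearrangement preserves cell volumes, $p_2$ is \emph{unchanged} after Part~1, and the paper then argues directly on $\rho$ for the resulting cap-code, treating it as a weighted combination of the per-cap exponents $\log g(v_i)/\log v_i^2$ and hence bounded below by the minimum such exponent over all cap sizes --- which is where the outer $\min_{c\geq 2}$ enters, rather than fixing $c=|\cC|$. That ``weighted average'' step is itself only heuristic in the paper, but it is the maneuver that circumvents the convexity obstruction your separate-bounding approach runs into.
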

The above minimization over $c \geq 2$ concerns the possible code sizes, and $\rho_k(c,\theta)$ describes the parameter $\rho$ one would obtain when indeed, the code consisted of $c$ equivalent regions of equal volume, and each region was shaped like a spherical cap. Equivalently, one could state that any spherical code of size exactly $c$ must satisfy $\rho(\theta) \geq \rho_k(c, \theta)$.

Numerical evaluation of these expressions $\rho_k(c,\theta)$, and the resulting minimization over $c$, leads to the values in Table~\ref{tab:overview} in the rows indicated by \textit{spherical caps}. The superscripts denote the values $c$ that numerically solve the minimization problems. These results in low dimensions suggest that, especially for small angles, the optimal code size should increase superlinearly with the dimension. This inspires the following conjecture, stating that the use of orthoplices is likely not optimal in higher dimensions.
\begin{conjecture}[Orthoplices are suboptimal for large $k$] For arbitrary $\theta \in (0, \frac{\pi}{2})$, there exists a dimension $k_0 \in \mathbb{N}$ such that, for all dimensions $k \geq k_0$, there exist spherical codes $\cC \subset \cS^{k-1}$ whose query exponents $\rho$ in the project--and--partition framework are smaller than the exponents $\rho$ of the $k$-orthoplex.
\end{conjecture}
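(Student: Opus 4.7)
The plan is to exploit the gap that the spherical-cap lower bound of Theorem~\ref{thm:lb} opens up between the best achievable $\rho$ in dimension $k$ and the orthoplex's $\rho$, and then to close this gap with a probabilistic (or suitable deterministic lattice-based) construction. Throughout I fix $\theta \in (0, \tfrac{\pi}{2})$ and write $\rho^\star(\theta) = (1 - \cos \theta)/(1 + \cos \theta)$ for the common asymptotic limit shared by all the asymptotically optimal families identified earlier. First, I would pin down the rate at which the orthoplex converges to this limit: using the closed-form correlation matrices for $O_k$ derived earlier in the paper, a Taylor or saddle-point analysis of the orthant probability defining $p_1$ should yield
\begin{align*}
\rho_{O_k}(\theta) = \rho^\star(\theta) + \epsilon_k(\theta),
\end{align*}
with $\epsilon_k(\theta) > 0$ bounded explicitly from below by a term decaying only in $1/\log k$, driven by the fact that $O_k$ has only $c = 2k$ code words and therefore $p_2 = 1/(2k)$.

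Next I would analyse the lower-bound family $\rho_k(c, \theta)$ from Theorem~\ref{thm:lb} with $c$ treated as a free parameter. Allowing $c = c(k)$ to grow with $k$ superlinearly, the cap half-angle $\arccos \alpha_k(c)$ can be tuned so that a Gaussian concentration estimate on the joint orthant probability produces a minimiser $c^\star(k)$ with $c^\star(k)/k \to \infty$, together with a bound $\rho_k(c^\star(k), \theta) \le \rho^\star(\theta) + \delta_k(\theta)$ where $\delta_k(\theta) = o(\epsilon_k(\theta))$; the numerical minimisations reported under the \emph{spherical caps} rows of Table~\ref{tab:overview}, whose optimal $c$ already grows faster than $2k$ in the visible regime, are fully consistent with this claim. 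Realising this lower bound by a genuine code is then the final step: I would take a uniformly random spherical code $\cC_k \subset \cS^{k-1}$ of size $c(k) = \lceil c^\star(k) \rceil$ and, via first- and second-moment computations over the Gaussian projection matrix $\mat{A}$, show that both the volumes of the Voronoi cells of $\cC_k$ and its collision probability $p_1$ concentrate around the values predicted by Theorem~\ref{thm:lb}, yielding $\rho_{\cC_k}(\theta) \le \rho_k(c(k), \theta) + o(\epsilon_k(\theta))$ with positive probability, and hence $\rho_{\cC_k}(\theta) < \rho_{O_k}(\theta)$ for all $k \ge k_0$.

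The hard part is this final step: the orthant probability defining $p_1$ depends on the entire Voronoi geometry of $\cC_k$ and is not a simple pairwise average, so showing that a random code's exponent concentrates near the spherical-cap ideal requires controlling boundary effects coming from near-coincident code words and from degenerately elongated cells. If this probabilistic argument proves too delicate, a natural fallback is to work with a deterministic family that already has superlinearly many vertices, the rectified orthoplex $D_k$ with $2k(k-1)$ code words being the most appealing candidate because its correlation matrix is given in closed form in the paper: one would then compare its orthant probability directly against that of $O_k$, using that the finer partition of $D_k$ drives $p_2$ much closer to the asymptotic regime without inflating $p_1$ by a comparable factor, which the numerical results of Table~\ref{tab:overview} suggest is indeed the case for moderate~$k$.
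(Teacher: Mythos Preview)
The statement you are trying to prove is explicitly labeled as a \emph{conjecture} in the paper, both in the introduction and again as Conjecture~\ref{conj:ortho} in Section~\ref{sec:lb}. The paper offers no proof; it gives only heuristic motivation via the spherical-cap lower bounds of Theorem~\ref{thm:lb} and the numerical trend in Table~\ref{tab:overview} that the optimal cap count $c$ grows faster than $2k$. So there is no ``paper's own proof'' to compare against: you are proposing an attack on an open problem.

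That said, your plan has real gaps. The central step --- showing that a uniformly random code of size $c^\star(k)$ has $\rho$ concentrating near the utopian cap value $\rho_k(c^\star(k),\theta)$ --- is precisely what the paper itself only \emph{informally conjectures} (``We informally conjecture that in high dimensions, and for sufficiently large code sizes $c$, random spherical codes may be close to optimal''). Your second-moment sketch does not address the main obstacle you yourself identify: $p_1$ depends on the full Voronoi geometry, not on pairwise statistics, and the boundary contributions from nearly-colliding code words are exactly the terms that could spoil concentration at the $o(\epsilon_k)$ scale you need. Your fallback to $D_k$ is worse: Table~\ref{tab:overview} shows $D_k$ having \emph{larger} $\rho$ than $O_k$ in every dimension listed (e.g.\ at $\theta = \pi/3$, $\rho_{O_6} \approx 0.5361$ versus $\rho_{D_6} \approx 0.5661$), so the numerical evidence runs against, not for, that route. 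The paper recommends $D_k$ only for the practical trade-off of Figure~\ref{fig:to}, not because its exponent beats the orthoplex.
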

Actually finding such spherical codes, or finding families of spherical codes that outperform orthoplices may again be closely related to the problem of finding nice families of dense lattices with efficient decoding algorithms. We informally conjecture that in high dimensions, and for sufficiently large code sizes $c$, random spherical codes may be close to optimal. If we care only about minimizing $\rho$, then a further study might focus on (1) getting a better grip of the optimal scaling of $c = c(k)$ with $k$, and (2) estimating the asymptotic performance of using random spherical codes of size $c(k)$, for large $k$. We predict this will lead to better values $\rho$ than those obtained with cross-polytope hashing.

\subsection{Selecting spherical codes in practice}

Although the exponent $\rho$, and therefore the probabilities $p_1$ and $p_2$, directly imply the main performance parameters to assess the asymptotic performance of a locality-sensitive hash family, in practice we are always dealing with concrete, non-asymptotic values $n$, $d$, and $\theta$ -- if the convergence to the optimal asymptotic scaling is slow, or if the hash functions are too expensive to evaluate in practice, then hash families with lower query exponents $\rho$ may actually be less practical for small, concrete values of $n$ and $d$ than fast hash families with larger $\rho$. For example, the hash family from~\cite{andoni06} may be ``optimal'', but appears to be only of theoretical interest. In practice one needs to find a balance between decreasing $\rho$ and using hash functions which are fast to evaluate.

As a more concrete example, consider how hyperplane hashing \cite{charikar02} allows us to partition a sphere in $2^k$ regions with a single random projection matrix $\mat{A} \in \mathbb{R}^{k \times d}$ (or equivalently $k$ matrices $\mat{A}_1, \dots, \mat{A}_k \in \mathbb{R}^{1 \times d}$ merged into one large matrix). For cross-polytope hashing \cite{andoni15cp} a $k$-dimensional projection would only divide the $k$-dimensional sphere into $2k$ regions. As the total required number of hash buckets in the data structure is often roughly the same, regardless of the chosen hash family,\footnote{While a hash family with lower query exponents $\rho$ does require a smaller number of hash buckets per hash table, this effect is marginal compared to the increase in the number of projections/rotations required by e.g.\ cross-polytope hashing compared to hyperplane hashing.} this means that if we wish to divide the sphere into $2^k$ hash regions with cross-polytope hashing, we would need $m = k / \log_2(2k)$ independent random projection matrices $\mat{A}_1, \dots, \mat{A}_m \in \mathbb{R}^{k \times d}$ to hash a vector to one of $2^k$ buckets. So even though the resulting partitions for cross-polytope hashing generate smaller exponents $\rho$, in practice this improvement may not offset the additional costs of computing the projections/rotations, which is almost a linear factor $O(d)$ more than for hyperplane hashing. So especially for data sets of small/moderate sizes, hyperplane hashing may be more practical than cross-polytope hashing, as also observed in e.g.\ \cite{laarhoven15crypto, becker16cp, mariano15, mariano17}.

We explicitly quantify the trade-off between the complexity of the hash functions and the asymptotic query exponents $\rho$ in Figure~\ref{fig:to}, where on the vertical axis we plotted the query exponents $\rho$ as computed in this paper, and on the horizontal axis we plotted the number of bits extracted per row of a projection matrix; for hyperplane hashing we extract $1$ bit per projection, while e.g.\ for cross-polytope hashing in dimension $k$ we only extract $\log(2k)/k$ bits per projection. Depending on the application, it may be desirable to choose hash families with slightly larger values $\rho$, if this means the cost of the hashing becomes less. Figure~\ref{fig:to} makes a case for using hashes induced by the root lattices $D_k$, as well as those induced by exotic polytopes such as the $2_{21}$-polytope; compared to e.g. the $5$-orthoplex, the $D_{10}$ lattice achieves lower values $\rho$ and extracts more bits per projection, while the $2_{21}$-polytope further improves upon $D_{10}$ by extracting more hash bits per projection.

To further illustrate how these trade-offs might look in a real-world setting, Figure~\ref{fig:prac} describes a case study for average-case nearest neighbor searching with different sizes $n$ for the data sets. For small data sets, hyperplane hashing is quite competitive, and the best other spherical codes are those induced by the $D_k$ lattices (for small $k$), and spherical codes generated by the polytopes $1_{21}, 1_{31}$ and $2_{21}$. This case study matches the recommendations from Figure~\ref{fig:to}. For large $n$, the role of $\rho$ becomes more prominent, and higher-dimensional orthoplices and $D_k$ lattices attain the best performance. We further describe how one might choose the best codes in practice.

\subsection{Summary and open problems}

With the project--and--partition framework outlined in this paper, we can now easily formalize and analyze the performance of any spherical code in this framework, and analyze the collision probabilities and query exponents either analytically (by attempting to simplify the corresponding multivariate orthant probabilities) or numerically (by evaluating these multivariate integrals with mathematical software, such as the mvtnorm package~\cite{genz19}). We observed that already in two dimensions, it is possible to improve upon hyperplane hashing with a smaller exponent $\rho$ by using triangular partitions, and we described closed-form formulas for the resulting parameters $\rho$. In three and four dimensions the improvements in the exponent $\rho$ become more significant, with e.g.\ the tetrahedron, the $5$-cell, and the $16$-cell achieving the best exponents $\rho$ in these dimensions. For higher dimensions the simplices and orthoplices seem to achieve the best performance in theory, thus making a strong case for the use of these partitions as advertised in~\cite{terasawa07, terasawa09, andoni15cp, kennedy17}, and as observed in nearest neighbor benchmarks~\cite{annbench1, annbench2, falconn}. The family of $D_k$ lattices and the generalized $m$-max hash functions however seem to offer better practical trade-offs between the asymptotic performance in terms of $\rho$ and the costs of computing hashes, as discussed later on.

\subparagraph{Finding better spherical codes.} An important open problem, both for our project--and--partition framework and for arbitrary hash families for the angular distance, is to find (if they exist) other nice families of spherical codes which achieve an even better performance than the family of orthoplices. Numerics suggest that as $k$ increases, the optimal code size $c$ should increase faster than the linear scaling of $c = 2k$ offered by orthoplices. Finding the optimal scaling of $c$ as a function of $k$, and finding nicer spherical codes closely matching this appropriate scaling of $c$ is left for future work.

\subparagraph{Faster pseudorandom projections.} To make e.g.\ hyperplane and cross-polytope hashing more practical, various ideas were proposed in e.g.~\cite{achlioptas01, andoni15cp} to work with sparse projections and pseudorandom rotation matrices. Similar ideas can be used for any hash family, to reduce the cost of multiplication by a random Gaussian matrix $\mat{A}$. Concretely, one can often replace it with a sparse, ``sufficiently random'' pseudorandom matrix $\mat{A}$ while still achieving good query exponents $\rho$ in practice. Depending on how these pseudorandom projections are instantiated, this may lead to a different practical evaluation than what we described in our practical case analyses. In particular, as this reduces the relative cost of the hashing, this will further favor schemes which reduce $\rho$ at the cost of increasing the (naive) complexity of hashing.

\subparagraph{Using orthogonal projection matrices.} In our framework, we focused on projecting down onto a low-dimensional subspace using Gaussian matrices $\mat{A} \sim \No(0,1)^{k \times d}$. For $k \ll d$, using Gaussian matrices or orthogonal matrices (i.e.\ $\mat{A} \mat{A}^T = I_k$) is essentially equivalent~\cite{diaconis87, jiang06}, but for large $k$ it may be beneficial to use proper rotations induced by orthogonal matrices. For instance, recent work~\cite{laarhoven17hypercube} showed that for hypercube hashing in the ambient space, there is a clear gap between using random or orthogonal matrices; using orthogonal matrices generally works better than using random Gaussian projection matrices. 

The main issue when analyzing the same framework with orthogonal matrices is that the dependence on $d$ then does not disappear; the distribution of $\mat{A}\vc{x}$ then relies on both $k$ and $d$, rather than only on $\vc{k}$. Our framework is in a sense dimensionless, as the performance can be computed without knowledge of $d$, and for $k = 2$ this even allowed us to obtain explicit analytic expressions for the collision probabilities for arbitrary $k$-gons. Using orthogonal matrices, the collision probabilities will likely become complicated functions of $d$, and comparing different spherical codes may then become a more difficult task. We leave it as an open problem to adjust the above framework to the setting where $\mat{A}$ is orthogonal, and to see how much can be gained by using orthogonal rather than Gaussian matrices\footnote{Note that one obtains different asymptotics when analyzing the hypercube for Gaussian~\cite{charikar02} and orthogonal projection/rotation matrices~\cite{laarhoven17hypercube}. For constant $k = O(1)$ and large $d$ both approaches are asymptotically equivalent, but for large $k = O(d)$ we expect orthogonal matrices to give better results.}.




\subsection*{Outline}

The remainder of the paper is structured as follows. In Section~\ref{sec:preliminaries} we first describe preliminary results and notation. Section~\ref{sec:framework} defines the model for the nearest neighbor problem, and the framework for analyzing the project--and--partition approach for arbitrary spherical codes. Sections~\ref{sec:1d}--\ref{sec:4d} continue with applying this framework to spherical codes in dimensions 1 through 4, and Section~\ref{sec:kd} describes results for spherical codes in higher dimensions. Section~\ref{sec:lb} continues with (conjectured) lower bounds on the parameter $\rho$ in this framework. Section~\ref{sec:practice} concludes with a practical analysis, weighing the costs of hashing against the exponent $\rho$.


\begin{table}[p]
\footnotesize
\centering
\begin{tabular}{p{0.1cm}p{3.8cm}p{1.1cm}p{1.1cm}p{1.1cm}p{1.1cm}p{1.1cm}p{1.1cm}}
\toprule[2pt]
$k$ & spherical code & $c$ & $\rho(\frac{\pi}{12})$ & $\rho(\frac{\pi}{6})$ \ \ & $\rho(\frac{\pi}{4})$ \ \ & $\rho(\frac{\pi}{3})$ & $\rho(\frac{5\pi}{12})$ \\
\midrule[2pt]
$1$ & \emph{spherical caps} 	& 		& $\emph{0.1255}^{(2)}$	& $\emph{0.2630}^{(2)}$ 	& $\emph{0.4150}^{(2)}$ 	& $\emph{0.5850}^{(2)}$ 	& $\emph{0.7776}^{(2)}$ \\
\midsep
 & hyperplane & $2$ 	& $\mathbf{0.1255}$	& $\mathbf{0.2630}$	& $\mathbf{0.4150}$	& $\mathbf{0.5850}$	& $\mathbf{0.7776}$ \\
\midrule
$2$ & \emph{spherical caps} 		& 		& $\emph{0.1194}^{(3)}$	& $\emph{0.2518}^{(3)}$	& $\emph{0.4005}^{(3)}$ 	& $\emph{0.5700}^{(3)}$ 	& $\emph{0.7666}^{(3)}$ \\
\midsep
 & triangle ($\color{red}S_2$) 	& $3$    & $\mathbf{0.1194}$	& $\mathbf{0.2518}$	& $\mathbf{0.4005}$	& $\mathbf{0.5700}$	& $\mathbf{0.7666}$ \\
 & square ($\color{green!50!black}O_2$, $\color{blue}C_2$, $\color{purple}D_2$) & $4$  & $0.1255$	& $0.2630$	& $0.4150$	& $0.5850$	& $0.7776$ \\
 & pentagon 			& $5$   & $0.1343$	& $0.2788$	& $0.4346$	& $0.6040$	& $0.7905$ \\
 & hexagon ($\color{orange}A_2$) 		& $6$   & $0.1438$	& $0.2954$	& $0.4544$	& $0.6222$	& $0.8022$ \\
\midrule
$3$ & \emph{spherical caps} 	& 				& $\emph{0.1117}^{(5)}$ & $\emph{0.2389}^{(4)}$ & $\emph{0.3846}^{(4)}$ & $\emph{0.5548}^{(4)}$ & $\emph{0.7561}^{(4)}$ \\
\midsep
 & tetrahedron ($\color{red}S_3$) 	& $4$  	& $\mathbf{0.1155}$	& $\mathbf{0.2445}$	& $\mathbf{0.3910}$	& $\mathbf{0.5600}$	& $\mathbf{0.7592}$ \\
 & sphere packing 				& $5$  	& $0.1170$	& $0.2481$	& $0.3965$	& $0.5664$	& $0.7644$ \\
 & octahedron ($\color{green!50!black}O_3$) 	& $6$  	& $0.1159$	& $0.2465$	& $0.3952$	& $0.5661$	& $0.7649$ \\
 & sphere packing 				& $7$ 	& $0.1207$	& $0.2554$	& $0.4065$	& $0.5772$	& $0.7725$ \\
 & cube ($\color{blue}C_3$) 		& $8$ 	& $0.1255$	& $0.2630$	& $0.4150$	& $0.5850$	& $0.7776$ \\
 & sphere packing 				& $9$ 	& $0.1217$	& $0.2591$	& $0.4129$	& $0.5850$	& $0.7786$ \\
 & icosahedron 			& $12$  	& $0.1255$	& $0.2678$	& $0.4255$	& $0.5983$	& $0.7883$ \\
 & cuboctahedron ($\color{orange}A_3$, $\color{purple}D_3$)& $12$ 	& $0.1294$	& $0.2728$	& $0.4301$	& $0.6017$	& $0.7900$ \\
 & dodecahedron 		& $20$ 	& $0.1509$	& $0.3077$	& $0.4692$	& $0.6360$	& $0.8112$ \\
\midrule
$4$ & \emph{spherical caps} 	& 		& $\emph{0.1050}^{(7)}$ & $\emph{0.2285}^{(6)}$ & $\emph{0.3730}^{(6)}$ & $\emph{0.5433}^{(5)}$ & $\emph{0.7466}^{(5)}$ \\
\midsep
 & $5$-cell ($\color{red}S_4$) 	& $5$  	& $0.1126$ 	& $0.2392$	& $0.3840$	& $\mathbf{0.5527}$	& $\mathbf{0.7537}$\\
 & sphere packing 				& $6$ 	& $0.1128$ 	& $0.2401$	& $0.3861$	& $0.5555$	& $0.7564$\\
 & sphere packing				& $7$ 	& $0.1120$ 	& $0.2392$	& $0.3852$	& $0.5553$	& $0.7567$\\
 & $16$-cell ($\color{green!50!black}O_4$) 	& $8$ 	& $\mathbf{0.1107}$ 	& $\mathbf{0.2368}$	& $\mathbf{0.3822}$	& $0.5528$	& $0.7553$\\
 & sphere packing 				& $10$ 	& $0.1136$ 	& $0.2426$	& $0.3909$	& $0.5623$	& $0.7622$\\
 & sphere packing 				& $13$ 	& $0.1133$ 	& $0.2439$	& $0.3939$	& $0.5666$	& $0.7663$\\
 & tesseract ($\color{blue}C_4$) 	& $16$  	& $0.1255$ 	& $0.2630$	& $0.4150$	& $0.5850$	& $0.7776$\\
 & runcinated $5$-cell ($\color{orange}A_4$) & $20$ & $0.1216$ & $0.2586$ & $0.4128$ & $0.5855$ & $0.7795$ \\
 & octacube ($\color{purple}D_4$)			& $24$ 	& $0.1202$ 	& $0.2577$	& $0.4140$	& $0.5877$	& $0.7823$ \\
\midrule
$5$ & \emph{spherical caps} 	& 		& $\emph{0.0997}^{(13)}$ & $\emph{0.2203}^{(10)}$ & $\emph{0.3630}^{(8)}$ & $\emph{0.5342}^{(7)}$ & $\emph{0.7415}^{(6)}$ \\
\midsep
 & $5$-simplex ($\color{red}S_5$) 	& $6$ & $0.1105$	& $0.2354$	& $0.3785$	& $0.5469$	& $0.7493$  \\
 & $5$-orthoplex ($\color{green!50!black}O_5$)& $10$ 	& $\mathbf{0.1076}$	& $\mathbf{0.2299}$	& $\mathbf{0.3733}$	& $\mathbf{0.5433}$	& $\mathbf{0.7483}$  \\
 & $1_{21}$-polytope & $16$  & $0.1080$ & $0.2330$ & $0.3794$ & $0.5516$ & $0.7554$ \\
 & expanded $5$-simplex ($\color{orange}A_5$) & $30$  & $0.1167$ & $0.2494$ & $0.4007$ & $0.5735$ & $0.7713$ \\
 & $5$-hypercube ($\color{blue}C_5$) 	& $32$ 	& $0.1255$	& $0.2630$	& $0.4150$	& $0.5850$	& $0.7776$  \\
 & rectified $5$-orthoplex ($\color{purple}D_5$) & $40$  & $0.1139$ & $0.2471$ & $0.4009$ & $0.5757$ & $0.7739$ \\ 
\midrule
$6$ & \emph{spherical caps} 	& 		& $\emph{0.0955}^{(18)}$ 	& $\emph{0.2135}^{(15)}$ 	& $\emph{0.3552}^{(11)}$ 	& $\emph{0.5263}^{(9)}$ 	& $\emph{0.7357}^{(8)}$ \\
\midsep
 & $6$-simplex ($\color{red}S_6$) 	& $7$ 	& $0.1089$	& $0.2319$	& $0.3742$	& $0.5422$	& $0.7458$ \\
 & $6$-orthoplex ($\color{green!50!black}O_6$)& $12$ 	& $0.1065$	& $0.2260$	& $\mathbf{0.3670}$	& $\mathbf{0.5361}$ 	& $\mathbf{0.7431}$ \\
 & $2_{21}$-polytope & $27$ &  $\mathbf{0.1038}$ & $\mathbf{0.2258}$ & $0.3712$ & $0.5442$ & $0.7510$ \\ 
 & $1_{31}$-polytope & $32$ & $0.1062$ & $0.2314$ & $0.3788$ & $0.5520$ & $0.7564$ \\
 & expanded $6$-simplex ($\color{orange}A_6$) & $42$ & $0.1133$ & $0.2427$ & $0.3917$ & $0.5642$ & $0.7647$ \\
 & rectified $6$-orthoplex ($\color{purple}D_6$) & $60$  & $0.1107$ & $0.2404$ & $0.3915$ & $0.5661$ & $0.7673$ \\ 
 & hypercube ($\color{blue}C_6$) 	& $64$	& $0.1255$	& $0.2630$	& $0.4150$	& $0.5850$	& $0.7776$ \\
\midrule[2pt]
$d$ & \emph{lower bound} 		& 	 	& $\emph{0.0173}$	& $\emph{0.0718}$	& $\emph{0.1716}$	& $\emph{0.3333}$	& $\emph{0.5888}$     \\
 & simplex ($\color{red}S_d$) & $d+1$ 	& $\mathbf{0.0173}$	& $\mathbf{0.0718}$	& $\mathbf{0.1716}$	& $\mathbf{0.3333}$	& $\mathbf{0.5888}$     \\
 & orthoplex ($\color{green!50!black}O_d$) & $2d$ & $\mathbf{0.0173}$	& $\mathbf{0.0718}$	& $\mathbf{0.1716}$	& $\mathbf{0.3333}$	& $\mathbf{0.5888}$     \\
 & expanded simplex ($\color{orange}A_d$)	& $d(d+1)$ 	& $\mathbf{0.0173}$	& $\mathbf{0.0718}$	& $\mathbf{0.1716}$	& $\mathbf{0.3333}$	& $\mathbf{0.5888}$     \\
 & rectified orthoplex	($\color{purple}D_d$) & $2d(d-1)$ &	$\mathbf{0.0173}$	& $\mathbf{0.0718}$	& $\mathbf{0.1716}$	& $\mathbf{0.3333}$	& $\mathbf{0.5888}$     \\
 & hypercube ($\color{blue}C_d$; $\mat{A}$ orth.) & $2^d$ & $0.0799$	& $0.1800$	& $0.3151$	& $0.5201$	& $0.7686$     \\
 & hypercube ($\color{blue}C_d$; $\mat{A}$ Gaussian) & $2^d$ & $0.1255$	& $0.2630$	& $0.4150$	& $0.5850$	& $0.7776$     \\
 \bottomrule[2pt]
\end{tabular}
\vspace{0.1cm}
\caption{Parameters $\rho$ for various spherical codes. Sphere packings are from~\cite{sloanepackings}. Rows labeled ``spherical caps'' are lower bounds (Theorem~\ref{thm:lb}). Superscripts refer to the number of caps minimizing $\rho_{\theta}$. Bold values indicate the best values $\rho$ encountered up to this dimension. Results for $k = 1, 2, d$, were obtained analytically; for $k = 3, 4, 5, 6$ most results were obtained through numerical integration and Monte Carlo simulation. \label{tab:overview}}
\end{table}


\begin{figure}[p]
\includegraphics[width=13.5cm]{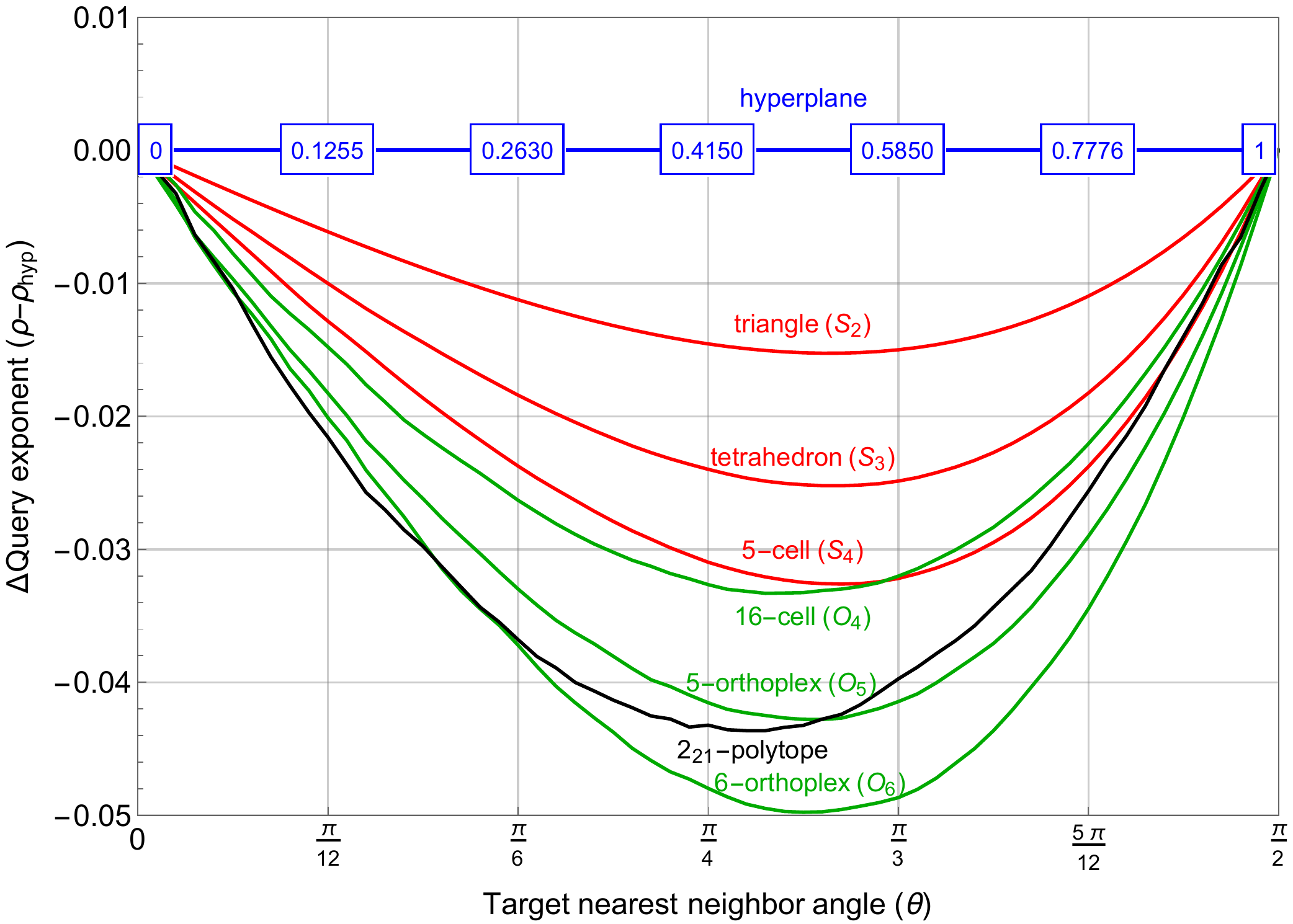}
\caption{Comparison of the improvements in the query exponent $\rho$ over hyperplane hashing~\cite{charikar02} with query exponents $\rho_{\text{hyp}}$. The horizontal blue line denotes the baseline hyperplane hashing approach, with $\rho - \rho_{\text{hyp}} \equiv 0$. Lower curves denote improvements to $\rho_{\text{hyp}}$ using various spherical codes. The measurements at the five vertical gridlines correspond to the values in Table~\ref{tab:overview}. For instance, at $\theta = \pi/6$ the triangle has query exponent approximately $0.011$ lower than hyperplane hashing (which has exponent $0.2630$), leading to $\rho \approx 0.2520$; from Table~\ref{tab:overview} we get the more precise estimate $\rho \approx 0.2518$; while the theory in the full version allows us to compute $\rho$ exactly through a closed-form expression. The polytopes in this figure are those achieving the lowest exponents $\rho$ in their respective dimensions at one of these grid lines, as highlighted in boldface in Table~\ref{tab:overview}.\label{fig:comp}}
\end{figure}


\begin{figure}[p]
\includegraphics[width = 6.8cm]{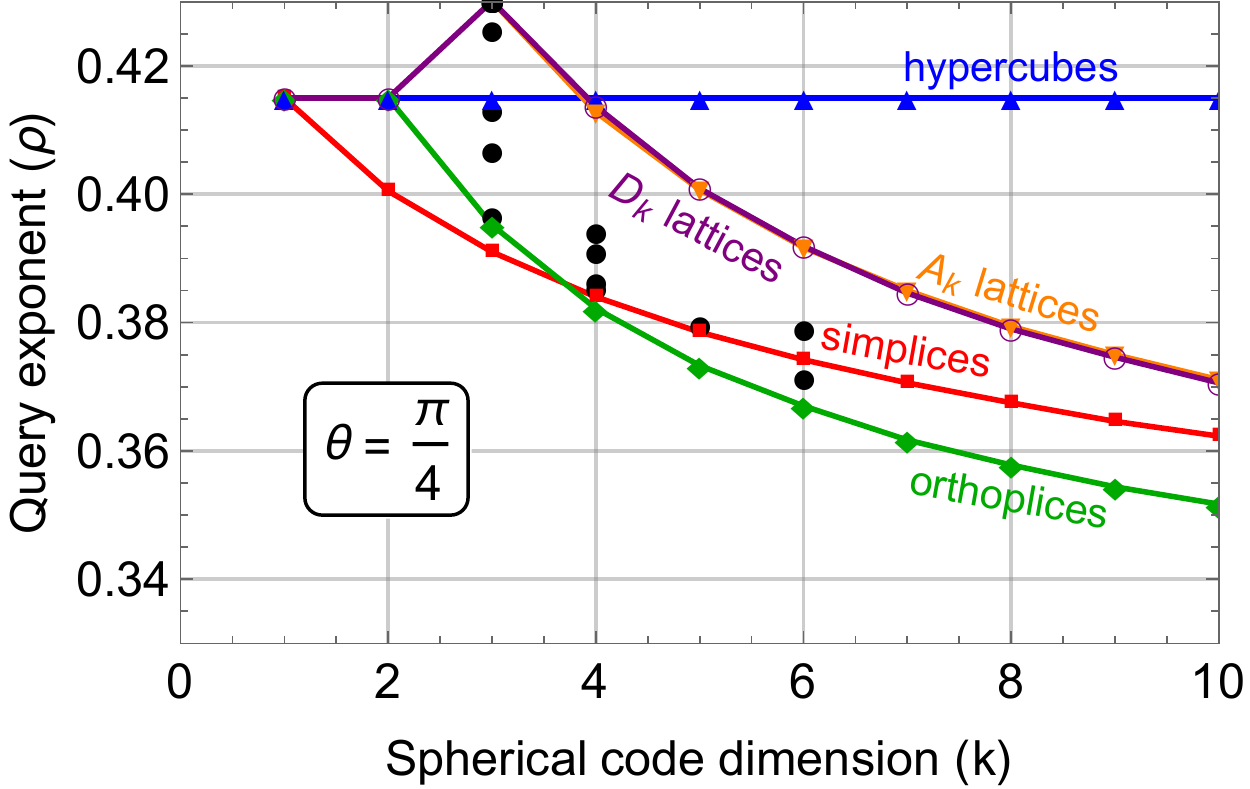} \ \includegraphics[width = 6.8cm]{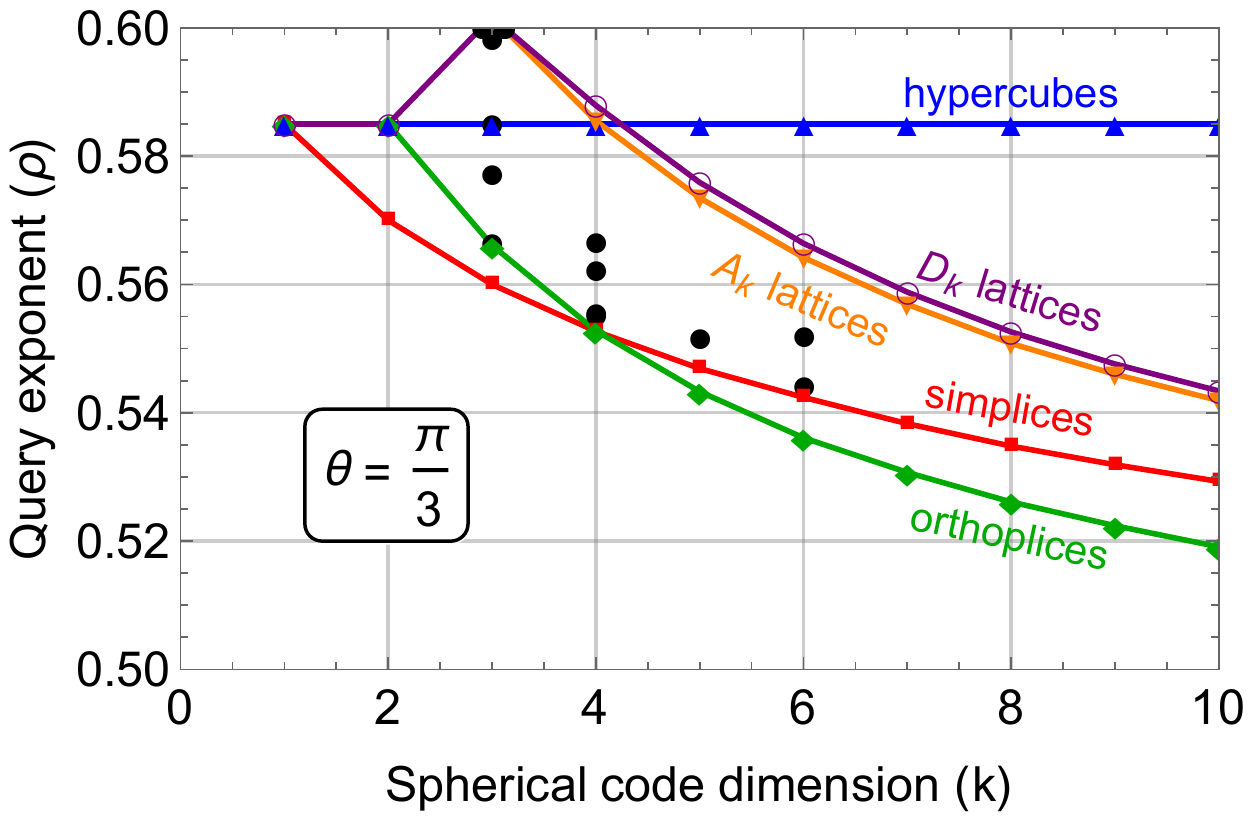}
\caption{Query exponents $\rho$ for project--and--partition hash families based on the spherical codes listed in Table~\ref{tab:overview}. Lower query exponents $\rho$ are generally better, and for these hash families the best exponents $\rho$ are achieved by the simplex for $k \leq 3$ and by the orthoplex for $k \geq 5$. For $k = 4$ the simplex and orthoplex are close, and which of the two achieves a better performance depends on the target nearest neighbor angle $\theta$. \label{fig:high1}}
\end{figure}


\begin{figure}
\includegraphics[width=13.8cm]{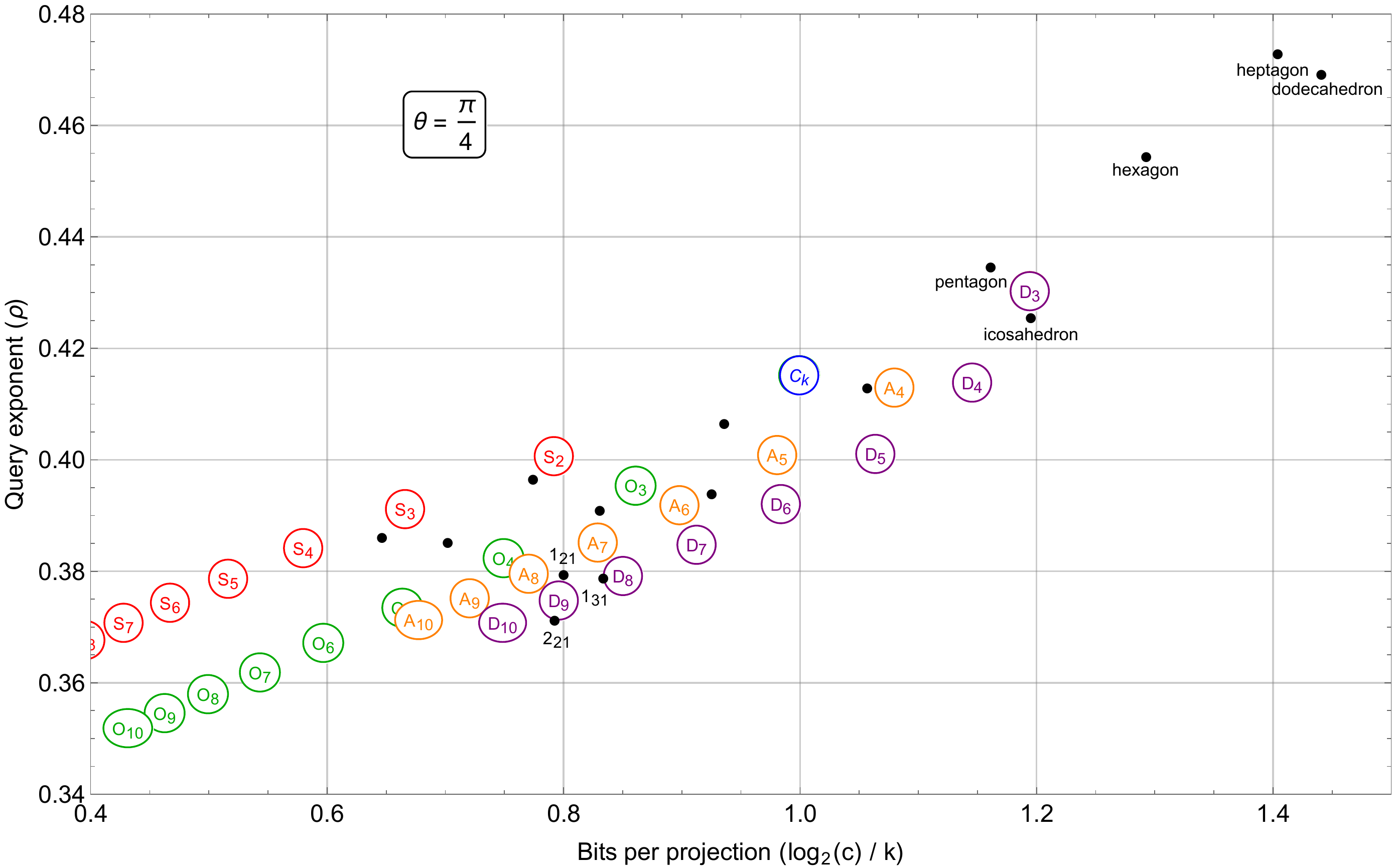} \\
\includegraphics[width=13.8cm]{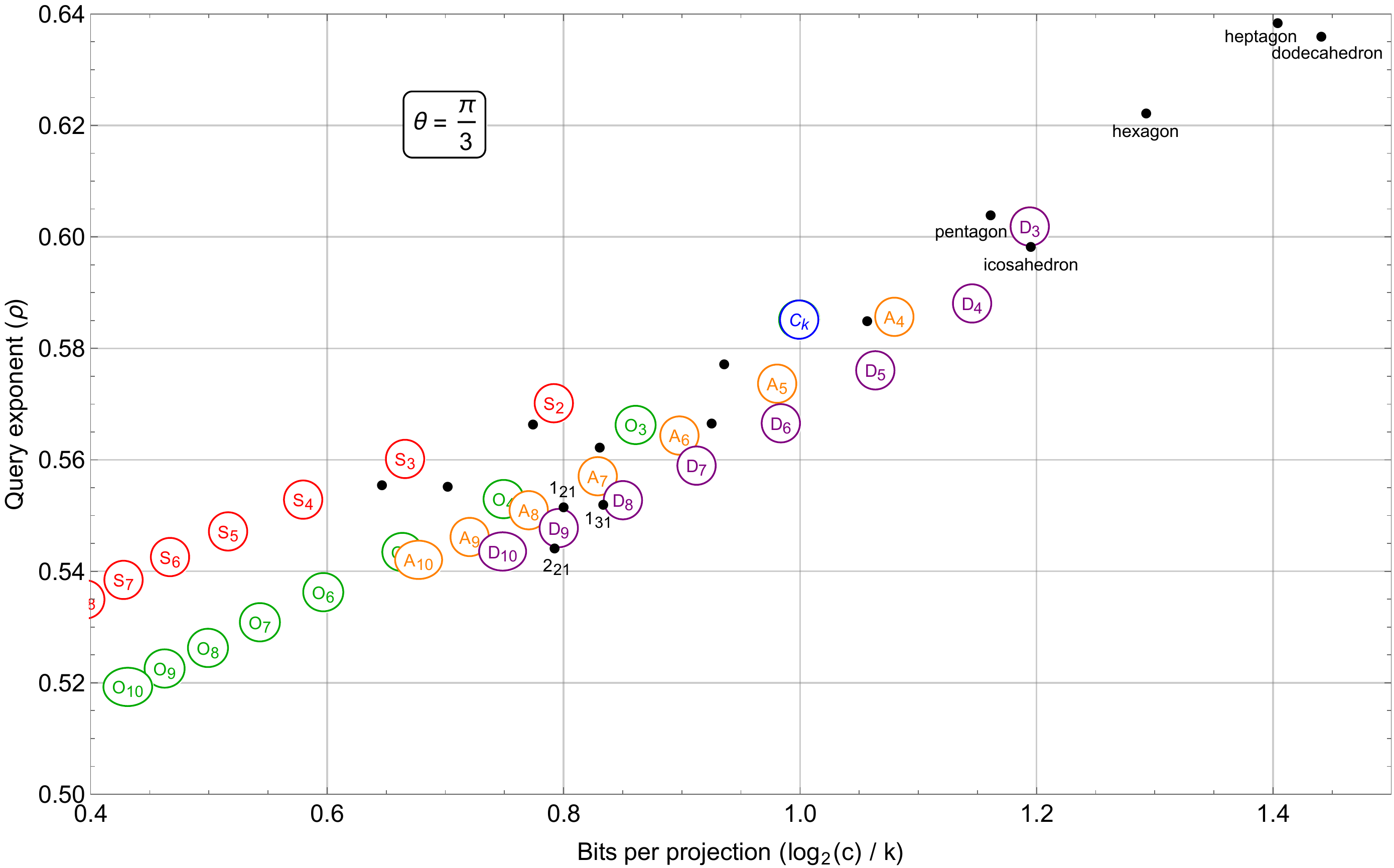}
\caption{A comparison between different spherical codes in terms of the query exponent $\rho$ (vertical axis) and the bits of information extracted from each row of the projection matrix $\mat{A}$ (horizontal axis). The top figure corresponds to target angle $\theta = \frac{\pi}{4}$ (or approximation factor $c = \sqrt{2 + \sqrt{2}}$), the bottom figure to target angle $\theta = \frac{\pi}{3}$ (or $c = \sqrt{2}$). Codes further down generate hash functions with a more discriminative power (a lower value $\rho$), while codes further to the right extract more bits per projection, therefore requiring fewer inner product computations to compute hash values. So for our purposes, the best codes would be as far to the right and as far down as possible. Hypercubes $C_k$ all achieve the same value $\rho$ and the same value $\log_2(c) / k = 1$.\label{fig:to}}
\end{figure}


\begin{figure}
\includegraphics[width=6.8cm]{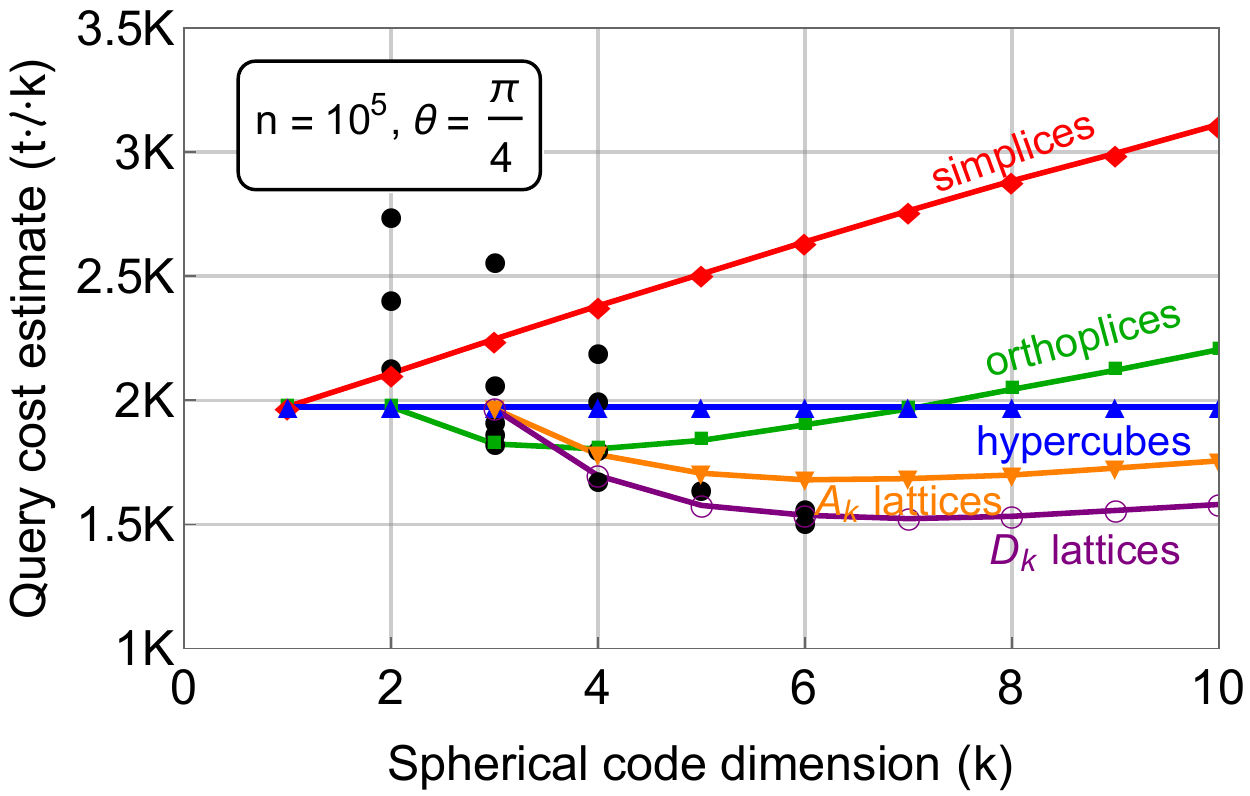} \ \includegraphics[width=6.8cm]{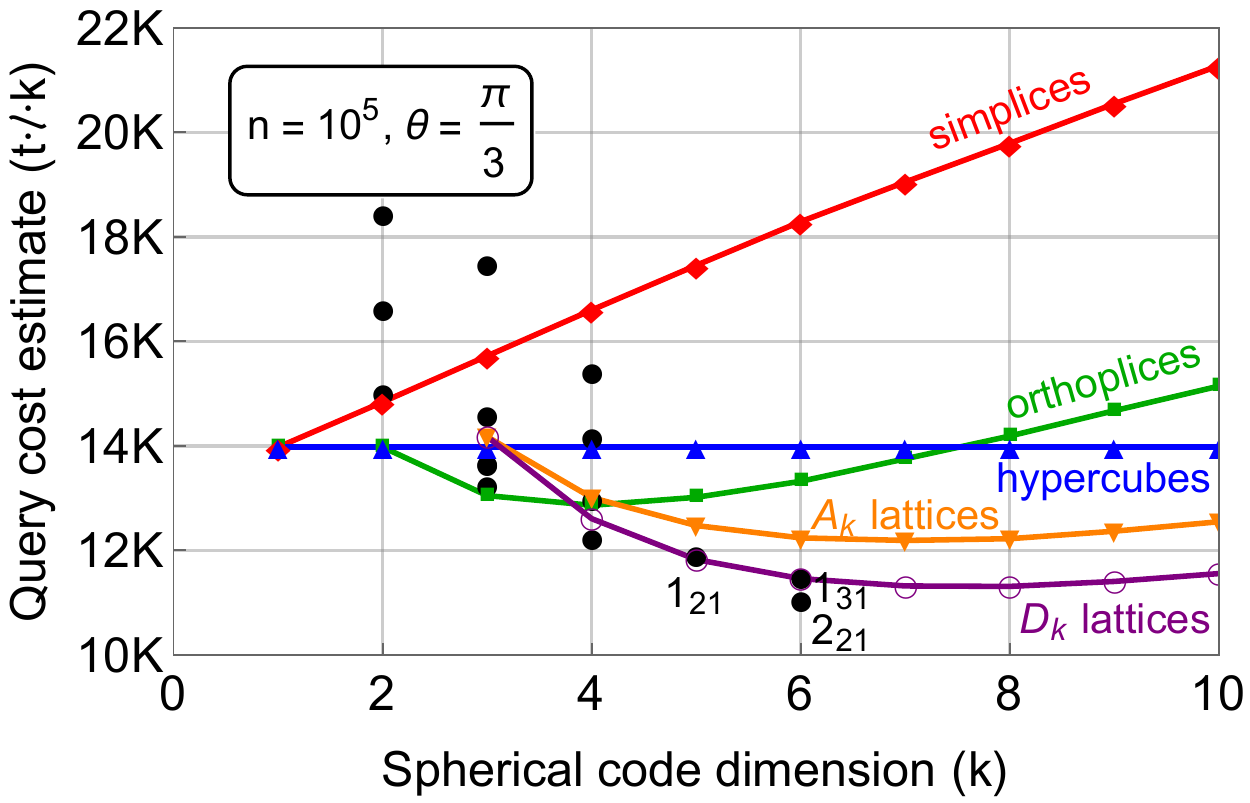} \\
\includegraphics[width=6.8cm]{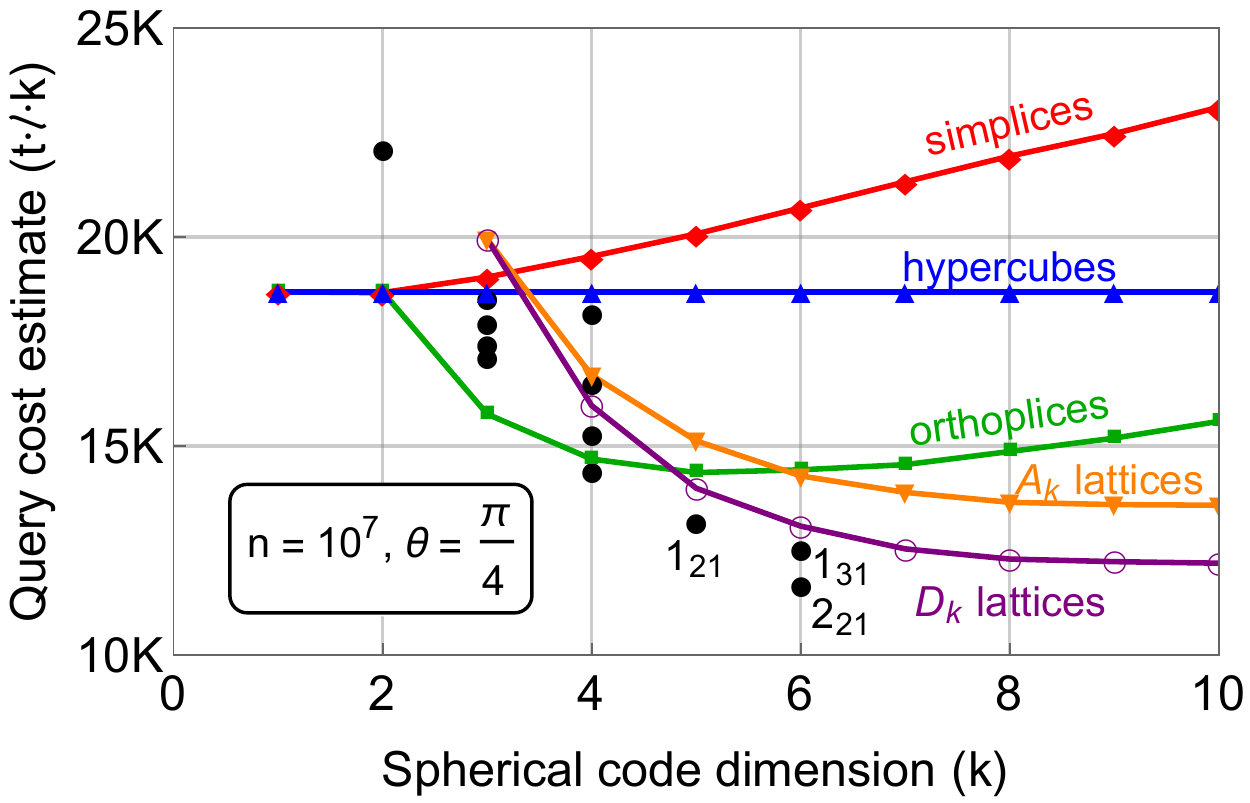} \ \includegraphics[width=6.8cm]{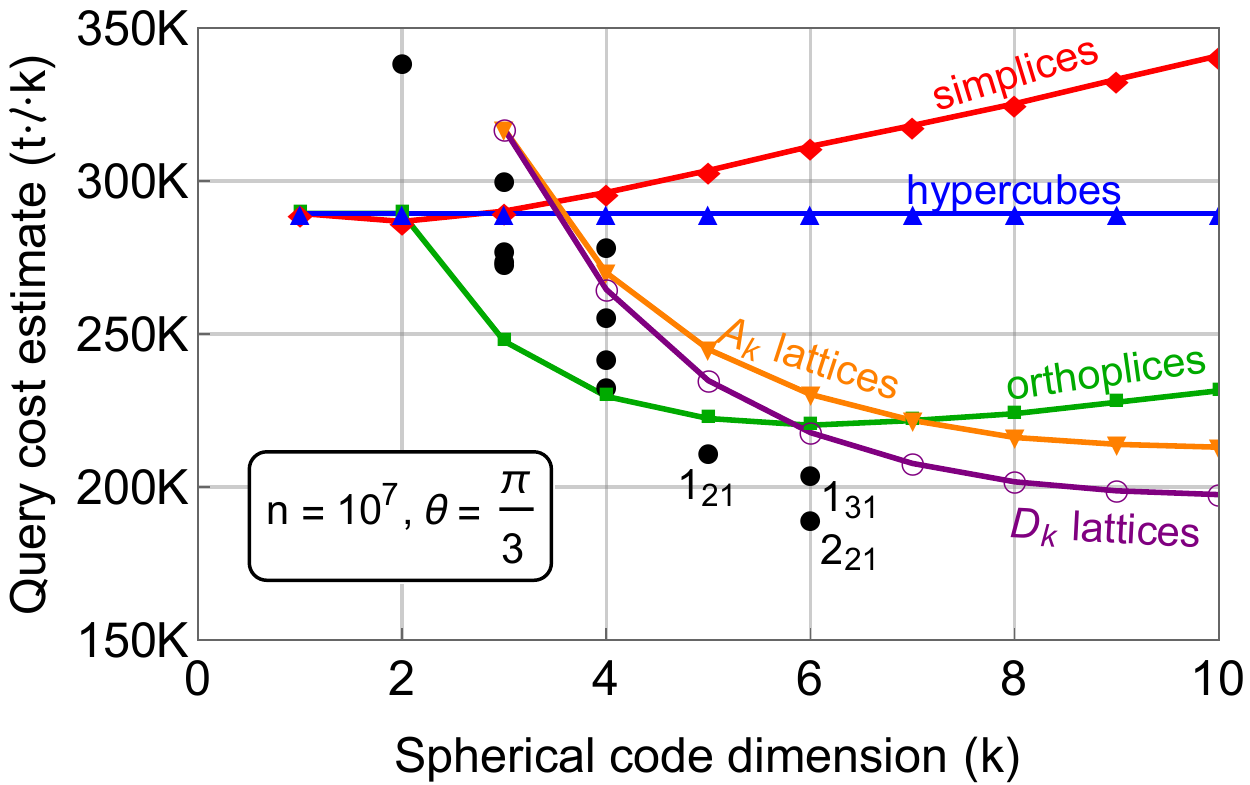} \\
\includegraphics[width=6.8cm]{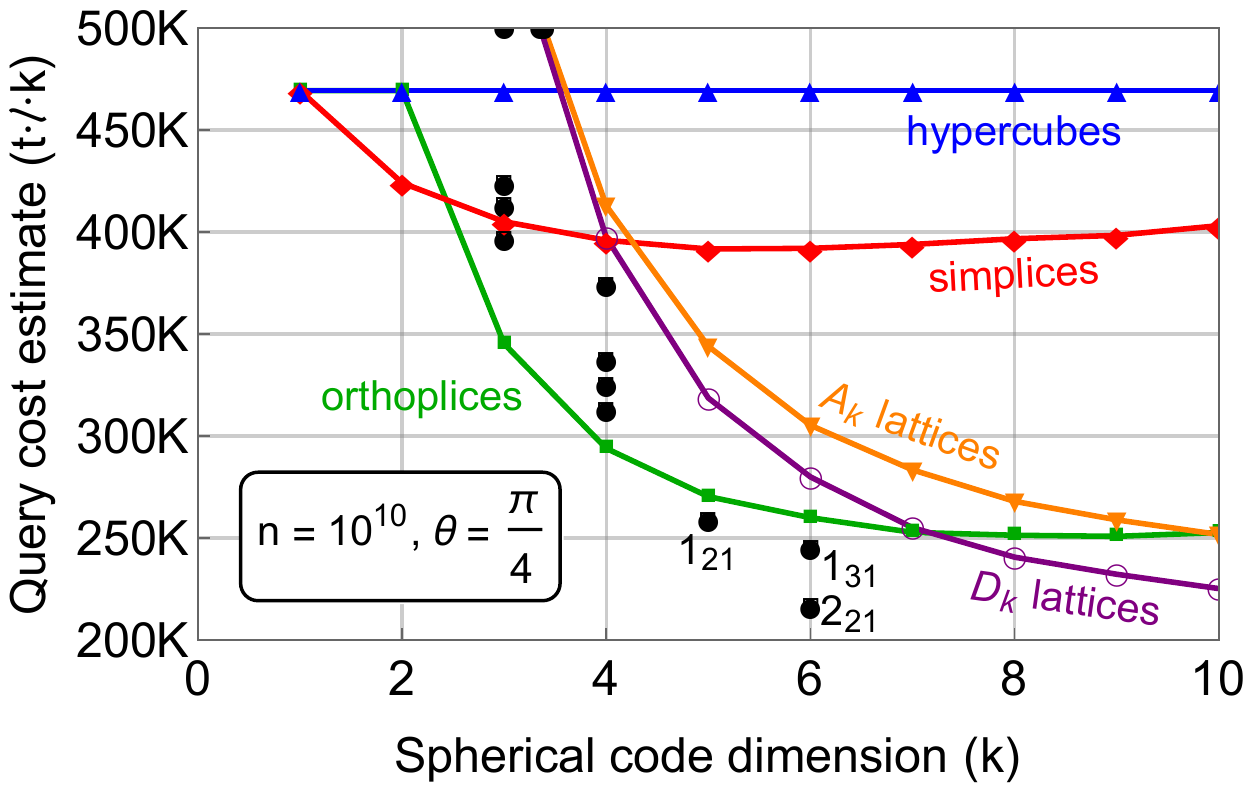} \ \includegraphics[width=6.8cm]{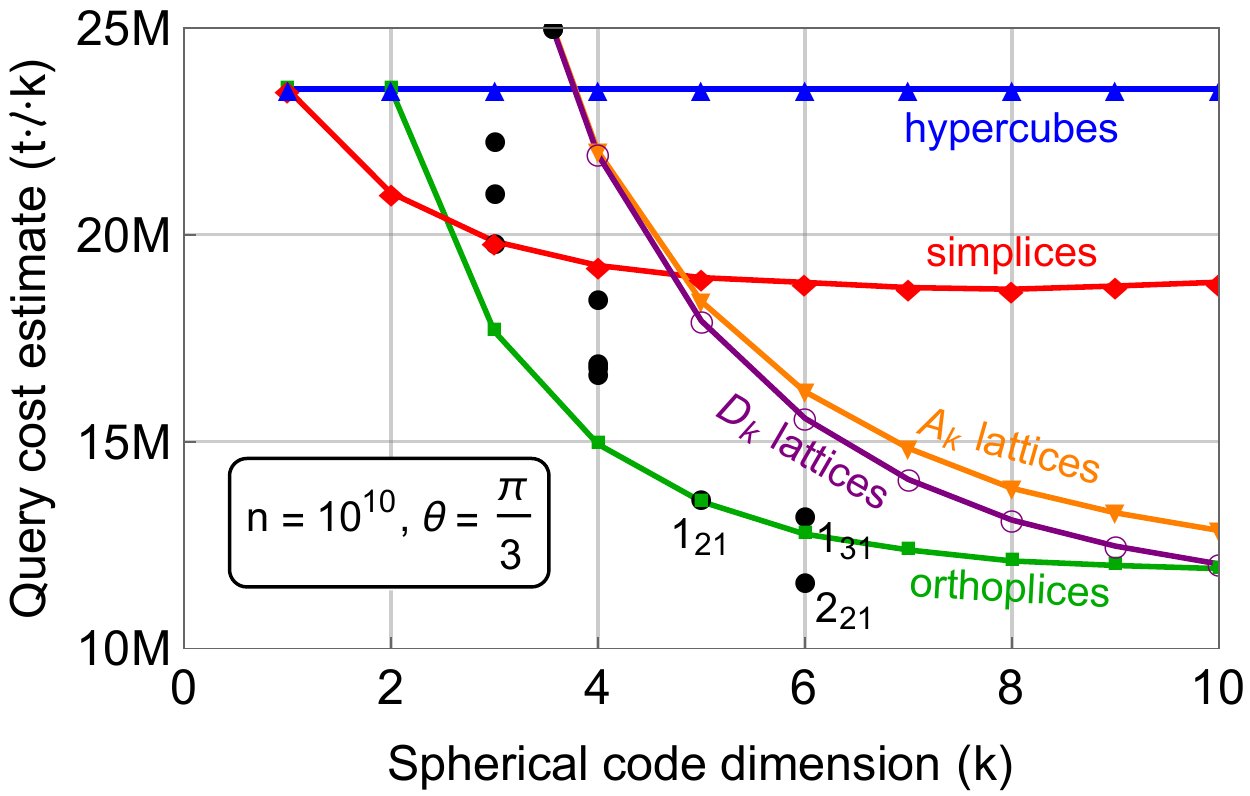}
\caption{Comparison of the query cost estimates $t \cdot \ell \cdot k$ for different parameters $n \in \{10^5, 10^7, 10^{10}\}$ and $\theta \in \{\frac{\pi}{4}, \frac{\pi}{3}\}$, when using $t = n^{\rho}$ hash tables and a hash length $\ell = \log(n) / \log(1/p_2)$. The curves correspond to the regular convex polytope families and the root lattice families $A_k$ and $D_k$, while single black points for $k \leq 6$ correspond to spherical codes described in Table~\ref{tab:overview}. As $n$ increases, the role of a smaller $\rho$ becomes bigger, and so larger spherical codes with smaller values $\rho$ become more suitable choices than those with bigger values $\rho$ but with lower hash costs. The cost of the projections increases linearly with $k$, while $\rho$ decreases rather slowly with $k$, suggesting that for each $n$ and $\theta$ there is an optimal spherical code dimension $k \ll d$ to project down to, and an optimal spherical code in this dimension to use. Note that other spherical codes than those from the five families of codes sometimes achieve better query cost estimates; in particular, the $2_{21}$-polytope (which is connected to the $E_6$ lattice) might be useful in practice as well, and we cannot rule out the existence of other exotic spherical codes with good properties for nearest neighbor searching. \protect\\ One of the conclusions one might draw from these figures is that indeed orthoplices (cross-polytopes) perform very well~\cite{andoni15cp}, but depending on the parameters it may be better to use the $D_k$ lattices instead -- the trends suggest that as $k$ increases further, the query costs of using the $D_k$ lattices will be smaller than those of the orthoplices.\label{fig:prac}}
\end{figure}


\section{Preliminaries}
\label{sec:preliminaries}

\subparagraph{Basic notation.} Throughout the paper, we denote vectors in lower-case boldface (e.g.\ $\vc{x}$), and matrices in upright, upper-case boldface (e.g. $\mat{A}$). Coordinates of vectors and matrices are written without boldface (e.g. $x_i$ and $\mathord{\mathrm{A}}_{i,j}$). We write $\vc{x} \geq \vc{y}$ to denote the event that $x_i \geq y_i$ for all $i$. We write $\vc{e}_i$ for the $i$th unit vector with a $1$ on position $i$, and a $0$ in all other coordinates. We write $\mat{I}_k$ for the $k \times k$ identity matrix, and $\mat{J}_k$ for the $k \times k$ matrix with all entries equal to $1$. We write $\mat{K} = \mat{A} \otimes \mat{B}$ for the standard Kronecker (tensor) product of matrices; if $\mat{A} \in \mathbb{R}^{r_a \times c_a}$ and $\mat{B} \in \mathbb{R}^{r_b \times c_b}$, then $\mat{K} \in \mathbb{R}^{(r_a r_b) \times (c_a c_b)}$ is a matrix with entries $\mathord{\mathrm{A}}_{i,j} \mathord{\mathrm{B}}_{k,l}$. We write $\|\vc{x}\| = (\sum_i x_i^2)^{1/2}$ for the Euclidean norm of $\vc{x}$, and we denote the standard dot product in $\mathbb{R}^k$ by $\langle \vc{x}, \vc{y} \rangle = \sum_i x_i y_i$. Given two vectors $\vc{x}, \vc{y} \in \mathbb{R}^k$, we further denote their mutual angle by 
\begin{align}
\phi(\vc{x}, \vc{y}) = \arccos \left\langle \frac{\vc{x}}{\|\vc{x}\|}, \frac{\vc{y}}{\|\vc{y}\|} \right\rangle.
\end{align}
Angles between vectors are invariant under scalar multiplication of either vector (i.e.\ $\phi(\lambda \vc{x}, \mu \vc{y}) = \phi(\vc{x}, \vc{y})$ for all $\lambda, \mu > 0$), and if $\mat{A}$ is a rotation matrix ($\mat{A}$ is orthogonal), then $\phi(\mat{A} \vc{x}, \mat{A} \vc{y}) = \phi(\vc{x}, \vc{y})$. 

\subparagraph{Geometry on the sphere.} We write $\cS^{k-1} = \{\vc{x} \in \mathbb{R}^k: \|\vc{x}\| = 1\}$ for the Euclidean unit sphere in $k$ dimensions. We write $\mu$ for the standard Haar measure, so that the relative volume of $\mathcal{A} \subseteq \cS^{k-1}$ on the sphere can be expressed as $\mu(\mathcal{A}) / \mu(\cS^{k-1})$. Let $\cC_{\alpha} = \{\vc{x} \in \cS^{k-1}: x_1 > \alpha\}$ denote a \emph{spherical cap} of height $1 - \alpha$. Writing $I_x(a,b)$ for the regularized incomplete beta function, we can compute the volume of $\cC_{\alpha}$ as~\cite{lee14}:
\begin{align}
\frac{\mu(\cC_{\alpha})}{\mu(\cS^{k-1})} = \frac{1}{2} \, I_{1 - \alpha^2}\left(\frac{k-1}{2}, \frac{1}{2}\right).
\end{align}
Asymptotically (for large $k$), the above ratio scales as $k^{\Theta(1)} \cdot (1 - \alpha^2)^{k/2}$~\cite{shannon59, becker16lsf, andoni17}.

\subparagraph{Spherical codes and Voronoi cells.} Given a set of points (code words) $\cC = \{\vc{c}_1, \dots, \vc{c}_c\} \subset \mathbb{R}^k$, the \emph{Voronoi cells} of $\cC$ are subsets of $\mathbb{R}^k$, associating to each point $\vc{c}_i \in \cC$ a Voronoi region 
\begin{align}
\cV_i = \cV_i(\cC) = \left\{\vc{x} \in \mathbb{R}^k: \|\vc{x} - \vc{c}_i\| \leq \|\vc{x} - \vc{c}_j\|, \, \forall j = 1, \dots, c\right\}. 
\end{align}
Note that $\bigcup_i \cV_i = \mathbb{R}^k$, while $\sum_{i \neq j} \mu(\cV_i \cap \cV_j) / \mu(\cS^{k-1}) = 0$; except for the boundaries between cells, which together have measure $0$, this forms a proper partition of $\mathbb{R}^k$. For a partition of a space into Voronoi cells induced by $\cC$, we write $\cR_i$ for the set of \emph{relevant vectors} of $\vc{c}_i \in \cC$, i.e.\ the vectors $\vc{c}_j \in \cC$ such that $\cV_i, \cV_j$ share a non-trivial boundary. If $\cC \subset \cS^{k-1}$, we call $\cC$ a \emph{spherical code}, and in that case the Voronoi cells take conical shapes, and are invariant under (positive) scalar multiplication; $\vc{x} \in \cV_i$ iff $\vc{x} / \|\vc{x}\| \in \cV_i$. Each pair of cells $\cV_i, \cV_j$ then shares a common point $\vc{0}$; for $\vc{c}_j$ to be considered a relevant vector of $\vc{c}_i$, the intersection between $\cV_i, \cV_j$ must contain more than only $\vc{0}$. We further say a spherical code $\cC$ is \emph{uniform} if the associated Voronoi partition satisfies $\mu(\cV_i) = \mu(\cV_j)$ for all $i, j$, and we say $\cC$ is \emph{isogonal} (vertex-transitive) if, for each $\vc{c}_i, \vc{c}_j \in \cC$, there exists an isometry $f$ with $f(\vc{c}_i) = f(\vc{c}_j)$ and $f(\cC) = \cC$.

\subparagraph{Lattices and root systems.} Given a set $\mat{B} = \{\vc{b}_1, \dots, \vc{b}_k\}$ of linearly independent vectors, we define the lattice $\cL = \cL(\mat{B}) = \{\sum_{i=1}^k \lambda_i \vc{b}_i: \vc{\lambda} \in \mathbb{Z}^k\}$. A lattice $\cL$ is an \emph{integral lattice} if for all $\vc{v}, \vc{w} \in \cL$ we have $\langle \vc{v}, \vc{w} \rangle \in \mathbb{Z}$. A lattice $\cL$ is a \emph{root lattice} if it is integral and generated by a set of roots $\vc{a} \in \cL$ satisfying $\|\vc{a}\|^2 = 2$. Besides the three exceptional lattices $E_6, E_7$ and $E_8$, the only irreducible root lattices are those from the following two families~\cite{brouwer02}:
\begin{itemize}
\item $A_k$: Described as a subset of $\mathbb{R}^{k+1}$, this lattice has $k (k+1)$ roots $\vc{e}_i - \vc{e}_j$ for $i \neq j$.\footnote{As all vertices lie on a hyperplane, the vertices can be projected onto $\mathbb{R}^k$ preserving mutual distances.}
\item $D_k$: This lattice has roots $\pm \vc{e}_i \pm \vc{e}_j$, for $i \neq j$. There are $2 k (k - 1)$ such roots.
\end{itemize}
Note that the roots of the root systems $B_k$ and $C_k$ are of different norms, and these sets of roots therefore do not define proper spherical codes.

\subparagraph{Regular polytopes.} Throughout the paper we will study spherical codes generated by \emph{regular polytopes}, i.e.\ polytopes with regular facets and regular vertex figures. Besides several regular polytopes in low dimensions, there are only three families of regular polytopes in high dimensions.
\begin{itemize}
\item $S_k$: Described as a subset of $\mathbb{R}^{k+1}$, the $k$-simplex has vertices $\vc{e}_i$ for $i = 1, \dots, k+1$. \footnotemark[1]
\item $O_k$: The $k$-orthoplex or $k$-cross polytope has $2k$ vertices of the form $\pm \vc{e}_i$, for $i = 1, \dots, k$. 
\item $C_k$: The $k$-hypercube has $2^k$ vertices $\pm \vc{e}_1 \pm \dots \pm \vc{e}_k$, taking all possible signs.
\end{itemize}
Besides these infinite families, there are several regular polytopes in dimension up to four, such as the Platonic and Archimedian solids.

\subparagraph{Probabilities.} Given a set $\mathcal{A}$, we write $a \sim \mathcal{A}$ to denote the process of drawing $a$ uniformly at random from $\mathcal{A}$. With slight abuse of notation, we similarly write $a \sim \xi$ to denote that $a$ is drawn from a probability distribution $\xi$; from the context it will be clear which case we are referring to. We write $\No(\mu, \sigma^2)$ for the (univariate) normal distribution with mean $\mu$ and variance $\sigma^2$, and we write $\No(\vc{\mu}, \mat{\Sigma})$ for the multivariate normal distribution with mean $\vc{\mu}$ and correlation matrix $\mat{\Sigma}$. We refer to $\No(0,1)$ and $\No(\vc{0}, \mat{I}_k)$ as the univariate and multivariate standard normal distributions.





\section{A framework for nearest neighbor searching with spherical codes}
\label{sec:framework}

In this section we will describe a framework for solving a natural average-case version of the nearest neighbor problem with well-chosen spherical codes, and we will show how the performance of these codes can be estimated and compared, to find the best spherical codes. In subsequent sections we will then apply this framework to different spherical codes.

\subsection{The nearest neighbor problem}

Let us start with a broad definition of the nearest neighbor problem, the topic of this paper.
\begin{definition}[Nearest neighbor problem]
Given a data set $\cD = \{\vc{p}_1, \dots, \vc{p}_n\} \subset \mathbb{R}^d$, the nearest neighbor problem asks to index $\cD$ in a data structure such that, when later given a query vector $\vc{q} \in \mathbb{R}^d$, one can quickly find a nearest neighbor $\vc{p}^* \in \cD$ to $\vc{q}$. 
\end{definition}
Depending on the similarity measure, which defines when points are considered nearby, and depending on how the above problem is modeled, different solutions exist. In this paper we will be concerned with the nearest neighbor problem for the angular distance, where the similarity between two vectors is measured by their angle:
\begin{align}
\text{dist}(\vc{x}, \vc{y}) = \phi(\vc{x}, \vc{y}) = \arccos \left\langle \frac{\vc{x}}{\|\vc{x}\|}, \frac{\vc{y}}{\|\vc{y}\|} \right\rangle.
\end{align}
With this notion of similarity, two vectors are considered nearby if they have a small angle, and so for the nearest neighbor problem we are looking for the point $\vc{p}^* \in \cD$ which has the smallest angle with the query vector $\vc{q}$. As the angle between two vectors is independent of their norms, this can equivalently be viewed as the nearest neighbor problem for the Euclidean distance on the unit sphere ($\cD \subset \cS^{d-1}$) by scaling all vectors to have norm $1$: a small angle then corresponds to a small distance on the unit sphere through the following one-to-one relation between Euclidean distances and angles:
\begin{align}
\vc{x}, \vc{y} \in \cS^{d-1} \qquad \implies \qquad \tfrac{1}{2} \|\vc{x} - \vc{y}\|^2 = 1 - \cos \phi(\vc{x}, \vc{y}).
\end{align}

As described in e.g.\ \cite{andoni15, andoni15cp, andoni16lb, andoni17}, the angular nearest neighbor problem (or the Euclidean nearest neighbor problem on the unit sphere) is in a sense the canonical hard problem; through various reductions, a solution to the nearest neighbor problem for this problem would immediately yield an (efficient) algorithm for the nearest neighbor problem in the $\ell_1$ and $\ell_2$-norms in all of $\mathbb{R}^d$. Studying this problem therefore may also be of interest when considering the nearest neighbor problem for other metrics.

Since it has long been known that solving this problem exactly for worst-case instances cannot be done in sublinear time in the size of the data set $n = |\cD|$~\cite{indyk98, andoni16lb}, a large body of work has focused on \emph{approximation} versions of this problem under \emph{worst-case} assumptions, where an approximate solution $\hat{\vc{p}} \in \cD$ with $\text{dist}(\hat{\vc{p}}, \vc{q}) \leq c \cdot \text{dist}(\vc{p}^*, \vc{q})$ suffices for solving the problem, given some approximation factor $c > 1$ and the exact nearest neighbor $\vc{p}^* \in \cD$ to $\vc{q}$ in $\cD$. Here instead we will focus on methods for solving this problem \emph{exactly} in sublinear time for \emph{average-case} instances, by making suitable, natural assumptions on the distribution of the data points $\vc{p}_i$ and the query vector $\vc{q}$. 
\begin{definition}[Nearest neighbor problem on the sphere] Given a data set $\cD$ of size $n$ drawn uniformly at random from $\cS^{d-1}$, and a parameter $\theta \in (0, \frac{\pi}{2})$, the nearest neighbor problem on the sphere asks to index $\cD$ in a data structure such that, when later given a uniformly random query $\vc{q} \in \cS^{d-1}$, we can quickly find a neighbor $\vc{p}^* \in \cD$ with $\phi(\vc{p}^*, \vc{q}) \leq \theta$, or conclude that with high probability no such vector $\vc{p}^* \in \cD$ exists.
\end{definition}
This description covers many applications where the data set and the query are indeed (close to) uniformly random, and we wish to detect highly similar objects in the database, if they exist. Note that there is often a one-to-one correspondence between such average-case, exact solutions and worst-case, approximate solutions -- we can often substitute $\cD$ with a (worst-case) data set $\cD'$ with the guarantee that no vectors in $\cD' \setminus \{\vc{p}^*\}$ have angle smaller than $\theta' > \theta$ with the query vector, and use the same algorithms in both cases.

\subsection{Locality-sensitive hashing}

To solve the nearest neighbor problem on the unit sphere, we will follow the celebrated \emph{locality-sensitive hashing} framework, first introduced in~\cite{indyk98}. Let us first restate the folklore definition of locality-sensitive hash function, slightly adjusted to fit our model of exact nearest neighbor searching for the angular distance for random data points.
\begin{definition}[Locality-sensitive hash functions]
Let $c \in \mathbb{N}$. A set $\cH$ of hash functions $h: \mathbb{R}^d \to \{1, \dots, c\}$ is called $(\theta, p_1, p_2)$-sensitive if (1) two vectors at an angle at most $\theta$ collide with probability at least $p_1$; and (2) two random vectors collide with probability at most $p_2$:
\begin{align}
\Pr_{\substack{h \sim \cH \\ \vc{x}, \vc{y} \sim \cS^{d-1}}}(h(\vc{x}) = h(\vc{y}) \mid \phi(\vc{x}, \vc{y}) \leq \theta) \geq p_1, \quad \qquad \Pr_{\substack{h \sim \cH \\ \vc{x}, \vc{y} \sim \cS^{d-1}}}(h(\vc{x}) = h(\vc{y})) \leq p_2.
\end{align}
\end{definition}
With slight abuse of notation, we will sometimes use $p_1$ and $p_2$ to refer to the actual probabilities above (rather than their lower and upper bounds, respectively). Now, assuming such hash functions exist, and they are efficiently evaluable\footnote{Being able to compute $h(\vc{x})$ given $\vc{x}$ in $n^{o(1)}$ time may be sufficient to guarantee the stated asymptotic time complexities, but in practice a superpolynomial decoding cost $d^{\omega(1)}$ may already render the method impractical.}, we can build efficient nearest neighbor data structures as follows. After selecting suitable parameters $\ell$ and $t$, we construct $t$ hash tables $T_1, \dots, T_t$, each consisting of $c^{\ell}$ hash buckets, corresponding to the concatenated outputs of $\ell$ randomly chosen hash functions $h_1, \dots, h_k \sim \cH$ from our hash family. For each vector $\vc{p} \in \cD$ and each hash table $T_i$, we compute its combined hash value, and insert $\vc{p}$ in the corresponding hash bucket with label $(h_{i,1}(\vc{p}), \dots, h_{i,\ell}(\vc{p}))$. Then, given a query $\vc{q}$, for each hash table $T_i$ we compute its corresponding hash bucket $(h_{i,1}(\vc{q}), \dots, h_{i,\ell}(\vc{q}))$, look up vectors stored in this hash bucket, and check if any of them have angle less than $\theta$ with $\vc{q}$. After going through all the hash tables, we hope to have encountered the nearest neighbor $\vc{p}^* \in \cD$ in one of these hash tables, while hopefully the total amount of work done at this point (computing hashes, doing hash table look-ups, going through potential nearest neighbors in the hash buckets) is still sublinear in $n$. 

Besides the initialization costs of locality-sensitive hashing (e.g.\ setting up the hash functions, and initializing and populating the hash tables), the query costs can be divided in two main costs: (1) computing $t \cdot \ell$ hash values and doing $t$ lookups in memory to locate the right hash buckets; and (2) going through all potential nearest neighbors encountered in these hash buckets, to find a solution. By choosing $\ell$ and $t$ carefully, as a function of $n, \theta, p_1, p_2$, we can balance these costs and minimize the query costs as stated in the folklore result below.
\begin{proposition}[Locality-sensitive hashing for the nearest neighbor problem]
Let $\cH$ be \\an efficiently evaluable $(\theta, p_1, p_2)$-sensitive hash family, and let
\begin{align}
\rho = \frac{\log 1/p_1}{\log 1/p_2}, \qquad \ell = \frac{\log n}{\log 1/p_2}, \qquad t = O(n^{\rho}). \label{eq:lshpars}
\end{align}
Then, using locality-sensitive hashing with these parameters, we can solve the nearest neighbor problem with the following costs:
\begin{enumerate}
\item Initializing and populating the data structure can be done in time $\tilde{O}(n^{1 + \rho})$; 
\item The data structure requires $\tilde{O}(n^{1 + \rho})$ memory to store; 
\item With prob.\ $\geq 0.99$, within time $\tilde{O}(n^{\rho})$ the algorithm finds a neighbor at angle $\leq \theta$, if it exists.
\end{enumerate}
\end{proposition}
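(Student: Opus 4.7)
The plan is to follow the classical Indyk--Motwani argument, carefully tracking how the choice of $\ell$ and $t$ balances the probability of missing the nearest neighbor against the expected work done on spurious collisions. First, for each table $i \in \{1,\dots,t\}$ I would define the concatenated hash $H_i(\vc{x}) = (h_{i,1}(\vc{x}), \dots, h_{i,\ell}(\vc{x}))$, where the $h_{i,j}$ are drawn independently from $\cH$. By independence of the $\ell$ coordinates, in a single table two vectors at angle at most $\theta$ collide with probability at least $p_1^{\ell}$, while two uniformly random vectors collide with probability at most $p_2^{\ell}$. Plugging in $\ell = \log n/\log(1/p_2)$ gives the key identities
\begin{align}
p_2^{\ell} = \frac{1}{n}, \qquad p_1^{\ell} = n^{-\rho}.
\end{align}

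Next I would argue correctness. The probability that the true target neighbor $\vc{p}^*$ (assumed to exist, at angle $\leq \theta$) fails to collide with $\vc{q}$ in a single table is at most $1 - n^{-\rho}$, so across $t$ independent tables the failure probability is at most $(1 - n^{-\rho})^t \leq \exp(-t\,n^{-\rho})$. Choosing the constant in $t = O(n^{\rho})$ large enough (say $t = \lceil 10\, n^{\rho}\rceil$) drives this below $0.005$. For the expected workload, any fixed non-target $\vc{p} \in \cD$ collides with $\vc{q}$ in a given table with probability at most $p_2^{\ell} = 1/n$; by linearity of expectation the total number of spurious candidate--table pairs inspected during a query is at most $n \cdot t \cdot p_2^{\ell} = t = O(n^{\rho})$. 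By Markov's inequality, this count exceeds $200\, t$ with probability at most $0.005$. A union bound then yields the overall success probability $\geq 0.99$ claimed in the statement.

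Turning to the three cost estimates, I would bookkeep each contribution and absorb polynomial-in-$d$ and polylogarithmic-in-$n$ factors into the $\tilde{O}$ notation, using the assumption that hash evaluation is efficient. Populating the data structure requires $n \cdot t \cdot \ell$ hash evaluations and the same number of hash-table insertions, which is $\tilde O(n^{1+\rho})$; the $t$ hash tables together contain $n \cdot t$ pointers, giving the stated $\tilde O(n^{1+\rho})$ space (plus the $O(nd)$ cost of storing $\cD$ itself, which is dominated for typical regimes). For a query, computing the $t \cdot \ell$ hash values and performing $t$ bucket look-ups contributes $\tilde O(n^{\rho})$, and on the events analysed above the distance checks on candidates add another $O(t \cdot d) = \tilde O(n^{\rho})$. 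To make the $\tilde O(n^{\rho})$ query time a worst-case guarantee rather than an expectation, I would simply have the algorithm abort if it has inspected more than, say, $200\,t$ candidates; this additional abort event is already accounted for in the failure budget above.

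The argument is essentially bookkeeping once the two $\ell$-identities are in place, so there is no single hard step; the only point requiring mild care is the coupling between the two bad events (missing $\vc{p}^*$ and inspecting too many spurious candidates), which I would handle by a union bound with each bad-event probability set below $0.005$, together with the abort trick above to convert the expected candidate bound into a deterministic running-time guarantee on the success event.
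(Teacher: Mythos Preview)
The paper does not actually prove this proposition: it is stated as a ``folklore result'' and left without proof, with only the remark afterward that $\tilde{O}$ hides factors polynomial in $d$ and $\log n$ and that the success probability is tuned via the leading constant in $t$. Your proposal is the standard Indyk--Motwani argument and is correct; it is exactly the proof one would expect to back up this folklore claim, and the abort trick you include is the usual device for turning the expected candidate count into a deterministic time bound on the success event.
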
 
Here $\tilde{O}$ hides multiplicative factors which are polynomial in $d$ and $\log n$, while the success probability can be fine-tuned through the leading constant of $t$; a larger $t$ leads to a higher success probability.

Note that the memory and query costs scale as $n^{1 + \rho}$ and $n^{\rho}$ respectively, and so the goal is to find a locality-sensitive hash family $\cH$ minimizing the associated parameter $\rho$, for the parameter $\theta$ of interest, preferably with an efficient decoding algorithm. Beyond this point, we will mostly ignore all other aspects of locality-sensitive hashing, and focus on finding families $\cH$ minimizing $\rho$; in Section~\ref{sec:practice} we will come back to these practical aspects of locality-sensitive hashing.

\subsection{Space partitions}

Various locality-sensitive hash families achieving sublinear query times have been proposed over time. Although some are strictly more efficient than others, often there is a trade-off between e.g.\ the different performances for different angles $\theta$ (approximation factors $c$); the asymptotic efficiency for large $d$ and $n$; and the practical performance for small/moderate values $d$ and $n$. Examples of hash families for the angular distance from the literature include:
\begin{itemize}
\item The fast and practical hyperplane-based approach of Charikar~\cite{charikar02};
\item The asymptotically optimal cross-polytope hash family~\cite{terasawa07, eshghi08, terasawa09, andoni15cp, kennedy17};
\item The closely related (asymptotically optimal) simplex hash family~\cite{terasawa07, terasawa09};
\item The hypercube hash family~\cite{terasawa07, terasawa09, laarhoven17hypercube};
\item Hash families based on spherical caps~\cite{andoni06, andoni14, becker16lsf, andoni17};
\item Approaches based on using suitable integer lattices~\cite{andoni06, chandrasekaran18}. 
\end{itemize}
More recently, it has been shown that deviating from the locality-sensitive hashing framework (by relaxing the condition that $h$ induces a proper partition of the space) may lead to better asymptotic results in certain applications~\cite{becker16cp, becker16lsf, andoni17, christiani17}. However, for data sets of subexponential size ($n = 2^{o(d)}$), the best locality-sensitive hash methods achieve the same asymptotic performance as these locality-sensitive filters, and appear to perform better in practice~\cite{andoni15cp, falconn, annbench1, aumueller17, annbench2}. Here we restrict our attention to locality-sensitive hash families generated by proper partitions of the unit sphere.

Out of the above methods, the hyperplane-based approach of Charikar~\cite{charikar02} is often (one of) the most efficient in practice, due to its simple decoding algorithm which can be made extremely fast with heuristic tweaks~\cite{achlioptas01}. For sufficiently large $d$ and $n$, the cross-polytope hash family will perform better, but e.g.\ in applications in cryptography, the cross-over point between hyperplane hashing and ``optimal'' hash families was found to be quite high~\cite{laarhoven15crypto, becker16cp, becker16lsf, mariano17}. Rather than only focusing on asymptotic performance, we will search for hash families with similar properties to the hyperplane-based approach, ideally allowing us to interpolate between the fast hyperplane LSH and the asymptotically superior cross-polytope LSH.

\subsection{Project--and--partition} 

A common approach for locality-sensitive hashing encompasses two steps:
\begin{description}
\item[Project.] Given vectors $\vc{x} \in \mathbb{R}^d$, we first project down onto a $k$-dimensional subspace through a random projection matrix $\mat{A} \in \mathbb{R}^{k \times d}$, to obtain $\mat{A} \vc{x} \in \mathbb{R}^k$. This projection is not lossless, and mutual distances/angles generally cannot be perfectly preserved through this transformation, but ideally these distances are preserved as best as possible. For this, here we generate $\mat{A}$ by choosing each entry independently as a standard normal random variable. Note that with high probability, for constant $k = O(1)$ and large $d$ the $k$ rows of $\mat{A}$ will be approximately orthogonal and have approximately the same norms~\cite{jiang06}. Through the above definition of $\mat{A}$ however, we eliminate the dependence on $d$ in the analysis of the collision probabilities $p_1$ and $p_2$ -- the distribution of any particular column of $\mat{A}$ is independent of $d$. 
\item[Partition.] Using a partition of $\mathbb{R}^k$ into $c$ regions, we assign a hash value to $\vc{x}$ according to the region $\mat{A} \vc{x}$ landed in, using a decoding function $\mathrm{dec}: \mathbb{R}^k \to \{1, \dots, c\}$. A natural choice for the partition is to use the Voronoi cells induced by a suitable code $\cC \subset \mathbb{R}^k$ of size $|\cC| = c$. When all code words $\vc{c} \in \cC$ have the same (unit) length, $\cC$ forms a spherical code, and in the remainder we will exclusively focus on such codes $\cC \subset \cS^{k-1}$.
\end{description} 
Given a spherical code $\cC$, we thus obtain the following hash function family $\cH$.
\begin{definition}[Spherical code hash families] Let $\cC \subset \cS^{k-1}$ be a spherical code of size $|\cC| = c$, and let $\dec: \mathbb{R}^k \to \{1, \dots, c\}$ be a decoding algorithm for the Voronoi cells of $\cC$. Then we associate to $\cC$ hash functions $h = h_{\mat{A}}$ as follows:
\begin{align}
h_{\mat{A}}(\vc{x}) = \dec(\mat{A} \vc{x}). 
\end{align}
For the corresponding hash family $\cH$, sampling $h \sim \cH$ corresponds to sampling $\mat{A} \sim \No(0,1)^{k \times d}$.
\end{definition}
Observe that in all cases the decoding step can be done in time $O(c k)$, by computing all distances between $\mat{A} \vc{x}$ and code words $\vc{c} \in \cC$, but for some codes with additional structure, faster decoding algorithms may exist. Projections can naively be computed in time $O(d k)$, for a total hash time complexity of $O(d k + c k)$.  

As an example, for hyperplane LSH~\cite{charikar02} we have $k = 1$ (i.e.\ we project onto a line), and the matrix $\mat{A} \in \mathbb{R}^{1 \times d}$ is a vector drawn from a spherically-symmetric distribution (e.g.\ a Gaussian distribution). The partition is then defined by the Voronoi cells of the one-dimensional spherical code $\cC = \{-1, 1\}$; the sign of the inner product $\mat{A} \vc{x} \in \mathbb{R}$ determines whether the hash value equals $0$ or $1$.

\subsection{Orthant probabilities}

Without loss of generality, let $\vc{x} = (1, 0, \dots, 0)$ and, in case of a predetermined angle $\theta$ with $\vc{x}$, let $\vc{y} = (\cos \theta, \sin \theta, 0, \dots, 0)$. Denoting the $k$-dimensional columns of $\mat{A}$ by $\vc{a}_i$, observe that $\mat{A} \vc{x} = \vc{a}_1$ and $\mat{A} \vc{y} = (\cos \theta) \vc{a}_1 + (\sin \theta) \vc{a}_2$. Note that while $\vc{x}$ and $\vc{y}$ have angle \textit{exactly} $\theta$, $\mat{A} \vc{x}$ and $\mat{A} \vc{y}$ only have angle \textit{approximately} $\theta$, depending on how orthogonal the two random vectors $\vc{a}_1, \vc{a}_2 \sim \No(0,1)^k$ are. This means that we can compute $p_1$ as follows:
\begin{align}
p_1 &= \Pr_{\substack{h \sim \cH \\ \vc{x}, \vc{y} \sim \cS^{d-1}}}(h(\vc{x}) = h(\vc{y}) \mid \phi(\vc{x}, \vc{y}) = \theta) \\
 &= \sum_{i=1}^c \Pr_{\vc{a}_1, \vc{a}_2 \sim \No(\vc{0},\mat{I}_k)}(\dec(\vc{a}_1) = \dec(\vc{a}_1 \cos \theta + \vc{a}_2 \sin \theta) = \vc{c}_i).
\end{align}
For two independently sampled vectors $\vc{x}, \vc{y} \sim \cS^{d-1}$, also $\mat{A} \vc{x}$ and $\mat{A} \vc{y}$ are independent. Given any spherical code $\cC = \{\vc{c}_1, \dots, \vc{c}_c\} \in \cS^{k-1}$, we can thus describe the probability $p_2$ as follows:
\begin{align}
p_2 &= \Pr_{\substack{h \sim \cH \\ \vc{x}, \vc{y} \sim \cS^{d-1}}}(h(\vc{x}) = h(\vc{y})) = \sum_{i=1}^c \Pr_{\vc{a}_1 \sim \No(\vc{0},\mat{I}_k)}(\dec(\vc{a}_1) = \vc{c}_i)^2 = \sum_{i=1}^c \frac{\mu(\cV_i)^2}{\mu(\cS^{d-1})^2} \, . \label{eq:p2}
\end{align}
Here $\cV_i \subset \mathbb{R}^k$ denotes the Voronoi region associated to $\vc{c}_i$, and $\mu(\cV_i)$ its measure. Note that for a size-$c$ partition of $\mathbb{R}^k$, $p_2$ is smallest if all regions have the same size, in which case $p_2 = 1/c$.

To further study $p_1$, for general codes $\cC$ by definition we have $\dec(\vc{x}) = \vc{c}_i$ if and only if $\|\vc{x} - \vc{c}_i\| \leq \|\vc{x} - \vc{c}_j\|$ for all $j = 1, \dots, c$.\footnote{For simplicity we ignore cases where $\vc{x}$ lies on the boundary between two Voronoi cells and decoding is not unique, as the boundaries together have measure $0$.} Since we assume that $\cC$ is a spherical code, this is equivalent to $\langle \vc{x}, \vc{c}_i \rangle \geq \langle \vc{x}, \vc{c}_j \rangle$ for all $j = 1, \dots, c$, or equivalently $\langle \vc{x}, \vc{c}_i - \vc{c}_j \rangle \geq 0$ for all $j = 1, \dots, c$. The event that $\vc{x} \in \mathbb{R}^k$ is mapped to the hash region corresponding to $\vc{c}_i$ (i.e.\ $\vc{x} \in \cV_i$) can thus equivalently be described by the following matrix inequality:
\begin{align}
\begin{bmatrix}
\text{--- } (\vc{c}_i - \vc{c}_1) \text{ ---} \\
\text{--- } (\vc{c}_i - \vc{c}_2) \text{ ---} \\
\dots \\
\text{--- } (\vc{c}_i - \vc{c}_c) \text{ ---} \\
\end{bmatrix} \cdot \begin{bmatrix} | \\ \vc{x} \\ | \end{bmatrix} \geq \begin{bmatrix} | \\ \vc{0} \\ | \end{bmatrix}.
\end{align}
Let $\mat{C}_i$ denote the matrix on the left hand side above, with rows $\vc{c}_i - \vc{c}_j$ for $j = 1, \dots, c$. Recall that for the event of two $\theta$-correlated random vectors to both be mapped to $\vc{c}_i$, we need $\mat{C}_i \vc{x} \geq \vc{0}$ to hold for both $\vc{x} = \vc{a}_1$ and $\vc{x} = (\cos \theta) \vc{a}_1 + (\sin \theta) \vc{a}_2$, where $\vc{a}_1, \vc{a}_2 \sim \No(\vc{0}, \mat{I}_k)$. The probability $p_1$ can thus be described by a $2k$-dimensional system of inequalities:
\begin{align}
p_1 &= \sum_{i=1}^c \Pr_{\vc{a}_1, \vc{a}_2 \sim \No(\vc{0},\mat{I}_k)}\left(\begin{bmatrix} 
\mat{C}_i & \mat{O} \\ (\cos \theta) \mat{C}_i & (\sin \theta) \mat{C}_i
\end{bmatrix} \begin{bmatrix} \vc{a}_1 \\ \vc{a}_2 \end{bmatrix} \geq \begin{bmatrix} \vc{0} \\ \vc{0} \end{bmatrix}\right) \\
 &= \sum_{i=1}^c \Pr_{\vc{z} \sim \No(\vc{0},\mat{I}_{2k})}\left(\left\{\begin{bmatrix} 1 & 0 \\ \cos \theta & \sin \theta \end{bmatrix} \otimes \mat{C}_i \right\} \vc{z} \geq \vc{0}\right).
\end{align}
Here $\otimes$ denotes the standard Kronecker product for matrices. Note that if $\vc{z} \sim \No(\vc{0}, \mat{I}_{2k})$, then $\vc{z}' = \mat{B} \vc{z} \sim \No(\vc{0}, \mat{B} \mat{B}^T)$. Furthermore, using $(\mat{A} \otimes \mat{B}) \cdot (\mat{A} \otimes \mat{B})^T = (\mat{A} \mat{A}^T) \otimes (\mat{B} \mat{B}^T)$ for arbitrary $\mat{A}$ and $\mat{B}$, we obtain:
\begin{align}
p_1 &= \sum_{i=1}^c \Pr_{\vc{z} \sim \No(\vc{0},\mat{\Sigma}_i)}\left(\vc{z} \geq \vc{0}\right), \qquad \mat{\Sigma}_i = \begin{bmatrix} 1 & \cos \theta \\ \cos \theta & 1 \end{bmatrix} \otimes (\mat{C}_i \mat{C}_i^T).
\end{align}
The probabilities above are also known as \textit{orthant probabilities} for the multivariate normal distribution with covariance matrix $\mat{\Sigma}_i$. For completeness, note that $p_2$ can equivalently be described as an orthant probability, as:
\begin{align}
p_2 &= \sum_{i=1}^c \Pr_{\vc{z} \sim \No(\vc{0},\mat{\Pi}_i)}\left(\vc{z} \geq \vc{0}\right), \qquad \mat{\Pi}_i = \begin{bmatrix} 1 \ &  \ 0 \ \\ 0 \ & \ 1 \quad \end{bmatrix} \otimes \mat{C}_i \mat{C}_i^T, \\
 &= \sum_{i=1}^c \Pr_{\vc{z} \sim \No(\vc{0},\mat{\Theta}_i)}\left(\vc{z} \geq \vc{0}\right)^2, \qquad \mat{\Theta}_i = \mat{C}_i \mat{C}_i^T,
\end{align} 
This description is computationally more difficult to work with than the approach using volumes of the corresponding Voronoi cells, as in Equation~\eqref{eq:p2}. To summarize, the above derivation can be seen as a proof of the following theorem. Here $\dec(\cdot) = \dec_{\cC}(\cdot)$ denotes a decoding function for the corresponding spherical code, in terms of its Voronoi cells $\cV_i$: $\dec(\vc{x}) = i$ if $\vc{x} \in \cV_i$ lies closer to $\vc{c}_i$ than to all other code words $\vc{c}_j \in \cC$.
\begin{theorem}[Spherical code locality-sensitive hashing] \label{thm:main}
Let $\cC = \{\vc{c}_1, \dots, \vc{c}_c\} \subset \cS^{k-1}$ be a $k$-dimensional spherical code, and for $i = 1, \dots, c$, let $\mat{C}_i \in \mathbb{R}^{c \times k}$ denote the matrix whose $j^{\text{th}}$ row is the vector $\vc{c}_i - \vc{c}_j$. Let $\cH$ consist of functions $h = h_{\mat{A}}$ of the form:
\begin{align}
h(\vc{x}) = \dec(\mat{A} \vc{x}) \in \{1, \dots, c\}, \qquad \mat{A} \sim \No(0,1)^{k \times d}.
\end{align} 
Then for any $\theta \in (0, \frac{\pi}{2})$,  $\cH$ is a $(\theta, p_1, p_2)$-sensitive hash family, with:
\begin{align}
p_1 &= \sum_{i=1}^c \Pr_{\vc{z} \sim \No(\vc{0},\mat{\Sigma}_i)}\left(\vc{z} \geq \vc{0}\right), \ \mat{\Sigma}_i = \begin{bmatrix} 1 & \cos \theta \\ \cos \theta & 1 \end{bmatrix} \otimes (\mat{C}_i \mat{C}_i^T), \ p_2 = \sum_{i=1}^c \frac{\mu(\cV_i)^2}{\mu(\cS^{d-1})^2} \, .
\end{align}
\end{theorem}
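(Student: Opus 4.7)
The plan is to exploit the rotational invariance of the problem to reduce it to a finite-dimensional Gaussian computation, and then to recast the resulting probabilities in terms of orthant probabilities of a suitably structured multivariate normal. First I would observe that the distribution of the random projection matrix $\mat{A} \sim \No(0,1)^{k \times d}$ is invariant under right multiplication by any orthogonal matrix $\mat{R} \in \mathbb{R}^{d \times d}$. Hence for the computation of $p_2$, I can fix $\vc{x} = \vc{e}_1$; and for $p_1$, since $\phi$ is rotationally invariant, I may assume $\vc{x} = \vc{e}_1$ and $\vc{y} = (\cos\theta)\vc{e}_1 + (\sin\theta)\vc{e}_2$. Then $\mat{A}\vc{x} = \vc{a}_1$ and $\mat{A}\vc{y} = (\cos\theta)\vc{a}_1 + (\sin\theta)\vc{a}_2$ where $\vc{a}_1, \vc{a}_2$ are the first two columns of $\mat{A}$, and by construction $\vc{a}_1, \vc{a}_2 \sim \No(\vc{0}, \mat{I}_k)$ independently. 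So the ambient dimension $d$ drops out completely.

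Next I would translate membership in a Voronoi cell into a system of linear inequalities. For a spherical code, $\vc{z} \in \cV_i$ if and only if $\|\vc{z}-\vc{c}_i\| \leq \|\vc{z}-\vc{c}_j\|$ for all $j$. Expanding squared norms, using $\|\vc{c}_i\| = \|\vc{c}_j\| = 1$, this simplifies to $\langle \vc{z}, \vc{c}_i - \vc{c}_j\rangle \geq 0$ for all $j$, which is precisely the inequality $\mat{C}_i \vc{z} \geq \vc{0}$. For $p_2$, since $\mat{A}\vc{x}$ and $\mat{A}\vc{y}$ are independent when $\vc{x},\vc{y}$ are drawn independently from $\cS^{d-1}$, the collision probability splits into a sum of squares $\sum_i \Pr(\vc{a}_1 \in \cV_i)^2 = \sum_i \mu(\cV_i)^2/\mu(\cS^{d-1})^2$, giving the claimed formula.

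For $p_1$, I would stack the two events $\mat{C}_i\vc{a}_1 \geq \vc{0}$ and $\mat{C}_i((\cos\theta)\vc{a}_1 + (\sin\theta)\vc{a}_2) \geq \vc{0}$ into a single $2c$-dimensional inequality acting on the vector $(\vc{a}_1^T, \vc{a}_2^T)^T \in \mathbb{R}^{2k}$. The block matrix on the left naturally factors as a Kronecker product $\bigl[\begin{smallmatrix} 1 & 0 \\ \cos\theta & \sin\theta \end{smallmatrix}\bigr] \otimes \mat{C}_i$ acting on $\vc{z} \sim \No(\vc{0}, \mat{I}_{2k})$. Summing over the code word that both points decode to gives
\begin{align*}
p_1 = \sum_{i=1}^c \Pr_{\vc{z} \sim \No(\vc{0},\mat{I}_{2k})}\!\left(\left\{\begin{bmatrix} 1 & 0 \\ \cos\theta & \sin\theta \end{bmatrix} \otimes \mat{C}_i\right\} \vc{z} \geq \vc{0}\right).
\end{align*}

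Finally, I would push the matrix through the Gaussian distribution using the standard fact that $\mat{B}\vc{z} \sim \No(\vc{0}, \mat{B}\mat{B}^T)$ when $\vc{z} \sim \No(\vc{0}, \mat{I})$. The mixed-product identity $(\mat{A}\otimes \mat{B})(\mat{A}\otimes\mat{B})^T = (\mat{A}\mat{A}^T)\otimes (\mat{B}\mat{B}^T)$ then yields $\mat{\Sigma}_i = \bigl[\begin{smallmatrix} 1 & \cos\theta \\ \cos\theta & 1 \end{smallmatrix}\bigr] \otimes (\mat{C}_i\mat{C}_i^T)$ since the $2\times 2$ block $\bigl[\begin{smallmatrix} 1 & 0 \\ \cos\theta & \sin\theta \end{smallmatrix}\bigr]\bigl[\begin{smallmatrix} 1 & 0 \\ \cos\theta & \sin\theta \end{smallmatrix}\bigr]^T$ is exactly $\bigl[\begin{smallmatrix} 1 & \cos\theta \\ \cos\theta & 1 \end{smallmatrix}\bigr]$. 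The main conceptual step—nothing is really hard here, but it is the place where one must be careful—is setting up the Kronecker factorization so that the two requirements (decoding $\vc{a}_1$ and decoding the $\theta$-rotated combination) are encoded as a single matrix inequality on a joint Gaussian vector; after that the correlation structure follows mechanically from the Kronecker algebra.
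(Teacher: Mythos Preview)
Your proposal is correct and follows essentially the same approach as the paper: fix $\vc{x}=\vc{e}_1$, $\vc{y}=(\cos\theta)\vc{e}_1+(\sin\theta)\vc{e}_2$ by rotational invariance of $\mat{A}$, rewrite Voronoi membership as $\mat{C}_i\vc{z}\geq\vc{0}$, split $p_2$ by independence, and for $p_1$ stack the two decoding constraints into the Kronecker form $\bigl[\begin{smallmatrix}1&0\\\cos\theta&\sin\theta\end{smallmatrix}\bigr]\otimes\mat{C}_i$ before applying $\mat{B}\vc{z}\sim\No(\vc{0},\mat{B}\mat{B}^T)$ and the mixed-product identity. The paper's derivation in Section~3.5 is step-for-step the same.
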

Note again that, by our choice of $\mat{A}$, the parameters $p_1, p_2$ are \emph{independent} of $d$. For large $d \gg k$, choosing $\mat{A} \sim \No(0,1)^{k \times d}$ or choosing $\mat{A}$ to have orthogonal, unit-norm rows is essentially equivalent~\cite{diaconis87, jiang06}. For smaller $d \approx k$ however, these definitions are not the same, and it may be worthwhile to use orthogonal rows instead, to decrease the distortion of mutual distances by the projection $\mat{A}$. For instance, when using dense $d$-dimensional spherical codes for partitions, it may be impossible for two almost-orthogonal vectors to be in the same hash region, and $p_2$ may quickly approach $0$ while $p_1$ remains large. Although our study may lead to advances and insights in the regime of $k \approx d$ as well, the main topic of interest here however is to first project down onto a $k$-dimensional space with $k \ll d$, and then use a suitable spherical code there for partitions.

\subsection{Special spherical codes}

For special classes of spherical codes, the expressions of Theorem~\ref{thm:main} can be further simplified. We state two specific cases of interest below. First, if $\cC$ is such that the Voronoi cells of the vertices each cover an equal part of the sphere, i.e.\ $\mu(\cV_i) = \mu(\cV_j)$ for all $i$ and $j$, then we get the following slightly simplified corollary. Note that given a fixed code size $c$, the value $p_2$ is minimized exactly when all Voronoi cells are of equal size. As we want to minimize $\rho = \rho(p_1, p_2)$ which is an increasing function of $p_2$, this suggests that using such codes may well lead to the best results.\footnote{The stated intuition does not provide a formal proof that $\rho$ is minimized when all Voronoi cells have the same volume, as there is also an intricate dependence of $p_1$ (and $\rho$) on the shape of the Voronoi cells. If all Voronoi cells are of equal size, but their shapes are ``bad'', then the resulting code may not be useful in our framework.}
\begin{corollary}[Collision probabilities for uniform spherical codes] \label{thm:equalweight}
Let $\cC = \{\vc{c}_1, \dots, \vc{c}_c\} \subset \cS^{k-1}$ be a $k$-dimensional \emph{uniform} spherical code, in the sense that $\mu(\cV_i) = \mu(\cV_j)$ for all $i, j$. Let $\mat{C}_i$ and $\cH$ be as in Theorem~\ref{thm:main}. Then for any $\theta \in (0, \frac{\pi}{2})$, $\cH$ is a $(\theta, p_1, p_2)$-sensitive hash family, with
\begin{align}
p_1 &= \sum_{i=1}^c \Pr_{\vc{z} \sim \No(\vc{0},\mat{\Sigma}_i)}\left(\vc{z} \geq \vc{0}\right), \qquad \mat{\Sigma}_i = \begin{bmatrix} 1 & \cos \theta \\ \cos \theta & 1 \end{bmatrix} \otimes (\mat{C}_i \mat{C}_i^T), \qquad p_2 = 1/c.
\end{align}
\end{corollary}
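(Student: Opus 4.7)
The plan is to deduce this corollary directly from Theorem~\ref{thm:main}; the uniformity hypothesis $\mu(\cV_i) = \mu(\cV_j)$ for all $i,j$ only affects the expression for $p_2$, while the expression for $p_1$ carries over verbatim. First I would simply invoke Theorem~\ref{thm:main} to get the stated formula
\begin{align*}
p_1 &= \sum_{i=1}^c \Pr_{\vc{z} \sim \No(\vc{0},\mat{\Sigma}_i)}\left(\vc{z} \geq \vc{0}\right), \qquad \mat{\Sigma}_i = \begin{bmatrix} 1 & \cos \theta \\ \cos \theta & 1 \end{bmatrix} \otimes (\mat{C}_i \mat{C}_i^T),
\end{align*}
without modification, noting that nowhere in the derivation of $p_1$ was the relative size of the Voronoi cells used: the derivation only relied on the spherical-code assumption (to replace distance comparisons by inner-product comparisons) and on $\mat{A} \sim \No(0,1)^{k \times d}$ (to obtain the Gaussian correlation structure).

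Next I would simplify the expression $p_2 = \sum_{i=1}^c \mu(\cV_i)^2 / \mu(\cS^{d-1})^2$ from Theorem~\ref{thm:main}. Since the Voronoi cells induced by $\cC$ tile the ambient space up to a measure-zero set of boundaries, their normalized measures (equivalently, the probabilities $\Pr_{\vc{a} \sim \No(\vc{0}, \mat{I}_k)}(\dec(\vc{a}) = i)$ appearing in the original computation of $p_2$ in Equation~\eqref{eq:p2}) sum to $1$. Under the uniformity hypothesis, all $c$ such probabilities are equal, so each must equal $1/c$. Substituting this into the sum yields
\begin{align*}
p_2 \;=\; \sum_{i=1}^c \left(\frac{1}{c}\right)^2 \;=\; \frac{1}{c},
\end{align*}
which completes the proof. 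There is essentially no obstacle here; the only subtle point worth double-checking is that the boundary set $\bigcup_{i \neq j}(\cV_i \cap \cV_j)$ indeed has measure zero under the Gaussian distribution on $\mathbb{R}^k$ (so that the cells can be treated as a genuine partition in the probabilistic computation), and this is precisely what was noted in the preliminaries when the Voronoi decomposition was introduced.
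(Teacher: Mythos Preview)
Your proposal is correct and matches the paper's approach: the corollary is stated as an immediate specialization of Theorem~\ref{thm:main}, with $p_1$ inherited verbatim and $p_2$ simplified to $1/c$ via the uniformity assumption (since the normalized Voronoi measures sum to $1$ and are all equal). The paper does not spell out a separate proof beyond this observation, so your level of detail is already more than what appears there.
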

In case $\cC$ has many symmetries, and in particular is \textit{isogonal} (vertex-transitive), then Theorem~\ref{thm:main} can be further simplified, as in that case all hash regions are equivalent, and only one of the orthant probabilities needs to be evaluated to determine $p_1, p_2$ and $\rho$. Note that by definition, isogonal codes are a subset of the uniform spherical codes above.
\begin{corollary}[Collision probabilities for isogonal spherical codes] \label{thm:isogonal}
Let $\cC = \{\vc{c}_1, \dots, \vc{c}_c\} \subset \cS^{k-1}$ be a $k$-dimensional \emph{isogonal} spherical code, in the sense that for each $\vc{c}_i, \vc{c}_j \in \cC$, there exists an isometry $f$ with $f(\vc{c}_i) = \vc{c}_j$ and $f(\cC) = \cC$. Let $\mat{C}_i$ and $\cH$ be as in Theorem~\ref{thm:main}. Then for any $\theta \in (0, \frac{\pi}{2})$, $\cH$ is a $(\theta, p_1, p_2)$-sensitive hash family, with:
\begin{align}
p_1 &= c \cdot \Pr_{\vc{z} \sim \No(\vc{0},\mat{\Sigma}_1)}\left(\vc{z} \geq \vc{0}\right), \qquad \mat{\Sigma}_1 = \begin{bmatrix} 1 & \cos \theta \\ \cos \theta & 1 \end{bmatrix} \otimes (\mat{C}_1 \mat{C}_1^T), \qquad p_2 = 1/c.
\end{align}
\end{corollary}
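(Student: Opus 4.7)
The plan is to deduce the corollary from Corollary~\ref{thm:equalweight} by exploiting the isogonal symmetry to collapse the sum over $i$ in Theorem~\ref{thm:main} into a single orthant probability. First I would observe that every isogonal code is uniform: for each pair $\vc{c}_i, \vc{c}_j \in \cC$, the isometry $f_{ij}$ with $f_{ij}(\vc{c}_i) = \vc{c}_j$ and $f_{ij}(\cC) = \cC$ induces a permutation on code words, so from the defining distance description of Voronoi cells $f_{ij}$ maps $\cV_i$ bijectively onto $\cV_j$. Since isometries preserve measure, $\mu(\cV_i) = \mu(\cV_j)$ for all $i, j$, and Corollary~\ref{thm:equalweight} immediately gives $p_2 = 1/c$.

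Next I would reduce the expression for $p_1$ to a single term. Each $f_{ij}$ maps $\cC \subset \cS^{k-1}$ to itself, so its translation part must vanish (the unique centroid of $\cC$, which is fixed by $f_{ij}$, coincides with the origin whenever $\cC$ spans $\mathbb{R}^k$; if not, we restrict to its linear span). Thus $f_{ij}$ is represented by an orthogonal matrix $\mat{F}_{ij}$ with $\mat{F}_{ij} \vc{c}_i = \vc{c}_j$. Writing $\pi_{ij}$ for the induced permutation on $\cC$, the identity $\|\mat{F}_{ij} \vc{x} - \vc{c}_\ell\| = \|\vc{x} - \mat{F}_{ij}^{-1} \vc{c}_\ell\|$ gives $\dec(\mat{F}_{ij} \vc{x}) = \pi_{ij}(\dec(\vc{x}))$, with $\pi_{ij}(i) = j$. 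Applying the change of variables $\vc{a}_k \mapsto \mat{F}_{ij} \vc{a}_k$ for $k = 1, 2$ therefore sends the event $\{\dec(\vc{a}_1) = \dec(\vc{a}_1 \cos\theta + \vc{a}_2 \sin\theta) = i\}$ to the analogous event with $i$ replaced by $j$.

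The key step is then to invoke the orthogonal invariance of the standard multivariate normal: $(\mat{F}_{ij} \vc{a}_1, \mat{F}_{ij} \vc{a}_2)$ has the same joint law as $(\vc{a}_1, \vc{a}_2) \sim \No(\vc{0}, \mat{I}_k) \times \No(\vc{0}, \mat{I}_k)$, so the two events have the same probability. Hence all $c$ terms in $p_1 = \sum_{i=1}^c \Pr_{\vc{z} \sim \No(\vc{0}, \mat{\Sigma}_i)}(\vc{z} \geq \vc{0})$ are equal, giving $p_1 = c \cdot \Pr_{\vc{z} \sim \No(\vc{0}, \mat{\Sigma}_1)}(\vc{z} \geq \vc{0})$. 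The only genuinely subtle point is the bookkeeping step of showing that the isometry $f_{ij}$ is linear (i.e.\ orthogonal) rather than merely affine; once this is done, the rest follows directly from Theorem~\ref{thm:main} and the rotational invariance of the Gaussian.
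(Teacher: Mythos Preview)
Your argument is essentially the one the paper has in mind: the paper does not give a formal proof but only remarks that for isogonal codes ``all hash regions are equivalent, and only one of the orthant probabilities needs to be evaluated,'' together with the observation that isogonal codes are uniform. You have filled in exactly those details in the natural way, so the approach matches.

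One small caveat on the point you yourself flag as subtle. Your justification that $f_{ij}$ is linear --- ``the centroid of $\cC$ coincides with the origin whenever $\cC$ spans $\mathbb{R}^k$'' --- is not correct as stated: e.g.\ a regular pentagon inscribed in $\cS^2$ lying in the plane $z = h$ with $h \neq 0$ spans $\mathbb{R}^3$ but has centroid $(0,0,h)$. A cleaner route that sidesteps the affine-versus-linear issue entirely is to work directly with the covariance matrices. For any isometry $f(\vc{x}) = \mat{F}\vc{x} + \vc{b}$ with $f(\cC) = \cC$ and $f(\vc{c}_1) = \vc{c}_i$, the translation cancels in differences: $\vc{c}_i - \vc{c}_{\pi(j)} = \mat{F}(\vc{c}_1 - \vc{c}_j)$. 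Hence $\mat{C}_i = \mat{P}\,\mat{C}_1\,\mat{F}^T$ for some permutation matrix $\mat{P}$, so $\mat{C}_i \mat{C}_i^T = \mat{P}\,\mat{C}_1 \mat{C}_1^T\,\mat{P}^T$. Coordinate permutations preserve orthant probabilities, so all summands in the expression for $p_1$ from Theorem~\ref{thm:main} coincide; and since $\mu(\cV_i)/\mu(\cS^{k-1}) = \Pr_{\vc{z} \sim \No(\vc{0}, \mat{C}_i \mat{C}_i^T)}(\vc{z} \geq \vc{0})$ are then all equal and sum to $1$, one gets $p_2 = 1/c$ as well, without ever needing $\vc{b} = \vc{0}$.
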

For isogonal codes, computing the query exponent $\rho$ comes down to computing a single orthant probability.

\subsection{Relevant vectors}

Finally, the computations of the collision probabilities $p_1$ and $p_2$ for a given spherical code can sometimes be simplified by reducing the number of conditions that define when a point is decoded to a given code word. Recall that the matrix $\mat{C}_i$, containing difference vectors $\vc{c}_i - \vc{c}_j$ as its rows, originates from the equivalence below:
\begin{align}
\dec(\vc{x}) = i \ \Longleftrightarrow \ \forall \vc{c}_j \in \cC: \|\vc{x} - \vc{c}_i\| \leq \|\vc{x} - \vc{c}_j\| \ \Longleftrightarrow \ \forall \vc{c}_j \in \cC: \langle \vc{x}, \vc{c}_i - \vc{c}_j \rangle \geq 0. 
\end{align}
In many cases however, we can already conclude that $\dec(\vc{x}) = i$ by only computing some of the distances $\|\vc{x} - \vc{c}_j\|$ -- if $\vc{x}$ is closer to $\vc{c}_i$ than to any of the closest neighbors of $\vc{c}_i$ in $\cC$, then it must be closest to $\vc{c}_i$. More generally, we may define $\cR_i \subseteq \cC$ as the set of relevant vectors for $\vc{c}_i$, i.e.\ those vectors $\vc{c}_j \in \cC$ whose Voronoi cells $\cV_j$ share a non-trivial boundary with $\cV_i$. The conditions to verify that decoding $\vc{x}$ results in $i$ may then be simplified:
\begin{align}
\dec(\vc{x}) = i \ \Longleftrightarrow \ \forall \vc{c}_j \in \cR_i: \|\vc{x} - \vc{c}_i\| \leq \|\vc{x} - \vc{c}_j\| \ \Longleftrightarrow \ \forall \vc{c}_j \in \cR_i: \langle \vc{x}, \vc{c}_i - \vc{c}_j \rangle \geq 0. 
\end{align}
Computationally, for evaluating the orthant probabilities in $p_1$ in Theorem~\ref{thm:main}, this allows us to replace the matrix $\mat{C}_i$ (with $c$ rows $\vc{c}_i - \vc{c}_j$ for $\vc{c}_j \in \cC$), with a smaller matrix $\hat{\mat{C}}_i$ (consisting of $|\cR_i|$ rows $\vc{c}_i - \vc{c}_j$ for $\vc{c}_j \in \cR_i$). This can significantly reduce the dimensionality of the resulting orthant probability problem, and this may assist in evaluating $p_1$ more accurately via numerical integration when no closed form expression is available. 
\begin{proposition}[Computing collision probabilities using relevant vectors] \label{prop:relevant}
Theorem~\ref{thm:main} \\and Corollaries~\ref{thm:equalweight} and \ref{thm:isogonal} remain valid when we replace the matrices $\mat{C}_i \in \mathbb{R}^{c \times k}$ by $\hat{\mat{C}}_i \in \mathbb{R}^{|\cR_i| \times k}$, consisting of the rows $\vc{c}_i - \vc{c}_j$ with $\vc{c}_j \in \cR_i$. 
\end{proposition}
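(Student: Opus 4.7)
The plan is to argue that replacing $\mat{C}_i$ by $\hat{\mat{C}}_i$ does not alter the decoding event $\{\dec(\mat{A}\vc{x}) = i\}$, and therefore does not alter any of the probabilities or orthant integrals appearing in Theorem~\ref{thm:main}. Concretely, I would show that the Voronoi cell $\cV_i$ admits two equivalent half-space descriptions, one indexed by all of $\cC \setminus \{\vc{c}_i\}$ and one indexed only by $\cR_i$, and then re-run the derivation leading to Theorem~\ref{thm:main} with the smaller index set.

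For each pair of distinct code words, set $H_{ij} = \{\vc{x} \in \mathbb{R}^k : \langle \vc{x}, \vc{c}_i - \vc{c}_j \rangle \geq 0\}$, so that by definition $\cV_i = \bigcap_{j \neq i} H_{ij}$ is a closed convex polyhedral cone. The definition of relevant vectors in the preliminaries picks out exactly those $\vc{c}_j$ for which $\cV_i \cap \cV_j$ contains more than just the origin, which is the same as saying that the bounding hyperplane $\partial H_{ij}$ supports a facet of $\cV_i$. The standard polyhedral redundancy fact (every closed convex polyhedron equals the intersection of only those of its defining half-spaces whose boundaries support facets) then yields
\begin{align}
\cV_i = \bigcap_{j \neq i} H_{ij} = \bigcap_{\vc{c}_j \in \cR_i} H_{ij},
\end{align}
that is, the constraints $\langle \vc{x}, \vc{c}_i - \vc{c}_j \rangle \geq 0$ for $\vc{c}_j \notin \cR_i$ are implied by those for $\vc{c}_j \in \cR_i$.

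With this equivalence in hand, the rest of the argument copies the proof of Theorem~\ref{thm:main} almost line for line. The event $\{\mat{A}\vc{x} \in \cV_i\}$ becomes $\{\hat{\mat{C}}_i \mat{A}\vc{x} \geq \vc{0}\}$; the joint event for $\theta$-correlated $\vc{x}, \vc{y}$ becomes the $2|\cR_i|$-dimensional system
\begin{align}
\left\{\begin{bmatrix} 1 & 0 \\ \cos \theta & \sin \theta \end{bmatrix} \otimes \hat{\mat{C}}_i\right\} \begin{bmatrix} \vc{a}_1 \\ \vc{a}_2 \end{bmatrix} \geq \vc{0},
\end{align}
whose probability under $\vc{a}_1, \vc{a}_2 \sim \No(\vc{0}, \mat{I}_k)$ is the orthant probability of a centered Gaussian with covariance $\bigl[\begin{smallmatrix} 1 & \cos \theta \\ \cos \theta & 1 \end{smallmatrix}\bigr] \otimes (\hat{\mat{C}}_i \hat{\mat{C}}_i^T)$, via the same Kronecker identity $(\mat{A} \otimes \mat{B})(\mat{A} \otimes \mat{B})^T = (\mat{A}\mat{A}^T) \otimes (\mat{B}\mat{B}^T)$ used in the original proof. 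The same substitution carries over verbatim to the uniform and isogonal corollaries, since they are simply specializations of the same formula.

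The only (mild) obstacle is a bookkeeping one: the notion of a \emph{non-trivial} shared boundary must be formulated carefully in the conical setting, because every two Voronoi cones of a spherical code share the apex $\vc{0}$; the correct reading, already fixed in the preliminaries, is that $\cV_i \cap \cV_j$ contains more than the origin, which corresponds to $\partial H_{ij}$ carrying a positive-dimensional face of $\cV_i$. With this convention in place, the set on which the reduced and the full constraint systems disagree is contained in a finite union of hyperplanes and therefore has Gaussian measure zero, so all probabilities and volume ratios are unchanged by the replacement $\mat{C}_i \mapsto \hat{\mat{C}}_i$.
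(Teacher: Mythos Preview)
Your proposal is correct and follows essentially the same approach as the paper: the paper does not give a separate formal proof of this proposition, but simply states in the preceding discussion the equivalence $\dec(\vc{x}) = i \Leftrightarrow \langle \vc{x}, \vc{c}_i - \vc{c}_j\rangle \geq 0$ for all $\vc{c}_j \in \cR_i$, and treats the proposition as an immediate consequence. Your write-up supplies the missing justification (the polyhedral redundancy fact that non-facet-defining half-spaces are implied by the facet-defining ones) and is careful about the measure-zero boundary issue, but the underlying idea is identical.
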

With this general framework in hand, we can now compute collision probabilities for arbitrary spherical codes, and compute the resulting nearest neighbor exponents $\rho$, given $\cC$ and $\theta$.


\section{One-dimensional codes}
\label{sec:1d}

Let us first illustrate the above framework by rederiving the hyperplane locality-sensitive hashing results of Charikar~\cite{charikar02}. For the $1$-dimensional unit sphere $\cS^0 = \{(-1), (1)\}$, there is only one choices for a spherical code $\cC \subseteq \cS^0$ of size at least two, namely $\cC = \{(-1), (1)\}$. This code is an isogonal code, and $\cR_{(1)} = \{(-1)\}$ and $\cR_{(-1)} = \{(1)\}$ are the relevant vectors. As in Proposition~\ref{prop:relevant}, $\hat{\mat{C}}_1$ consists of one row, containing the vector $(1) - (-1) = (2)$. By Corollary~\ref{thm:isogonal} and Proposition~\ref{prop:relevant}, we can compute the collision probabilities, after projecting down using a random vector $\mat{A} \sim \No(0,1)^{1 \times d}$, as:
\begin{align}
p_1 &= 2 \cdot \Pr_{\vc{z} \sim \No(\vc{0},\mat{\Sigma}_1)}\left(\vc{z} \geq \vc{0}\right), \qquad \mat{\Sigma}_1 = \begin{bmatrix} 1 & \cos \theta \\ \cos \theta & 1 \end{bmatrix} \otimes (4), \qquad p_2 = 1/2.
\end{align}
For $p_1$, observe that $\Pr_{\vc{z} \sim \No(\vc{0},\mat{\Sigma}_1)}\left(\vc{z} \geq \vc{0}\right) = \Pr_{\vc{z} \sim \No(\vc{0}, \mu \cdot \mat{\Sigma}_1)}\left(\vc{z} \geq \vc{0}\right)$ for all scalars $\mu$. We can thus eliminate the factor $4$, and we are left with a standard orthant probability for the bivariate normal distribution with covariance $\cos \theta$. The derivation of a closed-form formula for the orthant probability of the bivariate normal distribution is often attributed to Sheppard in the late 1800s~\cite{sheppard99}, who showed that the orthant probability for the bivariate normal distribution with Pearson correlation coefficient $a$ is:
\begin{align}
\mat{\Sigma} = \begin{bmatrix} 1 \ \ & a \\ a \ \ & 1 \end{bmatrix} \qquad \implies \qquad \Pr_{(z_1, z_2) \sim \No(\vc{0},\mat{\Sigma})}\left(z_1 \geq 0, z_2 \geq 0\right) = \frac{\pi - \arccos a}{2 \pi} \, . 
\end{align}
For our application, this implies the well-known result of Charikar~\cite{charikar02}, stated below. 
\begin{proposition}[Collision probabilities for antipodal codes] \label{prop:1d}
Let $\cC$ consist of two antipodal points on the unit sphere. Then the corresponding project--and--partition hash family $\cH$ is $(\theta, p_1, p_2)$-sensitive, with:
\begin{align}
p_1 = 1 - \frac{\theta}{\pi} \, , \qquad p_2 = \frac{1}{2} \, .
\end{align}
\end{proposition}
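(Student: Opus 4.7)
The plan is to apply Corollary~\ref{thm:isogonal} together with the relevant-vectors simplification of Proposition~\ref{prop:relevant}, reducing the computation of $p_1$ to a single bivariate orthant probability that can be evaluated in closed form using Sheppard's classical result already quoted in the preceding display. Since $\cC = \{(-1), (1)\}$ is trivially isogonal (the antipodal map swaps the two code words and fixes $\cC$ setwise), and each code word has exactly one relevant vector (the other code word), the corollary and proposition together apply with $\hat{\mat{C}}_1 \in \mathbb{R}^{1 \times 1}$ consisting of the single entry $(1)-(-1) = 2$. Thus $\hat{\mat{C}}_1 \hat{\mat{C}}_1^T = (4)$.

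First I would compute $p_2$: because the two antipodal half-spheres have equal Haar measure, the uniform code yields $p_2 = 1/c = 1/2$ directly from Corollary~\ref{thm:isogonal}. Next, for $p_1$, plugging into the corollary gives
\begin{align*}
p_1 = 2 \cdot \Pr_{\vc{z} \sim \No(\vc{0}, \mat{\Sigma}_1)}\!\left(\vc{z} \geq \vc{0}\right), \qquad \mat{\Sigma}_1 = \begin{bmatrix} 4 & 4 \cos \theta \\ 4 \cos \theta & 4 \end{bmatrix}.
\end{align*}
I would then invoke the observation made in the paragraph preceding the proposition: the orthant probability $\Pr(\vc{z} \geq \vc{0})$ is invariant under positive scaling of the covariance matrix, since scaling a multivariate Gaussian by $\sqrt{\mu} > 0$ leaves the sign pattern of its coordinates unchanged. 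Consequently the factor of $4$ can be eliminated, reducing $\mat{\Sigma}_1$ to the standard correlation matrix with off-diagonal entry $a = \cos\theta$.

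At this point I would apply Sheppard's formula, as stated in the excerpt: the orthant probability for the bivariate normal with correlation $a$ is $(\pi - \arccos a)/(2\pi)$. With $a = \cos\theta$ and $\theta \in (0, \tfrac{\pi}{2})$, we have $\arccos(\cos\theta) = \theta$, so the orthant probability equals $(\pi - \theta)/(2\pi)$. Multiplying by the factor $2$ from the isogonal corollary yields
\begin{align*}
p_1 = 2 \cdot \frac{\pi - \theta}{2 \pi} = 1 - \frac{\theta}{\pi},
\end{align*}
completing the proof. There is no real obstacle here — every ingredient (the framework, the reduction via relevant vectors, the scale-invariance of orthant probabilities, and Sheppard's formula) has already been set up in the earlier sections, so the only work is to assemble them and verify the straightforward identifications $\mat{C}_1 = (2)$ and $c = 2$.
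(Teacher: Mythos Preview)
Your proposal is correct and follows essentially the same approach as the paper: apply Corollary~\ref{thm:isogonal} with the relevant-vectors reduction of Proposition~\ref{prop:relevant} to obtain a bivariate orthant probability, remove the scalar factor by scale invariance, and evaluate via Sheppard's formula. The paper's derivation is line-for-line the same, so there is nothing to add.
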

For the nearest neighbor query exponent $\rho = \log p_1 / \log p_2$, a simple computation leads to e.g.\ $\rho = \log_2(3/2) \approx 0.5850$ for $\theta = \frac{\pi}{3}$. Table~\ref{tab:overview} lists several query exponents $\rho$ for this hash family, for different parameters $\theta \in \frac{\pi}{12} \{1, 2, 3, 4, 5\}$. Note that for $\theta \to 0$ we have $p_1 \to 1$ and $\rho \to 0$, while for $\theta \to \frac{\pi}{2}$ we have $p_1 \to p_2 = \frac{1}{2}$ and $\rho \to 1$, as expected.


\section{Two-dimensional codes}
\label{sec:2d}

The simplest previously unstudied case is to project down onto a two-dimensional subspace through a random projection matrix $\mat{A} \sim \No(0,1)^{2 \times d}$, and to use a suitable spherical code on the two-dimensional sphere (circle) to divide these projected vectors into hash buckets. Using two (antipodal) points on the circle would be equivalent to using the one-dimensional hyperplane hashing described above, so the non-trivial cases start at spherical codes of size at least three. 

\subsection{Two-dimensional isogonal spherical codes}

The following theorem summarizes the collision probabilities for what is arguably the most natural choice for a spherical code on the circle, which is to use any number of equidistributed points on the circle (i.e.\ to use the vertices of a regular polygon to partition the sphere).
\begin{theorem}[Collision probabilities for regular polygons] \label{thm:2d}
Let $\cC \subset \cS^1$ consist of the vertices of the regular $c$-gon, for $c \geq 2$, and let $\theta \in (0, \frac{\pi}{2})$. Then the corresponding project--and--partition hash family $\cH$ is $(\theta, p_1, p_2)$-sensitive, with:
\begin{align}
p_1 &= \frac{1}{c} + c \left(\frac{\pi - \theta}{2 \pi}\right)^2 - c \left(\frac{\arccos(- \cos \theta \cos \frac{2 \pi}{c})}{2 \pi}\right)^2, \qquad \qquad p_2 = 1/c. \label{eq:2d-b}
\end{align}
\end{theorem}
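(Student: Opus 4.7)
The regular $c$-gon is vertex-transitive, so Corollary~\ref{thm:isogonal} immediately gives $p_2 = 1/c$ and reduces the problem to $p_1 = c \cdot \Pr(\vc{y}_1, \vc{y}_2 \in \cV_1)$, where $\cV_1$ is the Voronoi cone of angular width $\beta := 2\pi/c$ around $\vc{c}_1$, and $\vc{y}_i = \mat{A}\vc{x}_i$ are the projections of two fixed unit vectors $\vc{x}_1, \vc{x}_2$ at angle $\theta$. These projections are jointly 2D Gaussian with $\mathrm{Cov}(\vc{y}_1, \vc{y}_2) = (\cos\theta)\,\mat{I}_2$, a structure which crucially commutes with rotations. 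My plan is to compute this 2D probability by passing to polar coordinates and integrating out the radii, obtaining a one-dimensional integral over angular differences that can be reduced to a single, tractable antiderivative.

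Writing $\vc{y}_i = (r_i\cos\phi_i, r_i\sin\phi_i)$ and $a := \cos\theta$, rotational invariance of the joint distribution makes $\phi_1$ uniform on $[0,2\pi)$ and makes $\psi := \phi_2 - \phi_1$ independent of $\phi_1$, with some density $g$. The event $\vc{y}_i \in \cV_1$ is just $\phi_i \in J := [-\beta/2,\beta/2]$, so the change of variables $(\phi_1,\phi_2) \mapsto (\phi_1,\psi)$ plus Fubini reduces the problem to the one-dimensional integral
\[\Pr(\phi_1,\phi_2 \in J) \;=\; \frac{1}{\pi}\int_0^\beta g(v)(\beta - v)\,dv \;=\; \frac{1}{\pi}\int_0^\beta G_*(v)\,dv, \qquad G_*(v) := \int_0^v g(s)\,ds.\]
It therefore suffices to compute $g$ and its antiderivative $G_*$ in closed form.

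Integrating the radii out of the 4D Gaussian density (using $\langle\vc{y}_1,\vc{y}_2\rangle = r_1r_2\cos\psi$, the substitution $(r_1,r_2) = (r\cos\alpha, r\sin\alpha)$ to reduce the $r$-part to a $\Gamma$-integral, and a Weierstrass substitution to handle the resulting $\int_0^\pi \sin u/(1 - a\cos\psi\cdot\sin u)^2\,du$) yields the explicit expression
\[g(\psi) \;=\; \frac{1-a^2}{2\pi(1 - a^2\cos^2\psi)} \;+\; \frac{(1-a^2)\,a\cos\psi\,\arccos(-a\cos\psi)}{2\pi(1 - a^2\cos^2\psi)^{3/2}}.\]
The crux is now to find $G_*$. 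The rational factor $(1-a^2)\,a\cos v/(1-a^2\cos^2 v)^{3/2}$ in the second summand of $g$ has antiderivative $a\sin v/\sqrt{1-a^2\cos^2 v}$ (seen via $w = a\sin v$, under which $1-a^2\cos^2 v = (1-a^2) + w^2$ and the integral becomes $\int k^2(k^2+w^2)^{-3/2}\,dw$ with $k^2 = 1-a^2$). Integrating the second summand of $g$ by parts against $\arccos(-a\cos v)$ produces a boundary term plus an integral of $a^2\sin^2 v/[2\pi(1-a^2\cos^2 v)]$; writing $a^2\sin^2 v = (1-a^2\cos^2 v) - (1-a^2)$ splits this into $v/(2\pi)$ minus exactly the antiderivative of the \emph{first} summand of $g$, so these contributions cancel and one is left with
\[G_*(v) \;=\; \frac{v}{2\pi} \;+\; \frac{a\sin v\,\arccos(-a\cos v)}{2\pi\sqrt{1-a^2\cos^2 v}}.\]

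The remaining integration $\int_0^\beta G_*(v)\,dv$ then splits cleanly: the linear piece contributes $\beta^2/(4\pi)$, while the substitution $u = -a\cos v$ turns the second piece into $\int_{-a}^{-a\cos\beta}\arccos(u)/[2\pi\sqrt{1-u^2}]\,du$, and since $\arccos(u)/\sqrt{1-u^2} = -\tfrac{d}{du}\bigl[\tfrac12(\arccos u)^2\bigr]$ this evaluates in closed form to $\tfrac{1}{4\pi}\bigl[(\arccos(-a))^2 - (\arccos(-a\cos\beta))^2\bigr]$. Dividing by $\pi$, multiplying by $c$, and substituting $\arccos(-a) = \pi - \theta$ and $\beta = 2\pi/c$ yields exactly the claimed formula for $p_1$. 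The main obstacle is the integration by parts that produces $G_*$: without the algebraic miracle $(1-a^2\cos^2 v) - a^2\sin^2 v = 1-a^2$, which forces the cancellation between the two summands of $g$, the antiderivative would not close in elementary terms and the argument would not go through.
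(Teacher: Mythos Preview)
Your proof is correct and takes a genuinely different route from the paper. The paper first uses Corollary~\ref{thm:isogonal} and Lemma~\ref{lem:2d1} to express $p_1$ as $c$ times a quadrivariate orthant probability with correlation matrix $\begin{bmatrix} 1 & \cos\theta \\ \cos\theta & 1\end{bmatrix} \otimes \begin{bmatrix} 1 & -\cos\frac{2\pi}{c} \\ -\cos\frac{2\pi}{c} & 1\end{bmatrix}$, and then invokes a classical closed-form evaluation of such tensor-structured orthant probabilities due to Cheng (Lemma~\ref{lem:2d2}). You instead exploit the rotational invariance of the joint Gaussian $(\vc{y}_1,\vc{y}_2)$ to pass to polar coordinates, derive the density $g(\psi)$ of the angular difference $\phi_2-\phi_1$ explicitly, and then carry out the remaining one-dimensional integrations by hand. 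The paper's route is much shorter once Cheng's lemma is available and stays inside the orthant-probability framework used elsewhere in the paper; your route is fully self-contained and elementary (no external orthant-probability results), and as a by-product yields the explicit CDF $G_*(v)$ of the angular difference, which has independent interest. In effect you have re-derived Cheng's formula for this particular correlation structure from scratch via the polar decomposition.
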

To prove the above result, we will first derive the shape of the correlation matrix below.
\begin{lemma}[The correlation matrix for regular polygons] \label{lem:2d1}
Let $\cC \subset \cS^1$ consist of the vertices of the regular $c$-gon, for $c \geq 2$. Then the correlation matrix $\mat{\Sigma}_1$ in Theorem~\ref{thm:isogonal} (Prop.~\ref{prop:relevant}) satisfies:
\begin{align}
\mat{\Sigma}_1 = 2 (1 - \cos \tfrac{2 \pi}{c}) \cdot \begin{bmatrix} 1 & \cos \theta \\ \cos \theta & 1 \end{bmatrix} \otimes \begin{bmatrix} 1 & -\cos \frac{2 \pi}{c} \\ -\cos \frac{2 \pi}{c} & 1 \end{bmatrix}.
\end{align}
\end{lemma}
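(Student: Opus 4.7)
The plan is to apply Corollary~\ref{thm:isogonal} together with Proposition~\ref{prop:relevant} and reduce everything to a short trigonometric computation. First I would fix convenient coordinates: place the regular $c$-gon with $\vc{c}_j = (\cos \tfrac{2\pi(j-1)}{c}, \sin \tfrac{2\pi(j-1)}{c})$ for $j = 1, \dots, c$, so that in particular $\vc{c}_1 = (1,0)$. This code is isogonal (the cyclic rotations by $\tfrac{2\pi}{c}$ act transitively on the vertices and preserve $\cC$), so Corollary~\ref{thm:isogonal} applies, and only the block corresponding to $\vc{c}_1$ needs to be computed.

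Next I would identify $\cR_1$. Because the Voronoi cells of a planar isogonal code are infinite cones bounded by the perpendicular bisectors of consecutive pairs of code words, the cell $\cV_1$ shares a non-trivial boundary only with the two cells of its two neighbors $\vc{c}_2$ and $\vc{c}_c$; all other bisectors are cut off by these two. Hence $\cR_1 = \{\vc{c}_2, \vc{c}_c\}$, and by Proposition~\ref{prop:relevant} the matrix $\hat{\mat{C}}_1 \in \mathbb{R}^{2 \times 2}$ has rows $\vc{c}_1 - \vc{c}_2$ and $\vc{c}_1 - \vc{c}_c$, i.e.\
\begin{align*}
\hat{\mat{C}}_1 = \begin{bmatrix} 1 - \cos \tfrac{2\pi}{c} & -\sin \tfrac{2\pi}{c} \\ 1 - \cos \tfrac{2\pi}{c} & \sin \tfrac{2\pi}{c} \end{bmatrix}.
\end{align*}

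Now I would directly compute the Gram matrix $\hat{\mat{C}}_1 \hat{\mat{C}}_1^T$. The diagonal entries are both $(1-\cos\tfrac{2\pi}{c})^2 + \sin^2 \tfrac{2\pi}{c} = 2(1 - \cos \tfrac{2\pi}{c})$ using $\sin^2 x + \cos^2 x = 1$. The off-diagonal entries are $(1-\cos\tfrac{2\pi}{c})^2 - \sin^2 \tfrac{2\pi}{c} = 2\cos^2 \tfrac{2\pi}{c} - 2\cos\tfrac{2\pi}{c} = -2\cos \tfrac{2\pi}{c}\,(1 - \cos \tfrac{2\pi}{c})$. Factoring out the common scalar $2(1-\cos\tfrac{2\pi}{c})$ yields
\begin{align*}
\hat{\mat{C}}_1 \hat{\mat{C}}_1^T = 2(1 - \cos \tfrac{2\pi}{c}) \begin{bmatrix} 1 & -\cos \tfrac{2\pi}{c} \\ -\cos \tfrac{2\pi}{c} & 1 \end{bmatrix}.
\end{align*}

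Finally, substituting this into the expression $\mat{\Sigma}_1 = \bigl[\begin{smallmatrix} 1 & \cos\theta \\ \cos\theta & 1 \end{smallmatrix}\bigr] \otimes (\hat{\mat{C}}_1 \hat{\mat{C}}_1^T)$ from Corollary~\ref{thm:isogonal} (as modified by Proposition~\ref{prop:relevant}) and pulling the scalar $2(1-\cos\tfrac{2\pi}{c})$ out of the Kronecker product gives the claimed identity. The only mildly nontrivial step is the identification of the relevant vectors, but this is immediate in the planar isogonal case; the rest is a short direct computation.
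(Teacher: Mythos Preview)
Your proof is correct and follows essentially the same approach as the paper: fix $(1,0)$ as the distinguished vertex, identify its two neighboring relevant vectors, and compute the $2\times 2$ Gram matrix $\hat{\mat{C}}_1\hat{\mat{C}}_1^T$ directly via the same trigonometric simplifications. The only cosmetic difference is indexing (you call the vertex at $(1,0)$ $\vc{c}_1$ while the paper calls it $\vc{c}_c$), but the computation is identical.
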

\begin{proof}
Without loss of generality\footnote{Note that although the matrices $\mat{C}_i$ depend on the absolute positioning of the code words, the dot products $\mat{C}_i \mat{C}_i^T$ only depends on the relative positions between code words, and is invariant under rotations of $\cC$.}, let us fix one vertex of $\cC$ at $(1, 0)$. For ease of notation, let us write $c_j = \cos \frac{2 j \pi}{c}$ and $s_j = \sin \frac{2 j \pi}{c}$, so that $\cC = \{(c_j, s_j): j = 1, \dots, c\}$. Now, a point in $\mathbb{R}^2$ is closest to $\vc{c}_c = (1,0)$ out of all code words in $\cC$ iff it lies closer to $(1,0)$ than to the relevant vectors of $(1,0)$, which are its left and right neighbors on the circle, $\cR_{(1,0)} = \{(c_1, s_1), (c_{c-1}, s_{c-1})\} = \{(c_1, s_1), (c_1, -s_1)\}$. Let $\hat{\mat{C}}_i$ as in Proposition~\ref{prop:relevant} denote the matrix $\mat{C}_i$ reduced to its relevant rows. For the matrix $\hat{\mat{D}}_c = \hat{\mat{C}}_c \hat{\mat{C}}_c^T \in \mathbb{R}^{2 \times 2}$, we therefore obtain:
\begin{align}
\hat{\mat{D}}_c = \begin{bmatrix} 1 - c_1 \ \ \ & 0 - s_1 \\ 1 - c_1 \ \ \ & 0 + s_1 \end{bmatrix} \cdot \begin{bmatrix} 1 - c_1 \ \ \ & 0 - s_1 \\ 1 - c_1 \ \ \ & 0 + s_1 \end{bmatrix}^T = 2 (1 - \cos \tfrac{2 \pi}{c}) \cdot \begin{bmatrix} 1 & -c_1 \\ -c_1 & 1 \end{bmatrix}.
\end{align}
Due to the spherical/circular symmetry of $\cC$, all matrices $\hat{\mat{D}}_i = \hat{\mat{C}}_i \hat{\mat{C}}_i^T$ are equal to $\hat{\mat{D}}_c$. Together with Theorem~\ref{thm:isogonal}, this leads to the given expression for $\mat{\Sigma} = \mat{\Sigma}_1$. \qed
\end{proof}
Orthant probabilities for general quadrivariate normal distributions are difficult to solve, but for certain specific forms of $\mat{\Sigma}$, explicit formulas for the resulting orthant probabilities in terms of the off-diagonal entries of $\mat{\Sigma}$ are known. In the 1960s, Cheng~\cite{cheng68, cheng69} used a clever path integration technique, together with Plackett's reduction formula~\cite{plackett54}, to ultimately end up with the following expressions for the orthant probabilities for matrices of the above form.
\begin{lemma}[Orthant probabilities for a class of quadrivariate normal distributions] {\cite[Appendix]{cheng68}} \label{lem:2d2}
Let $a, b \in \mathbb{R}$, and let the correlation matrix $\mat{\Sigma} = \mat{\Sigma}_{a,b}$ be of the following form:
\begin{align}
\mat{\Sigma} = \begin{bmatrix} 1 \ \ & a \\ a \ \ & 1 \end{bmatrix} \otimes \begin{bmatrix} 1 \ \ & b \\ b \ \ & 1 \end{bmatrix} = \begin{bmatrix} 1 & a & b & ab \\ a & 1 & ab & b \\ b & ab & 1 & a \\ ab & b & a & 1 \end{bmatrix}.
\end{align}
Then the orthant probability $\Phi(a,b) = \Pr_{\vc{z} \sim \No(\vc{0},\mat{\Sigma})}\left(\vc{z} \geq \vc{0}\right)$ for the correlation matrix $\mat{\Sigma}$ satisfies:
\begin{align}
\Phi(a,b) &= \frac{1}{16} + \frac{1}{4 \pi} \left(\arcsin a + \arcsin b + \arcsin ab\right) + \frac{1}{4 \pi^2} \left(\arcsin^2 a + \arcsin^2 b - \arcsin^2 ab\right).
\end{align}
\end{lemma}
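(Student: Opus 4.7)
The plan is to verify the closed-form expression for $\Phi(a,b)$ via Plackett's reduction formula, which for a multivariate normal orthant probability with correlation matrix $R$ states
$$\frac{\partial}{\partial \rho_{ij}} \Pr_{\vc{z} \sim \No(\vc{0},R)}(\vc{z} \geq \vc{0}) = \frac{1}{2 \pi \sqrt{1 - \rho_{ij}^2}} \cdot \Pr\!\left(z_k \geq 0 \; \forall k \neq i,j \,\big|\, z_i = z_j = 0\right).$$
For a quadrivariate orthant, each such derivative reduces to a bivariate orthant probability, which by Sheppard's formula equals $\tfrac{1}{4} + \tfrac{1}{2\pi} \arcsin \rho^*$, with $\rho^*$ the conditional correlation of the remaining two coordinates. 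Combined with the boundary value $\Phi(0,0) = (1/2)^4 = 1/16$ (since $\mat{\Sigma}|_{a=b=0} = \mat{I}_4$), this should pin down $\Phi(a,b)$ by integration along a path in the $(a,b)$-plane.

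First I would set up the chain rule. Since $\rho_{12} = \rho_{34} = a$, $\rho_{13} = \rho_{24} = b$, and $\rho_{14} = \rho_{23} = ab$, we have
$$\frac{\partial \Phi}{\partial a} = \frac{\partial \Phi}{\partial \rho_{12}} + \frac{\partial \Phi}{\partial \rho_{34}} + b \, \frac{\partial \Phi}{\partial \rho_{14}} + b \, \frac{\partial \Phi}{\partial \rho_{23}},$$
with an analogous identity for $\partial \Phi / \partial b$. The symmetry $(z_1,z_2,z_3,z_4) \mapsto (z_3,z_4,z_1,z_2)$ of $\mat{\Sigma}$ pairs the first two terms, and a similar symmetry pairs the last two, cutting the work in half.

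Next I would compute the conditional $2\times 2$ covariance of the remaining pair via $\mat{\Sigma}_{22\mid 1} = \mat{\Sigma}_{22} - \mat{\Sigma}_{21}\mat{\Sigma}_{11}^{-1}\mat{\Sigma}_{12}$ and normalize to extract the conditional correlation $\rho^*$. The Kronecker structure $\mat{\Sigma} = \mat{\Sigma}_a \otimes \mat{\Sigma}_b$ should cause these expressions to collapse to simple rational functions of $a$ and $b$; for example, conditioning on $z_1 = z_2 = 0$ leaves $(z_3, z_4)$ with a conditional correlation that is an explicit function of $a, b$. Plugging these into Plackett and Sheppard, each derivative term produces a prefactor $1/\sqrt{1-\rho_{ij}^2}$ times a constant plus an $\arcsin$ of a ratio involving $a, b$. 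Integrating in $a$ from $0$, using the antiderivative identity
$$\int \frac{\arcsin u}{\sqrt{1-u^2}} \, du = \tfrac{1}{2} \arcsin^2 u + \mathrm{const},$$
produces exactly the $\arcsin^2$ contributions in the claim, while the constant parts give the linear $\arcsin$ terms.

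The main obstacle will be Step 3: the conditional correlations emerging from the Plackett reduction are messy to write down, and they must simplify enough that the integrals in Step 4 close up into the clean combination $\arcsin a + \arcsin b + \arcsin ab$ and $\arcsin^2 a + \arcsin^2 b - \arcsin^2 ab$ (note the minus sign, which should arise from a sign in the conditional covariance when both $\rho_{14}$ and $\rho_{23}$ are differentiated). To avoid getting lost in bookkeeping, I would actually execute the proof in the reverse direction: define $F(a,b)$ to be the right-hand side of the claimed identity, check $F(0,0) = 1/16$ by inspection, and then verify that $\partial_a F$ and $\partial_b F$ coincide with the Plackett expressions above. Since both $F$ and $\Phi$ are continuous on $[-1,1]^2$ and agree at $(0,0)$, matching partial derivatives forces $F = \Phi$ throughout. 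As a sanity check one can verify $\Phi(1,1) = 1/2$, $\Phi(a, 0) = \tfrac{1}{4}(\tfrac{1}{2} + \tfrac{\arcsin a}{\pi})\cdot 2 = \tfrac{1}{4} + \tfrac{\arcsin a}{2\pi}$ (from independence of the pairs $(z_1, z_2)$ and $(z_3, z_4)$ combined with Sheppard), and $\Phi(1, b) = \tfrac{1}{4} + \tfrac{\arcsin b}{2\pi}$, all of which match the stated formula.
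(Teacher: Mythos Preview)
Your approach is essentially the one the paper attributes to Cheng: the paper does not prove this lemma itself but cites \cite{cheng68}, noting that Cheng ``used a clever path integration technique, together with Plackett's reduction formula'' to obtain the stated closed form. Your plan---Plackett's identity to reduce each partial derivative to a bivariate orthant (hence Sheppard), then integrate along a path in the $(a,b)$-plane from the boundary value $\Phi(0,0)=1/16$---is exactly that strategy, and the reverse-verification variant (differentiate the claimed formula and match) is a clean way to execute it.

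One small slip in your sanity checks: when $b=0$ the matrix is block-diagonal with two independent copies of the bivariate block, so
\[
\Phi(a,0)=\left(\tfrac{1}{4}+\tfrac{\arcsin a}{2\pi}\right)^{2}=\tfrac{1}{16}+\tfrac{\arcsin a}{4\pi}+\tfrac{\arcsin^{2}a}{4\pi^{2}},
\]
not $\tfrac{1}{4}+\tfrac{\arcsin a}{2\pi}$. This agrees with the claimed formula and does not affect your main argument.
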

The orthant probability $\Phi(a,b)$ above can be equivalently written in the following form:
\begin{align}
\Phi(a,b) &= \left(\frac{\arccos (-a)}{2 \pi}\right)^2 + \left(\frac{\arccos (-b)}{2 \pi}\right)^2 - \left(\frac{\arccos (a b)}{2 \pi}\right)^2. 
\end{align}
Note that $a, b, ab$ are the off-diagonal entries of $\mat{\Sigma}$, or equivalently, the arguments of the arccosines $-a, -b, ab$ are the off-diagonal entries of $\mat{\Sigma}^{-1}$, thus showing similarities with the orthant probabilities for $2 \times 2$ matrices related to antipodal spherical codes, where $\Phi(a) = \arccos(-a)/(2 \pi)$. An interesting open problem, as we will see later, is to establish a pattern in the orthant probabilities for the matrices $\mat{\Sigma}$ appearing in Theorem~\ref{thm:main} and Corollary~\ref{thm:isogonal}.

Now, with the above lemma in hand, we are ready to prove Theorem~\ref{thm:2d}.
\begin{proof}[Proof of Theorem~\ref{thm:2d}]
From Theorem~\ref{thm:isogonal}, and the observation that regular polygons form isogonal codes on the sphere, we obtain the correlation matrix $\mat{\Sigma}$ as described by Lemma~\ref{lem:2d1}. Recall that orthant probabilities are invariant under scalar multiplication of the random vector $\vc{z}$; removing the leading scalar factor of $\mat{\Sigma}$ in Lemma~\ref{lem:2d1}, the collision probability $p_1$ remains the same. After removing this factor, we obtain a matrix $\mat{\Sigma}$ of the form of Lemma~\ref{lem:2d2}, where $a = \cos \theta$ and $b = - \cos \frac{2 \pi}{c}$. Through some elementary trigonometric reductions, we then obtain the stated result, where we obtain an additional leading factor $c$ as $\Pr(h(\vc{x}) = h(\vc{y})) = c \cdot \Pr(h(\vc{x}) = h(\vc{y}) = 1)$. \qed
\end{proof}
Let us draw some observations from Theorem~\ref{thm:2d}. First, observe that if we substitute $c = 2$, i.e.\ using a regular $2$-gon, we obtain $p_1 = 1 - \theta/\pi$ and $p_2 = 1/2$, as we expect from Proposition~\ref{prop:1d}. Similarly, substituting $c = 4$ into Theorem~\ref{thm:2d} we obtain $p_1 = (1 - \theta/\pi)^2$ and $p_2 = (1/2)^2$, matching the combined collision probability of two independent antipodal codes. As the nearest neighbor exponents $\rho = \log p_1 / \log p_2 = \log p_1^2 / \log p_2^2$ are invariant under applying the same exponentiation to both $p_1$ and $p_2$, both the regular $2$-gon and $4$-gon (square) achieve the same performance when using Gaussian projection matrices $\mat{A}$.

For other values of $c \geq 2$, the collision probabilities do not simplify much further than the given expression in Theorem~\ref{thm:2d} with $c$ replaced by the corresponding number. Looking at the nearest neighbor exponents $\rho = \log p_1 / \log p_2$ for e.g.\ $\theta = \frac{\pi}{3}$ and $\theta = \frac{\pi}{4}$, we obtain the two graphs in Figure~\ref{fig:2d1}. As we can see, the query exponents for $c = 2$ and $c = 4$ are the same, and using very large polygons does not improve the performance; any polygon with more than $4$ vertices achieves worse query exponents $\rho$ than antipodal locality-sensitive hashing. For $c = 3$ however we observe a non-trivial minimum: using equilateral triangles, the query time complexity for nearest neighbor searching is lower than for $c = 2, 4$.

\begin{figure}[t]
\includegraphics[width = 6.8cm]{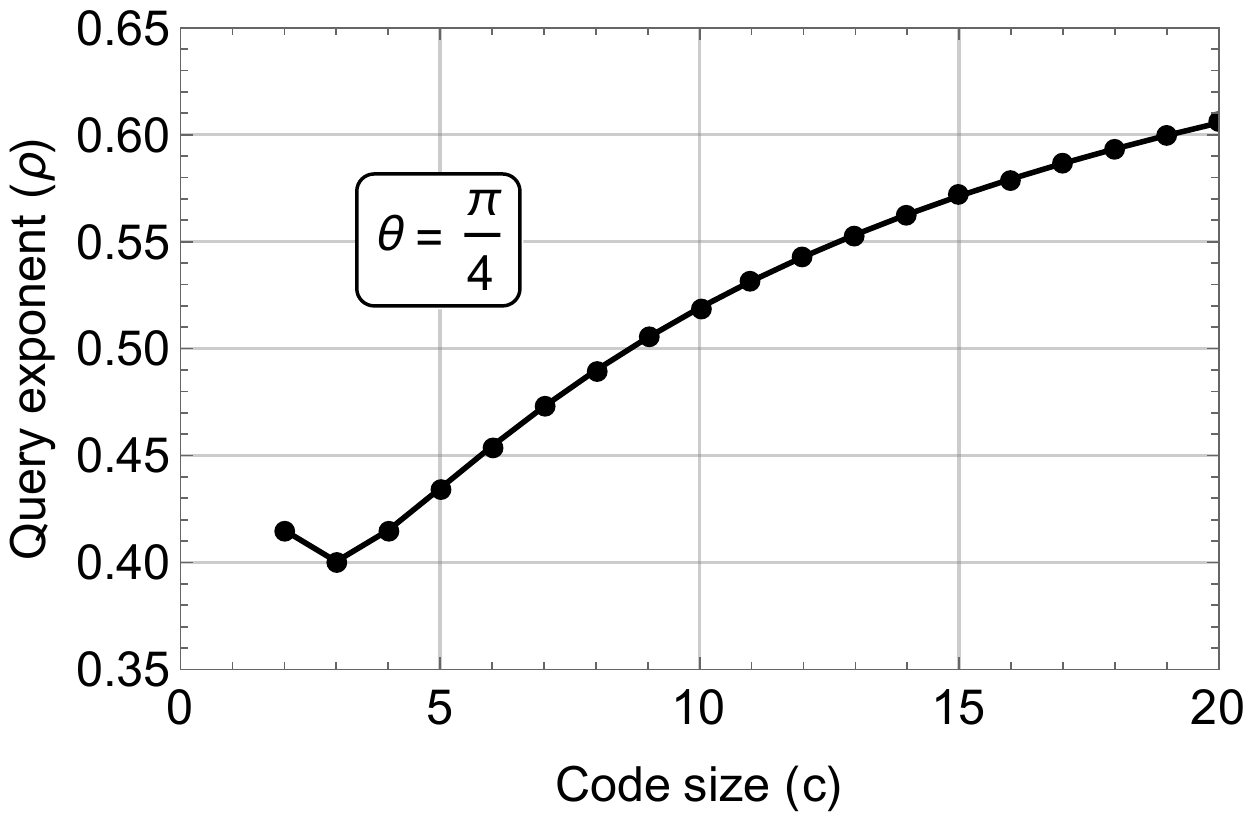} \ \includegraphics[width = 6.8cm]{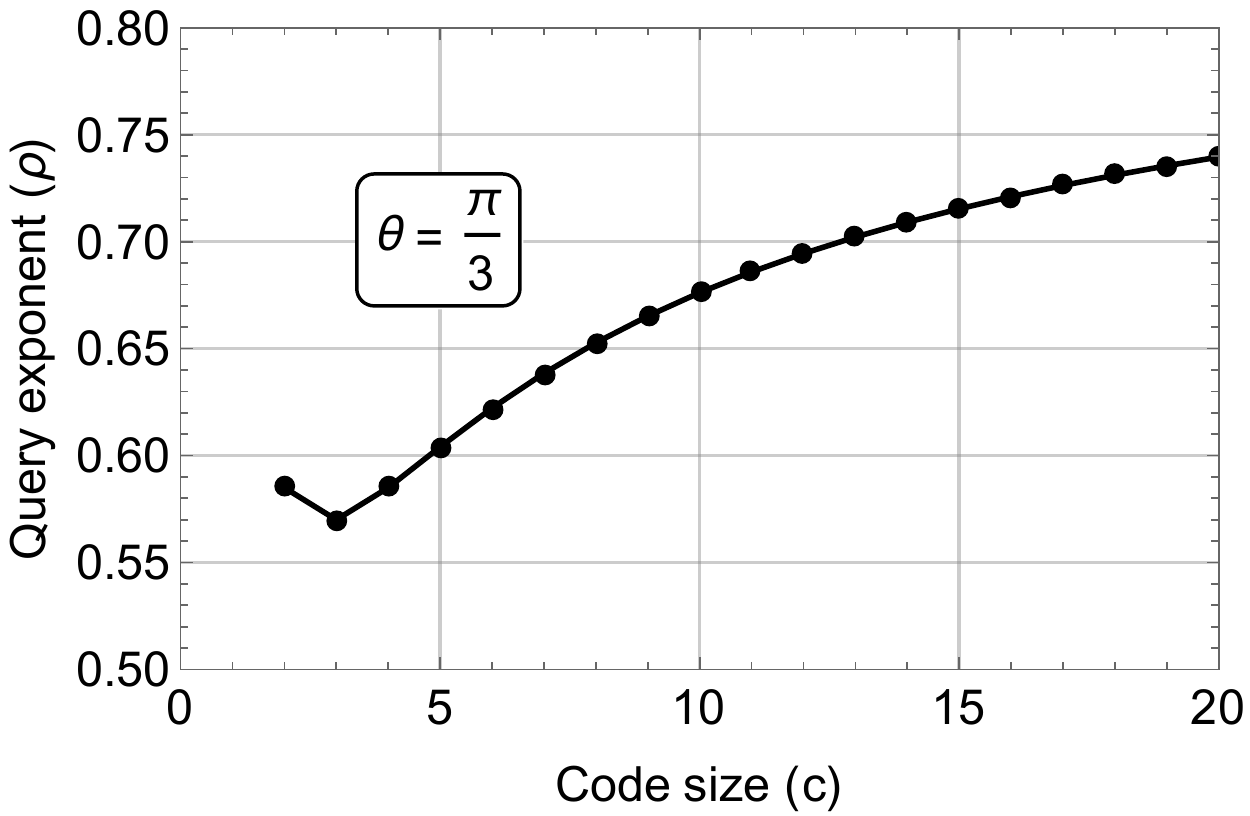}
\caption{Query exponents $\rho$ for project--and--partition hash families based on two-dimensional regular polygons. Lower query exponents $\rho$ are better, and the best performance is achieved by the equilateral triangle ($c = 3$). The case $c = 2$ corresponds to hyperplane hashing, while $c = 4$ corresponds to using the vertices from a square, which is asymptotically equivalent to using two (orthogonal) hyperplanes. \label{fig:2d1}}
\end{figure}

\begin{theorem}[In two dimensions, the triangle is optimal]
Let $\theta \in (0, \frac{\pi}{2})$. Then, among all project--and--partition hash families induced by two-dimensional isogonal spherical codes, the query exponent $\rho = \log p_1 / \log p_2$ is minimized by the equilateral triangle.
\end{theorem}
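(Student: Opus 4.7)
The plan is three-fold: classify the two-dimensional isogonal codes, reduce the problem to a one-parameter integer family via Theorem~\ref{thm:2d}, and then show that this integer family is strictly minimized at $c = 3$.

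Every finite subgroup of $O(2) = \mathrm{Isom}(\cS^1)$ is cyclic ($C_n$) or dihedral ($D_n$), so any orbit of a point on $\cS^1$ is either a regular $c$-gon or, for a generic point under $D_n$, a dihedrally symmetric $2n$-gon with two alternating arc-gaps $2\alpha$ and $2\pi/n - 2\alpha$ parameterized by $\alpha \in (0, \pi/n)$. Thus any isogonal $\cC \subset \cS^1$ of size $c$ is either a regular $c$-gon or (when $c$ is even) one of these dihedrally symmetric $2n$-gons with $c = 2n$. The first substantive step is to argue that all isogonal codes of the same size $c$ yield the same $\rho$: at every vertex of a dihedrally symmetric $2n$-gon the two Voronoi-adjacent outgoing difference vectors have lengths $2 \sin \alpha$ and $2 \sin(\pi/n - \alpha)$, but a short chord-direction calculation shows their mutual angle equals $\pi - \pi/n$ regardless of $\alpha$. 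Hence $\hat{\mat{C}}_i \hat{\mat{C}}_i^T$ differs from its regular-$2n$-gon analogue only by conjugation with a positive diagonal matrix, and orthant probabilities are invariant under such diagonal rescaling; Proposition~\ref{prop:relevant} combined with Corollary~\ref{thm:isogonal} then gives the same $p_1, p_2$, and therefore the same $\rho$, as the regular $2n$-gon.

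The problem thus reduces to showing that $\rho(c, \theta) := \log p_1(c, \theta)/\log(1/c)$, with $p_1(c, \theta)$ as in Equation~\eqref{eq:2d-b}, is strictly minimized at $c = 3$ over integers $c \geq 2$. I would first record the algebraic identity $\rho(2, \theta) = \rho(4, \theta) = \log(1 - \theta/\pi)/\log\frac{1}{2}$ (for $c = 4$ the arccos argument vanishes since $\cos(\pi/2) = 0$), then establish the strict inequality $\rho(3, \theta) < \rho(2, \theta)$ by an elementary algebraic comparison using the explicit formulas, and finally extend $c$ to a real parameter on $[2, \infty)$ and prove that $c \mapsto \rho(c, \theta)$ is strictly unimodal with its unique minimum inside $(2, 4)$. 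Combined with $\rho(2, \theta) = \rho(4, \theta)$, unimodality forces the integer minimizer to be $c = 3$ (the only integer in $(2,4)$) and rules out every $c \geq 5$ since $\rho$ is then strictly greater than $\rho(4, \theta) > \rho(3, \theta)$.

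The main obstacle is the unimodality step. Because $\partial \rho/\partial c$ involves the transcendental term $\arccos(-\cos\theta \cos(2\pi/c))$ and its $c$-derivative, directly verifying a single zero on $(2, \infty)$ is not routine. My preferred strategy is to substitute $u = 2\pi/c \in (0, \pi]$, so that $c = 2, 3, 4$ correspond to $u = \pi, \tfrac{2\pi}{3}, \tfrac{\pi}{2}$, rewrite $p_1$ as a smooth function of $u$, and combine the boundary identity $\rho(u = \pi) = \rho(u = \pi/2)$ with the asymptotic $\rho \to 1$ as $u \to 0^+$ via a Rolle-type extremum count to confine any critical point to $(\pi/2, \pi)$. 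Uniqueness of the critical point would then follow from a single sign-check on a second-derivative-type quantity in $u$, completing the proof.
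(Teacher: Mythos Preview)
Your approach is essentially the same as the paper's: both extend $c$ to a real parameter, assert unimodality of $c \mapsto \rho(c,\theta)$, and then use the identity $\rho(2,\theta) = \rho(4,\theta)$ to force the integer minimizer to be $c = 3$. The paper's proof is in fact briefer and less careful than yours on the key analytic step --- it simply states ``by inspecting the function \dots\ its derivative with respect to $c$ always has a single root $c_0 = c_0(\theta) \in (2,3)$'' without further argument, whereas you correctly flag this as the main obstacle and outline a substitution $u = 2\pi/c$ together with a Rolle/second-derivative strategy. Neither proof fully carries the unimodality verification through.

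Your treatment is genuinely more complete in one respect the paper glosses over. The theorem is stated for \emph{all} two-dimensional isogonal spherical codes, but the paper's proof works only with the regular-polygon formula~\eqref{eq:2d-b}. You correctly observe that isogonal codes on $\cS^1$ also include the dihedrally symmetric $2n$-gons with alternating gaps (e.g.\ non-square rectangles), and your argument --- that the two outgoing chord directions at any vertex always enclose angle $\pi - \pi/n$ regardless of the gap parameter, so that $\hat{\mat{C}}_i\hat{\mat{C}}_i^T$ differs from the regular case only by a positive diagonal conjugation, which leaves orthant probabilities unchanged --- is correct and fills a gap in the paper's argument as written. The separate direct check that $\rho(3,\theta) < \rho(2,\theta)$ is redundant once unimodality is in hand, but harmless as a sanity check.
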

\begin{proof}
By inspecting the function $f(c, \theta) = \log p_1 / \log p_2$ with $p_1$ and $p_2$ as in \eqref{eq:2d}, we see that on the positive reals, its derivative with respect to $c$ always has a single root $c_0 = c_0(\theta) \in (2, 3)$. For $c \leq c_0$ the derivative is negative and for $c \geq c_0$ the derivative is positive, hence on the positive reals, $f(c, \theta)$ is minimized at $c = c_0$, where the exact value $c_0$ depends on $\theta$. On the natural numbers, $f(c, \theta)$ is therefore minimized either at $c = 2$ or at $c = 3$. Since $f(2, \theta) = f(4, \theta) > f(3, \theta)$, the value at $c = 3$ is the unique global minimum, and therefore for all target nearest neighbor angles $\theta \in (0, \frac{\pi}{2})$, the equilateral triangle achieves the lowest values $\rho$. \qed 
\end{proof}
For example, when $\theta = \frac{\pi}{3}$ we obtain a query exponent $\rho = 0.56996\dots$ for the equilateral triangle, compared to $\rho = 0.58496\dots$ for the square or for two antipodal points. Recall again that this improvement holds unconditionally, for all $d$; these collision probabilities are exact, and do not hide order terms in $d$. A further overview of query exponents $\rho$ for regular polygons is given in Table~\ref{tab:overview}, in the rows labeled $k = 2$.

\subsection{Arbitrary two-dimensional spherical codes}

For arbitrary (non-isogonal) two-dimensional spherical codes, where code word $\vc{c}_i$ covers a fraction $f_i$ of the circle, the probability that two vectors collide in one of the $c$ hash regions, given their mutual angle $\theta$, can be computed in terms of the vector $\vc{f}$ as:
\begin{align}
p_1(\theta, \vc{f}) &= \|\vc{f}\|_2^2 + c \left(\frac{\pi - \theta}{2 \pi}\right)^2 - \frac{1}{4 \pi^2} \sum_{i=1}^c \arccos^2(- \cos \theta \cos 2 \pi f_i), \quad p_2(\vc{f}) = \|\vc{f}\|_2^2.
\end{align}
The set of query exponents $\rho$ that can be achieved by two-dimensional codes $\cC \subset \cS^1$ in the project--and--partition framework, given a target angle $\theta$, is then equal to:
\begin{align}
R_{\theta} = \left\{\frac{\log p_1(\theta, \vc{f})}{\log p_2(\vc{f})} : \vc{f} \in \mathbb{R}^c, c \geq 2, \vc{f} \geq \vc{0}, \|\vc{f}\|_1 = 1\right\}.
\end{align}
Here $\|\vc{f}\|_2$ and $\|\vc{f}\|_1$ denote the $\ell_2$-norm and $\ell_1$-norm of $\vc{f}$. Although a numerical evaluation indicates that, for arbitrary $\theta \in (0, \frac{\pi}{2})$, the minimum over $R_{\theta}$ is obtained for $\vc{f} = (\frac{1}{3}, \frac{1}{3}, \frac{1}{3})$, we leave a formal proof of the optimality of the equilateral triangle over all two-dimensional codes as an open problem.


\section{Three-dimensional codes}
\label{sec:3d}

As the dimensionality increases, the dimensions of the matrices $\mat{C}_i$ in Theorem~\ref{thm:main} and Corollary~\ref{thm:isogonal} generally also increase, even when using the relevant vectors optimization of Proposition~\ref{prop:relevant}. In particular, as each Voronoi region in three dimensions is defined by at least three relevant vectors, the matrix $\mat{\Sigma}$ will be at least six-dimensional with non-trivial off-diagonal entries. The bad news is that for none of the three-dimensional codes we considered, we could derive analytic, closed-form expressions for the collision probabilities; numerical integration seems to be necessary to evaluate their performance. The good news however is that Section~\ref{sec:framework} provides us with a computational framework to evaluate the suitability of a spherical code for nearest neighbor searching, given the vertices defining the spherical code.

Table~\ref{tab:overview} summarizes the results of a brief survey of various spherical codes appearing in the literature in different contexts. The three-dimensional codes are given in the rows labeled $k = 3$. The codes we considered and evaluated numerically are:
\begin{itemize}
\item The \textbf{tetrahedron}, or the $3$-simplex, consisting of $4$ equidistributed points on the sphere;
\item The \textbf{octahedron}, or the $3$-orthoplex (cross-polytope), consisting of the $6$ vectors $\pm \vc{e}_i$;
\item The \textbf{cube}, consisting of $8$ vertices of the form $\frac{1}{3} \sqrt{3} \cdot (\pm 1, \pm 1, \pm 1)$;
\item The \textbf{icosahedron}, one of the five Platonic solids, with $12$ vertices;
\item The \textbf{cuboctahedron}, the vertex figure of the root lattices $A_3$ and $D_3$, with $12$ vertices;
\item The \textbf{dodecahedron}, the largest of the Platonic solids, with $20$ vertices;
\item Three \textbf{sphere packings} of $5, 7, 9$ vertices from the sphere packing database of Sloane~\cite{sloanepackings}.
\end{itemize}
Numerically evaluating the resulting query exponents $\rho$, it seems that for all $\theta \in (0, \frac{\pi}{2})$, the tetrahedron gives the best query exponents $\rho$. Similar to the two-dimensional case, using large, dense spherical codes does not seem to help; a fine-grained partition may make $p_2$ smaller, but at the same time $p_1$ then also becomes so much smaller that $\rho$ still increases. Note that, similar to the triangle, the tetrahedron belongs to the class of simplices.
\begin{conjecture}[In three dimensions, the tetrahedron is optimal]
Let $\theta \in (0, \frac{\pi}{2})$. Then, \\ among all project--and--partition hash families induced by three-dimensional (isogonal) spherical codes, the query exponent $\rho = \log p_1 / \log p_2$ is minimized by the regular tetrahedron.
\end{conjecture}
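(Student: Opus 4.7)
The plan is to combine an explicit analysis of the tetrahedron with a classification plus elimination argument for every other three-dimensional isogonal spherical code, splitting the codes by cardinality. First I would derive the collision probability for $S_3$ via Corollary~\ref{thm:isogonal} and Proposition~\ref{prop:relevant}: the three relevant neighbours of any tetrahedral vertex are the three other vertices at angle $\arccos(-\tfrac{1}{3})$, so $\mat{C}_1 \mat{C}_1^T$ is a multiple of $\tfrac{4}{3}\mat{I}_3 + \tfrac{4}{3}\mat{J}_3/3$ (a circulant, or more precisely an equicorrelated matrix), giving $\mat{\Sigma}_1$ a clean $2 \times 2$ Kronecker-with-equicorrelated-$3\times 3$ structure. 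Although the resulting $6$-variate orthant probability is not known in closed form, Schläfli's reduction for the equicorrelated case together with Plackett's identity should reduce $p_1(S_3,\theta)$ to a single real integral amenable to monotonicity analysis in $\theta$.

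Second, I would use the spherical cap lower bound (Theorem~\ref{thm:lb}) to rule out all codes of sufficiently large cardinality. For each fixed $\theta \in (0,\tfrac{\pi}{2})$, $\rho_3(c,\theta)$ is eventually increasing in $c$, so there exists a threshold $c_0 = c_0(\theta)$ with $\rho_3(c,\theta) > \rho(S_3,\theta)$ for all $c \geq c_0$; every code of size $c \geq c_0$ is then automatically worse than the tetrahedron. One should verify that $c_0(\theta)$ can be chosen uniformly small over $\theta \in (0,\tfrac{\pi}{2})$ (inspection of Table~\ref{tab:overview} suggests $c_0$ is not too large for small $\theta$, but a careful endpoint analysis is needed near $\theta \to \tfrac{\pi}{2}$).

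Third, for $4 \leq c < c_0$ I would invoke the classification of vertex-transitive polyhedra in $\mathbb{R}^3$: the five Platonic solids, the Archimedean solids, the infinite families of $n$-prisms and $n$-antiprisms, and a few sporadic vertex-transitive but non-uniform polyhedra. For each Platonic/Archimedean code one evaluates $\rho$ from Corollary~\ref{thm:isogonal} and compares against $\rho(S_3,\theta)$; for the prism and antiprism families, parametrised by $n$, the correlation matrix has a circulant block structure analogous to the regular polygon case, and one would hope to imitate the two-dimensional proof by showing $\partial_n \rho > 0$ once $c = 2n$ exceeds the tetrahedron's four vertices. The sporadic vertex-transitive examples form a finite list that can be checked one at a time.

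The main obstacle is step three: the orthant probabilities involved are genuinely six-dimensional and do not collapse to a Cheng-style closed form of the kind that made the two-dimensional proof succeed. Without such a closed form, a uniform (in $\theta$) comparison across the entire infinite prism/antiprism family, and a robust elimination of all Archimedean solids, requires either (i) a sharp first-order variational inequality showing that any perturbation of a tetrahedral configuration that changes the number of vertices strictly increases $\rho$ (using the classical derivative formulas for multivariate Gaussian CDFs with respect to correlations and thresholds), or (ii) a substantially tighter lower bound than the spherical cap bound for codes with prescribed symmetry. Either route essentially demands a monotonicity/convexity principle for $\rho$ as a functional on the space of isogonal spherical codes on $\cS^2$, and it is precisely the lack of such a principle that leaves this statement as a conjecture rather than a theorem.
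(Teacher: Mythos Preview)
The statement you are addressing is explicitly labeled a \emph{conjecture} in the paper, and the paper provides no proof whatsoever---only numerical evidence from Table~\ref{tab:overview} and the remark that ``for all $\theta \in (0, \frac{\pi}{2})$, the tetrahedron gives the best query exponents $\rho$.'' There is therefore no ``paper's own proof'' to compare your proposal against.

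Your proposal is not a proof but a proof \emph{plan}, and you yourself identify the fatal gap at the end: step three requires comparing six-dimensional orthant probabilities across infinite families (prisms, antiprisms) and a finite but nontrivial list (Archimedean solids), uniformly in $\theta$, with no closed-form analogue of Cheng's formula available. This is exactly why the paper leaves the statement open. Your outline is a reasonable sketch of what a proof \emph{would have to do}, and your diagnosis of the obstruction matches the paper's implicit assessment. A minor quibble: your expression for $\mat{C}_1\mat{C}_1^T$ is written in a nonstandard normalization; the paper's Proposition~\ref{prop:simplex} gives the cleaner form $\tfrac{1}{2}(\mat{J}_k + \mat{I}_k)$ after scaling, but the equicorrelated structure you describe is the same. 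Also note that step two as stated is weaker than you need: the spherical cap bound $\rho_3(c,\theta)$ from Theorem~\ref{thm:lb} lies \emph{below} the tetrahedron's $\rho$ for several values of $c$ (see the $k=3$ row of Table~\ref{tab:overview}, where the cap bound at $c=4,5$ beats the tetrahedron), so it cannot by itself eliminate codes of those sizes---you would still have to handle them in step three.
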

For the tetrahedron, the collision probabilities are $p_2 = 1/4$ and:
\begin{align}
p_1 &= 4 \cdot \Pr_{\vc{z} \sim \No(\vc{0},\mat{\Sigma})}\left(\vc{z} \geq \vc{0}\right), \qquad \mat{\Sigma} = \begin{bmatrix} 1 & \cos \theta \\ \cos \theta & 1 \end{bmatrix} \otimes \begin{bmatrix} 1 & 1/2 & 1/2 \\ 1/2 & 1 & 1/2 \\ 1/2 & 1/2 & 1 \end{bmatrix}.
\end{align}
A further simplification of $p_1$, i.e.\ finding a closed-form expression for this six-dimensional orthant probability is left as an open problem \cite{orthantmo}.

For numerically computing the values $\rho$ in Table~\ref{tab:overview} for three-dimensional codes, we used the statistical software R together with the mvtnorm-package~\cite{genz19}, which specifically implements fast and precise evaluation of orthant probabilities. For instance, for the tetrahedron with $\theta = \frac{\pi}{3}$, we numerically compute $\rho = 0.559998$. Using an independent C implementation, we performed Monte Carlo experiments, which for $\theta = \frac{\pi}{3}$ led to the estimate $\rho = 0.559997$ based on $10^8$ random trials. 

Note that the (numerical) integration task becomes more complex, and likely less precise, as the dimensionality of the problem increases (i.e.\ the size of $\Sigma$ increases). For isogonal codes, the dimensions of $\Sigma$ are not determined by the size of the code but by the number of relevant vectors for each code word. The entries in Table~\ref{tab:overview} may therefore be less precise for e.g.\ the icosahedron, with $|\cR| = 5$, and for large irregular sphere packings from~\cite{sloanepackings}.



\section{Four-dimensional codes}
\label{sec:4d}

As for the three-dimensional case, the orthant probabilities in the four-dimensional case generally appear too hard to evaluate analytically. Instead, we numerically computed parameters $\rho$ for different angles $\theta$ for the following list of four-dimensional spherical codes, by estimating the corresponding multivariate integrals with the R-library~\cite{genz19}:
\begin{itemize}
\item The \textbf{pentatope} or \textbf{$\vc{5}$-cell}, also known as the $4$-simplex, with $5$ vertices;
\item The \textbf{$\vc{16}$-cell}, the four-dimensional version of the orthoplex, with $8$ vertices;
\item The \textbf{tesseract}, or the four-dimensional hypercube, with $16$ vertices;
\item The \textbf{octacube} or \textbf{$\vc{24}$-cell}, the vertex figure of the $D_4$ root lattice, with $24$ vertices;
\item The runcinated $5$-cell, corresponding to the $20$ roots of the $A_4$-lattice;
\item Four \textbf{sphere packings} of $6, 7, 10, 13$ vertices from the sphere packing database of Sloane~\cite{sloanepackings}.
\end{itemize}
The only other regular convex polytopes not mentioned above are the $120$-cell and $600$-cell which, as their names suggest, are rather large; not only will the numerical evaluation (or Monte Carlo estimation) of the corresponding orthant probabilities likely be imprecise, but we also expect that such large codes are unlikely to offer an improvement, if we extrapolate the intuition obtained from the results in two and three dimensions to the four-dimensional setting. 

As shown in Table~\ref{tab:overview}, an interesting phenomenon occurs in dimension $4$. For two- and three-dimensional spherical codes we saw that the simplices give the best performance, beating other codes such as the orthoplices. In four dimensions however, for small angles $\theta$, the more fine-grained partitioning offered by the orthoplex gives a better performance for the associated locality-sensitive hashing scheme. Note that the best parameters $\rho$ are achieved by the simplex and orthoplex; these polytopes seem well-suited for locality-sensitive hashing, and even the nicest other spherical codes cannot compete with these highly regular and symmetric shapes.
\begin{conjecture}[In four dimensions, the $5$-cell and $16$-cell are optimal]
Let $\theta \in (0, \frac{\pi}{2})$. \\There exists a parameter $\theta_0 \in (0, \frac{\pi}{2})$ such that, among all project--and--partition hash families induced by four-dimensional (isogonal) spherical codes, the query exponent $\rho$ is minimized by:
\begin{itemize}
\item the $16$-cell, if $\theta \in (0, \theta_0)$;
\item the $5$-cell, if $\theta \in (\theta_0, \frac{\pi}{2})$.
\end{itemize}
\end{conjecture}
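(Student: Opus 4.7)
The plan is to apply Corollary~\ref{thm:isogonal} together with Proposition~\ref{prop:relevant}, which for any isogonal code $\cC$ of size $c$ reduces the query exponent to a single scalar function
\[
\rho_\cC(\theta) \;=\; \frac{\log\bigl(c \cdot \Phi_\cC(\theta)\bigr)}{\log(1/c)},
\]
where $\Phi_\cC(\theta) = \Pr_{\vc{z} \sim \No(\vc{0},\mat{\Sigma}_\cC(\theta))}(\vc{z} \geq \vc{0})$ is a $2|\cR_1|$-dimensional Gaussian orthant probability whose covariance is the Kronecker product $\bigl[\begin{smallmatrix}1 & \cos\theta \\ \cos\theta & 1\end{smallmatrix}\bigr] \otimes (\hat{\mat C}_1 \hat{\mat C}_1^T)$. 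The conjecture then becomes the statement that $\min_\cC \rho_\cC(\theta)$ is attained at the $16$-cell for small $\theta$ and at the $5$-cell for large $\theta$, with a single crossover $\theta_0$.

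The first step is to reduce to a finite candidate list. Every $4$-dimensional isogonal spherical code is a union of orbits of a finite subgroup of $O(4)$ acting transitively on the vertices, so up to isometry one can enumerate candidates via the Coxeter classification of $4$-dimensional uniform polytopes: the six regular convex $4$-polytopes together with the remaining convex uniform polychora and the prismatic families (prisms over uniform polyhedra, duoprisms, and their relatives). The prismatic candidates can be dispatched structurally: their Voronoi decompositions factor through the lower-dimensional factors, so the collision probabilities factor as well, and Sections~\ref{sec:1d}--\ref{sec:3d} already show the corresponding two- and three-dimensional codes are dominated by the simplex/orthoplex, giving a direct inequality $\rho_\cC \geq \min(\rho_{5\text{-cell}}, \rho_{16\text{-cell}})$ via a product argument. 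This leaves a \emph{finite} list of non-prismatic codes to check.

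For each of the remaining candidates I would use the formula above to compute $\rho_\cC(\theta)$, and prove the dominance of the $5$-cell and $16$-cell via endpoint expansions combined with a monotonicity statement. Specifically, as $\theta \to 0^+$ one expands $\Phi_\cC(\theta) = 1/c + \kappa_\cC \, \theta + O(\theta^2)$ using standard perturbation of Gaussian orthant integrals around the independent blocks limit; the resulting first-order slope $c\kappa_\cC/\log c$ should be strictly minimized by the $16$-cell among all candidates. As $\theta \to (\pi/2)^-$ one instead expands around the fully independent limit $\Phi_\cC \to 1/c^2$, where $\rho_\cC \to 1$ with leading correction decreasing in $c$ in a way that favors small, highly symmetric codes, singling out the $5$-cell. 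The continuous variation of $\rho_\cC$ in $\theta$ combined with these endpoint regimes, together with explicit orthant estimates for the finitely many remaining codes, rules them out uniformly on $(0, \pi/2)$.

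Finally, for the crossover itself, the plan is to study $\Delta(\theta) := \rho_{5\text{-cell}}(\theta) - \rho_{16\text{-cell}}(\theta)$ and to show it has exactly one zero in $(0, \pi/2)$. The endpoint analysis above gives $\Delta(0^+) > 0$ and $\Delta((\pi/2)^-) < 0$, so existence of $\theta_0$ is immediate; uniqueness reduces to a monotonicity (or log-concavity) statement for the ratio of the two relevant orthant probabilities as $\theta$ varies. This last step is the main obstacle: no closed-form expression is known for the six-dimensional orthant probability of the $5$-cell or the eight-dimensional one of the $16$-cell in terms of $\theta$, so turning the numerical evidence of Table~\ref{tab:overview} into a rigorous, $\theta$-uniform comparison would require either Plackett-type reductions of each orthant probability into lower-dimensional pieces, or certified numerical enclosures controlling the sign of $\Delta'(\theta)$. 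It is almost certainly the absence of a clean analytic handle on these $4$-dimensional Kronecker-structured orthant integrals that keeps this statement at the level of a conjecture rather than a theorem.
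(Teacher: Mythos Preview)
The statement is labeled as a \emph{conjecture} in the paper, and the paper does not attempt to prove it. The only supporting evidence offered is numerical: Table~\ref{tab:overview} and Figure~\ref{fig:comp} report numerical-integration and Monte Carlo estimates of $\rho$ for a handful of four-dimensional codes, and the text surrounding the conjecture simply remarks that the crossover appears to occur near $\theta_0 \approx 0.33\pi$. There is no proof in the paper to compare your proposal against.

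Your sketch is a reasonable outline of what a proof \emph{would} have to accomplish, and in your final paragraph you correctly put your finger on the analytic obstacle---no closed form for the relevant $6$- and $8$-dimensional orthant probabilities---that is precisely why the paper leaves the statement as a conjecture. One point worth flagging in your strategy: the ``reduction to a finite candidate list'' is not actually finite. Four-dimensional isogonal spherical codes are not exhausted by the convex uniform polychora; they include countably infinite families such as the $p$--$q$ duoprisms for all $p,q \geq 3$, antiprism prisms, and more generally any single orbit of any finite subgroup of $O(4)$ acting on a suitably placed point. Your factorization idea for the prismatic and duoprism families is the right instinct, but it would have to be carried out uniformly over these infinite families rather than treated as a finite residual check. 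Beyond that, the endpoint-expansion-plus-monotonicity plan you describe is plausible in spirit, but---as you yourself concede---the sign control on $\Delta'(\theta)$ is exactly where the currently available tools run out.
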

Numerically, we estimate the cross-over point between the $5$-cell and $16$-cell in terms of the query exponent $\rho$ to lie at $\theta_0 \approx 0.33 \pi$, slightly below $\theta = \frac{\pi}{3}$. See also Figure~\ref{fig:comp}, showing the improvements over hyperplane hashing for both the $5$-cell and the $16$-cell, and showing that the cross-over point between both four-dimensional codes lies close to $\theta = \frac{\pi}{3}$. From a practical point of view however, as discussed in Section~\ref{sec:practice}, using larger spherical codes (in a fixed dimension) is generally better, as the hash computations are then cheaper. In that sense, using the $16$-cell may be a better choice than using a $5$-cell for partitions even for angles $\theta > \theta_0$.

For completeness, for the $5$-cell and $16$-cell we performed independent Monte Carlo estimations of the collision probabilities using $10^8$ trials for each value $\theta$ in Table~\ref{tab:overview}, to verify that these numbers are accurate. The first three decimals in Table~\ref{tab:overview} were always an exact match, and small deviations in the fourth decimal were occasionally observed for small $\theta$.

Again, we stress that the list of spherical codes given in Table~\ref{tab:overview} is by no means exhaustive, and the framework provided in Section~\ref{sec:framework} may be useful for evaluating any other spherical codes, if one suspects better spherical codes exist for partitioning the space.



\section{Higher-dimensional codes}
\label{sec:kd}

For high dimensions $k$, the orthant probabilities become increasingly difficult to evaluate, and the number of known, \emph{nice} spherical codes quickly decreases. A well-known result is that in five or more dimensions, there exist only three classes of regular convex polytopes, which have also been studied before in the context of locality-sensitive hashing~\cite{terasawa07, terasawa09, andoni15cp, laarhoven17hypercube, kennedy17}. These are the simplices, orthoplices, and hypercubes. Besides these families of polytopes, we will also study the families of expanded simplices and rectified orthoplices, which correspond to the vertex figures of the $A_k$ and $D_k$ root lattices. Together with these families, we further study three more special polytopes in five and six dimensions, listed in Table~\ref{tab:overview}:
\begin{itemize}
\item The \textbf{$\mathbf{1_{21}}$-polytope} or the $5$-demicube, obtained by taking $16$ vertices of the $5$-cube;\footnote{Viewing the hypercube as a bipartite graph, the vertices of the demicube are all vertices from one of the two sets.}
\item The \textbf{$\mathbf{1_{31}}$-polytope}, or the $6$-dimensional demicube on $32$ vertices;
\item The \textbf{$\mathbf{2_{21}}$-polytope} or \textbf{Sch\"{a}fli polytope} on $27$ vertices, with symmetries of the $E_6$ lattice. 
\end{itemize}
The other polytopes in Table~\ref{tab:overview} for dimensions higher than four are all from the families $S_k$, $O_k$, $C_k$, $A_k$ and $D_k$ which we further discuss separately below.

\subsection{Simplices ($S_k$)}

Recall that simplices can most easily be described as the subset in $\mathbb{R}^{k+1}$ containing all standard unit vectors $\vc{e}_1, \dots, \vc{e}_{k+1}$. We are unable to obtain closed-form expressions for $p_1$ and $\rho$ for $k \geq 3$, beyond the multivariate integral representation from Corollary~\ref{thm:isogonal}, and we only obtain a closed-form expression of the correlation matrices appearing in the computation of $p_1$ below. In this case the matrix $\mat{D} \in \mathbb{R}^{k \times k}$ has the simple shape of having ones on the diagonal and all off-diagonal entries being equal to $1/2$. (Observe that orthant probabilities for such matrices, with all off-diagonal entries equal to the same value, were studied in e.g. \cite{steck62}.) Yet, due to the Kronecker product, an exact evaluation of the resulting orthant probabilities for $k \geq 3$ seems difficult. In the following proposition, recall that $\mat{I}_k$ denotes the identity matrix, and $\mat{J}_k$ the all-$1$ matrix.

\begin{proposition}[Collision probabilities for simplices] \label{prop:simplex}
Let $\cC \subset \cS^{k-1}$ consist of the $k + 1$ vertices of the simplex, and let $\theta \in (0, \frac{\pi}{2})$. Then the corresponding project--and--partition hash family $\cH$ is $(\theta, p_1, p_2)$-sensitive, with $p_2 = 1/(k+1)$ and:
\begin{align}
p_1 &= (k+1) \cdot \Pr_{\vc{z} \sim \No(\vc{0},\mat{\Sigma})}\left(\vc{z} \geq \vc{0}\right), \qquad \mat{\Sigma} = \begin{bmatrix} 1 & \cos \theta \\ \cos \theta & 1 \end{bmatrix} \otimes \tfrac{1}{2}(\mat{J}_{k} + \mat{I}_k).
\end{align}
\end{proposition}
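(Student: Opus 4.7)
The plan is to apply Corollary~\ref{thm:isogonal} together with Proposition~\ref{prop:relevant}, and the work reduces to correctly identifying the matrix $\hat{\mat{C}}_1 \hat{\mat{C}}_1^T$. First, I would verify that the simplex is indeed isogonal: the symmetric group $S_{k+1}$ permutes the vertices $\vc{e}_1, \ldots, \vc{e}_{k+1}$ and restricts to an isometry group of the code after centering and rescaling into $\cS^{k-1}$. By Corollary~\ref{thm:isogonal} this immediately gives $p_2 = 1/(k+1)$ and $p_1 = (k+1) \cdot \Pr_{\vc{z} \sim \No(\vc{0}, \mat{\Sigma})}(\vc{z} \geq \vc{0})$.

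Next I would determine the relevant vectors $\cR_1$. By vertex-transitivity, the relevancy relation is symmetric and transitive under the stabilizer of $\vc{c}_1$, so either all $k$ remaining vertices are relevant or none is. Since the Voronoi cell of $\vc{c}_1$ is the $k$-dimensional cone $\{\vc{x} : \langle \vc{x}, \vc{c}_1 - \vc{c}_j \rangle \geq 0 \text{ for all } j \neq 1\}$, and none of the $k$ defining halfspaces is implied by the others (by the symmetric action of the stabilizer of $\vc{c}_1$), every other vertex is relevant. Thus $\hat{\mat{C}}_1 \in \mathbb{R}^{k \times k}$ has rows $\vc{c}_1 - \vc{c}_j$ for $j = 2, \ldots, k+1$.

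To compute $\hat{\mat{C}}_1 \hat{\mat{C}}_1^T$, I would use the standard fact that the simplex vertices, viewed as unit vectors in $\cS^{k-1}$ after centering and normalizing, satisfy $\langle \vc{c}_i, \vc{c}_j \rangle = -1/k$ for $i \neq j$. From this, for $i \neq j$ with $i, j \in \{2, \ldots, k+1\}$,
\begin{align*}
\langle \vc{c}_1 - \vc{c}_i, \vc{c}_1 - \vc{c}_j \rangle &= 1 - (-1/k) - (-1/k) + (-1/k) = (k+1)/k, \\
\|\vc{c}_1 - \vc{c}_i\|^2 &= 2 - 2(-1/k) = 2(k+1)/k.
\end{align*}
Therefore $\hat{\mat{C}}_1 \hat{\mat{C}}_1^T = \tfrac{k+1}{k}(\mat{J}_k + \mat{I}_k)$. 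Finally, the orthant probability $\Pr(\vc{z} \geq \vc{0})$ for a centered Gaussian is invariant under multiplying the covariance by a positive scalar (equivalently, under coordinate-wise rescaling), so I may replace $\hat{\mat{C}}_1 \hat{\mat{C}}_1^T$ by $\tfrac{1}{2}(\mat{J}_k + \mat{I}_k)$, which conveniently has unit diagonal. Combining with the $2 \times 2$ outer factor from Corollary~\ref{thm:isogonal} yields the stated Kronecker form of $\mat{\Sigma}$.

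The only genuinely non-routine step is recognizing that the final rescaling is valid and then making sure the diagonal blocks of the resulting $\mat{\Sigma}$ match the claimed $\tfrac{1}{2}(\mat{J}_k + \mat{I}_k)$ rather than $\mat{J}_k + \mat{I}_k$; this is a purely cosmetic normalization and does not affect $p_1$. The proposition makes no attempt to evaluate the resulting $2k$-dimensional orthant probability in closed form — that remains the real open problem flagged in the discussion — so the proof itself is essentially a symmetry-and-inner-product calculation plugged into the general framework.
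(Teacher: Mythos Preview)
Your proof is correct and follows essentially the same approach as the paper: apply Corollary~\ref{thm:isogonal} (with Proposition~\ref{prop:relevant}) to the isogonal simplex, compute the Gram matrix of the difference vectors, and rescale by a positive constant. The only cosmetic difference is that the paper works with the uncentered representation $\cC = \{\vc{e}_1,\dots,\vc{e}_{k+1}\} \subset \mathbb{R}^{k+1}$, where the inner products $\langle \vc{e}_1 - \vc{e}_i, \vc{e}_1 - \vc{e}_j\rangle$ are immediately $2$ or $1$, whereas you use the normalized embedding in $\cS^{k-1}$ and obtain the scalar $\tfrac{k+1}{k}$ instead; both routes give the same $\mat{\Sigma}$ up to a positive multiple.
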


\begin{proof}
For ease of computations, let $\cC = \{\vc{e}_i: i = 1, \dots, k+1\}$. Although this code is not centered around the origin, and lies in $\mathbb{R}^{k+1}$ rather than $\mathbb{R}^k$, none of these things matter when computing $\mat{D} = \hat{\mat{C}}_i \hat{\mat{C}}_i^T$, which is a function only of the difference vectors $\vc{c}_i - \vc{c}_j$. Now, a vector lies in the Voronoi cell of the vertex $\vc{e}_1$ if and only if it lies closer to $\vc{e}_1$ than to all its neighbors, which in this case are all other vectors in the code. The difference vectors $\vc{e}_1 - \vc{e}_j$ all have norm $\sqrt{2}$ and pairwise inner products $1$, hence after dividing $\mat{D}$ by $2$ we obtain a matrix with ones on the diagonal and $1/2$ everywhere off the diagonal. The result then follows from Theorem~\ref{thm:isogonal}. \qed
\end{proof}

Asymptotically, as $d \to \infty$, using simplices is equivalent to using cross-polytopes, and the last rows of Table~\ref{tab:overview} illustrate this asymptotic scaling as $k$ increases. 

\subsection{Orthoplices ($O_k$)}

For cross-polytopes or orthoplices, consisting of the $2k$ code words $\pm \vc{e}_1, \dots, \pm \vc{e}_k$, unfortunately we also cannot easily derive an analytic expression for the collision probabilities for $k \geq 3$. We state the reduced form of the correlation matrix below, where the matrix $\mat{D} \in \mathbb{R}^{(2k-2) \times (2k-2)}$ is a Toeplitz matrix with ones on the diagonal, $1/2$ on most of the other diagonals, and one diagonal strip above and below the diagonal with all zeros.
\begin{proposition}[Collision probabilities for orthoplices] \label{prop:orthoplex}
Let $\cC \subset \cS^{k-1}$ consist of the $2k$ vertices of the orthoplex, and let $\theta \in (0, \frac{\pi}{2})$. Then the corresponding project--and--partition hash family $\cH$ is $(\theta, p_1, p_2)$-sensitive, with $p_2 = 1/(2k)$ and:
\begin{align}
p_1 &= 2k \cdot \Pr_{\vc{z} \sim \No(\vc{0},\mat{\Sigma})}\left(\vc{z} \geq \vc{0}\right), \qquad \mat{\Sigma} = \begin{bmatrix} 1 & \cos \theta \\ \cos \theta & 1 \end{bmatrix} \otimes \frac{1}{2} \begin{bmatrix} \mat{J}_{k-1} + \mat{I}_{k-1} \ \ & \ \mat{J}_{k-1} - \mat{I}_{k-1} \\
\mat{J}_{k-1} - \mat{I}_{k-1} \ \ & \ \mat{J}_{k-1} + \mat{I}_{k-1} \end{bmatrix}.
\end{align}
\end{proposition}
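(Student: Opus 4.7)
The plan is to mimic the proof of Proposition~\ref{prop:simplex}, but now with the added task of figuring out which vertices are relevant and how to partition the corresponding rows of $\hat{\mat{C}}_i$ into blocks that match the Kronecker structure in the stated $\mat{\Sigma}$.

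First, since the orthoplex is isogonal (any signed permutation of coordinates acts transitively on the vertices), Corollary~\ref{thm:isogonal} applies and immediately gives $p_2 = 1/(2k)$ and the factor $2k$ in front of the orthant probability. WLOG I would take $\vc{c}_1 = \vc{e}_1$, so that the Voronoi cell $\cV_1$ is defined by the inequalities $x_1 \geq \langle \vc{x}, \vc{c}_j\rangle$ for all $\vc{c}_j \in \cC$. For the antipode $\vc{c}_j = -\vc{e}_1$ this is $x_1 \geq -x_1$, i.e.\ $x_1 \geq 0$, which is implied by any of the constraints $x_1 \geq |x_i|$ coming from $\pm \vc{e}_i$ with $i \geq 2$. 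Hence the antipode contributes no non-trivial boundary, and the set of relevant vectors is $\cR_1 = \{\pm \vc{e}_2, \dots, \pm \vc{e}_k\}$, of size $2(k-1)$. This is the only geometric check needed; Proposition~\ref{prop:relevant} then lets me replace $\mat{C}_1$ by the $(2k-2) \times k$ matrix $\hat{\mat{C}}_1$ whose rows are the corresponding difference vectors.

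Next I would order the rows of $\hat{\mat{C}}_1$ so that the first $k-1$ rows are $\vc{e}_1 - \vc{e}_j$ for $j = 2, \dots, k$ and the last $k-1$ rows are $\vc{e}_1 + \vc{e}_j$ for $j = 2, \dots, k$. Direct computation of the four $(k-1)\times(k-1)$ blocks of $\hat{\mat{C}}_1 \hat{\mat{C}}_1^T$ gives: within the $\vc{e}_1 - \vc{e}_j$ block the inner products are $2$ on the diagonal and $1$ off-diagonal, and the same holds for the $\vc{e}_1 + \vc{e}_j$ block; while $\langle \vc{e}_1 - \vc{e}_i, \vc{e}_1 + \vc{e}_j\rangle$ equals $0$ when $i=j$ and $1$ otherwise. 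Recognising these patterns as $\mat{J}_{k-1} + \mat{I}_{k-1}$ and $\mat{J}_{k-1} - \mat{I}_{k-1}$, and using that orthant probabilities are invariant under positive scaling of the covariance (so the overall factor $2$ may be absorbed), I obtain the block matrix shown in the proposition.

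Finally, plugging this $\hat{\mat{C}}_1 \hat{\mat{C}}_1^T$ into the formula for $\mat{\Sigma}_1$ from Corollary~\ref{thm:isogonal} produces the stated Kronecker product $\bigl[\begin{smallmatrix}1 & \cos\theta\\ \cos\theta & 1\end{smallmatrix}\bigr] \otimes \tfrac{1}{2}\bigl[\begin{smallmatrix}\mat{J}_{k-1}+\mat{I}_{k-1} & \mat{J}_{k-1}-\mat{I}_{k-1}\\ \mat{J}_{k-1}-\mat{I}_{k-1} & \mat{J}_{k-1}+\mat{I}_{k-1}\end{smallmatrix}\bigr]$, and the claim for $p_1$ follows. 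There is no genuine obstacle here; the only potentially subtle step is the relevance argument ruling out the antipodal vertex, and the only bookkeeping step is ordering the rows of $\hat{\mat{C}}_1$ so the block structure appears transparently.
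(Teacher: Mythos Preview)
Your proposal is correct and follows essentially the same approach as the paper: fix $\vc{c}_1 = \vc{e}_1$, identify the relevant vectors as $\{\pm\vc{e}_2,\dots,\pm\vc{e}_k\}$, compute the Gram matrix $\hat{\mat{C}}_1\hat{\mat{C}}_1^T$ directly, and reorder rows to expose the block structure before invoking Corollary~\ref{thm:isogonal}. The only difference is that you supply a short justification for why the antipode $-\vc{e}_1$ is not relevant, whereas the paper simply asserts $\cR_1 = \cC \setminus \{\pm\vc{e}_1\}$.
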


\begin{proof}
Without loss of generality, let $\cC = \{\pm \vc{e}_j: j = 1, \dots, k\}$. For the vector $\vc{c}_1 = \vc{e}_1$, the relevant vectors are $\cR_1 = \cC \setminus \{\pm\vc{e}_1\}$, and taking differences with the relevant vectors, we observe that the matrix $\hat{\mat{C}}_1$ has $2k - 2$ rows $\vc{e}_1 \pm \vc{e}_j$, for $j = 2, \dots, k$. For the product $\mat{D} = \hat{\mat{C}}_1 \hat{\mat{C}}_1^T$ we get twos on the diagonal, ones for most off-diagonal entries, and each row is paired up with exactly one other row whose joint correlation is $0$. Through a suitable reordering of the rows/columns of $\mat{D}$, we thus get the claimed shape of $\mat{D}$. Finally, with $\cC$ being isogonal and of size $2k$, using Theorem~\ref{thm:isogonal} we get the claimed expressions for $p_1$ and $p_2$. \qed
\end{proof}

As $k \to \infty$, from~\cite{andoni15cp} we know that the parameters $\rho$ scale optimally as $\rho \to 1/(2c^2 - 1)$, where $c = \sqrt{1 / (1 - \cos \theta)}$ is the approximation factor for worst-case approximate nearest neighbor searching. Substituting values $\theta \in \frac{\pi}{12} \{1, 2, 3, 4, 5\}$, we obtain the values in the bottom rows of Table~\ref{tab:overview}.

\subsection{Hypercubes ($C_k$)}

The easiest of these three families of convex regular polytopes to analyze is the family of hypercubes. For arbitrary $k \geq 1$, these codes consist of $2^k$ vertices of the form $\frac{1}{\sqrt{k}} (\pm \vc{e}_1 \pm \dots \pm \vc{e}_k)$. When using normal matrices $\mat{A} \sim \No(0,1)^{k \times d}$ to project down onto a $k$-dimensional subspace, the following proposition shows that the query exponent does not change at all as $k$ increases.  

\begin{proposition}[Collision probabilities for hypercubes] \label{prop:hypercube}
Let $\cC \subset \cS^{k-1}$ consist of the $2^k$ vertices of the hypercube, and let $\theta \in (0, \frac{\pi}{2})$. Then the corresponding project--and--partition hash family $\cH$ is $(\theta, p_1, p_2)$-sensitive, with:
\begin{align}
p_1 &= \left(1 - \frac{\theta}{\pi}\right)^k, \qquad \qquad p_2 = \left(\frac{1}{2}\right)^k.
\end{align}
As a result, when using Gaussian projection matrices $\mat{A}$ the query exponents $\rho$ for hypercubes are equal to those for the antipodal spherical code of Proposition~\ref{prop:1d}.
\end{proposition}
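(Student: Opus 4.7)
The plan is to exploit the fact that the hypercube's Voronoi partition in $\mathbb{R}^k$ coincides with the $2^k$ coordinate orthants, and that a Gaussian projection matrix has independent rows, so the whole problem collapses to $k$ independent copies of one-dimensional hyperplane hashing.

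First I would verify the Voronoi structure. For any $\vc{y} \in \mathbb{R}^k$, the squared distance $\|\vc{y} - \tfrac{1}{\sqrt{k}}\sum_i \sigma_i \vc{e}_i\|^2 = \|\vc{y}\|^2 - \tfrac{2}{\sqrt{k}}\sum_i \sigma_i y_i + 1$ is minimized over sign patterns $\vc{\sigma} \in \{\pm 1\}^k$ exactly by taking $\sigma_i = \operatorname{sign}(y_i)$ for each $i$ independently. Hence $\dec(\vc{y})$ is precisely the sign vector of $\vc{y}$, and the Voronoi cell of the hypercube vertex indexed by $\vc{\sigma}$ is the open orthant $\{\vc{y}: \operatorname{sign}(y_i) = \sigma_i \text{ for all } i\}$.

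Next I would use the independence of the rows $\vc{a}_1,\dots,\vc{a}_k$ of $\mat{A} \sim \No(0,1)^{k \times d}$. Writing $h(\vc{x}) = (\operatorname{sign}\langle \vc{a}_1, \vc{x}\rangle, \dots, \operatorname{sign}\langle \vc{a}_k, \vc{x}\rangle)$, the hash value factors as a tuple of $k$ independent one-dimensional hyperplane hashes. Both collision events $\{h(\vc{x})=h(\vc{y})\}$ and $\{h(\vc{x}')=h(\vc{y}')\}$ therefore factor as intersections over $i=1,\dots,k$ of events depending on $\vc{a}_i$ alone, and these events are mutually independent.

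Finally I would apply Proposition~\ref{prop:1d} coordinatewise. For a single row $\vc{a}_i \sim \No(0,1)^d$, two uniformly random vectors $\vc{x},\vc{y} \in \cS^{d-1}$ produce matching signs with probability $\tfrac{1}{2}$, while two vectors at angle exactly $\theta$ produce matching signs with probability $1 - \theta/\pi$. Taking the product over the $k$ independent rows yields $p_1 = (1 - \theta/\pi)^k$ and $p_2 = (1/2)^k$, and then
\begin{align*}
\rho = \frac{\log p_1}{\log p_2} = \frac{k \log(1-\theta/\pi)}{k \log(1/2)} = \frac{\log(1-\theta/\pi)}{\log(1/2)},
\end{align*}
which matches the antipodal exponent of Proposition~\ref{prop:1d}. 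There is no real obstacle here: once one observes that Gaussian projections preserve coordinatewise independence and that the Voronoi cells of $C_k$ are orthants, the argument is immediate; the content of the proposition is really the negative statement that projecting with a Gaussian matrix wastes the additional structure of the hypercube, motivating the subsequent discussion of using orthogonal $\mat{A}$ instead.
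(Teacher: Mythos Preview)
Your proof is correct and rests on the same core observation as the paper's: the hypercube hash factorizes into $k$ independent one-dimensional antipodal hashes. The route, however, is slightly different. The paper stays within its orthant-probability framework: it identifies the relevant vectors of $\vc{c}_0=(1,\dots,1)$ as the $k$ neighbors with one sign flipped, computes $\hat{\mat{C}}_0\hat{\mat{C}}_0^T = 4\mat{I}_k$, and then invokes the general fact that $\Pr_{\vc{z}\sim\No(\vc{0},\mat{A}\otimes\mat{I}_k)}(\vc{z}\geq\vc{0}) = \Pr_{\vc{z}\sim\No(\vc{0},\mat{A})}(\vc{z}\geq\vc{0})^k$ to peel off the factor $k$. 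You instead bypass the framework entirely by noting up front that the Voronoi cells are the coordinate orthants, so $h(\vc{x})$ is literally the sign vector of $\mat{A}\vc{x}$, and independence of the rows of $\mat{A}$ gives the product formula directly. Your argument is more elementary and self-contained; the paper's version has the virtue of illustrating how the general machinery of Theorem~\ref{thm:main} and Proposition~\ref{prop:relevant} specializes cleanly in this case. (One minor notational clash: the paper uses $\vc{a}_i$ for the \emph{columns} of $\mat{A}$ in Section~\ref{sec:framework}, whereas you use it for the rows; your meaning is clear from context, but be aware of the discrepancy.)
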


\begin{proof}
As before, scaling $\cC$ by a constant factor does not affect collision probabilities, so w.l.o.g.\ we may let $\cC$ consist of the $2^k$ vectors $(\pm 1, \pm 1, \dots, \pm 1)$. A vector $\vc{x}$ is mapped to the code word $\vc{c}_0 = (1, 1, \dots, 1)$ if and only if it is closer to this code word than to its relevant vectors $\cR_0$, which in this case are the vectors with $(k - 1)$ entries equal to $+1$ and one entry equal to $-1$. For $j = 1, \dots, c$ let us denote by $\vc{c}_j$ the vector with ones everywhere, except at position $j$ where we have a $-1$. For the difference vectors $\vc{c}_0 - \vc{c}_j$ with $\vc{c}_j \in \cR_1$ we get $\vc{c}_0 - \vc{c}_j = 2 \vc{e}_j$, and as a result we obtain $\hat{\mat{C}}_0 \hat{\mat{C}}_0^T = 4 \cdot \mat{I}_k$, which is a scalar multiple of the identity matrix. Now, note that a correlation matrix $\mat{I}_k$ essentially means that the entries of the random vector are independent. It is therefore an easy exercise to see that, for all matrices $\mat{A}$:
\begin{align}
\Pr_{\vc{z} \sim \No(\vc{0},\mat{A} \otimes \mat{I}_k)}\left(\vc{z} \geq \vc{0}\right) = \Pr_{\vc{z} \sim \No(\vc{0},\mat{I}_k \otimes \mat{A})}\left(\vc{z} \geq \vc{0}\right) = \Pr_{\vc{z} \sim \No(\vc{0},\mat{A})}\left(\vc{z} \geq \vc{0}\right)^k.
\end{align}
Together with the derivation of the orthant probability for the one-dimensional antipodal spherical code, this leads to $p_1 = 2^k \cdot (\pi - \theta)^k / (2 \pi)^k = (1 - \theta/\pi)^k$. As $\cC$ is isogonal, we further obtain $p_2 = 1/2^k$ as claimed. \qed
\end{proof}

Note that the paper~\cite{laarhoven17hypercube} showed that, if $\mat{A} \in \mathbb{R}^{d \times d}$ is chosen as a random \textit{orthogonal} matrix, then in fact the performance of hypercube hashing gets better as $k$ increases. Here however we assumed $\mat{A} \sim \No(0,1)^{k \times d}$, which for small $k$ is essentially equivalent to sampling $\mat{A}$ as an orthogonal matrix, but is not quite the same for large $k$; see e.g.~\cite{diaconis87, jiang06} for more details on how the gap between orthogonal and Gaussian matrices changes with the relation between $k$ and $d$. This leads to two different asymptotic scalings, as indicated in the bottom rows of Table~\ref{tab:overview}.

\subsection{Expanded simplices ($A_k$)}

Similar to simplices, so-called expanded simplices (the vertex figures of the $A_k$ root lattices) can most easily be described as a subset in $\mathbb{R}^{k+1}$. These spherical codes contain all differences between unit vectors, $\vc{e}_i - \vc{e}_j$ for $i \neq j$. These vertices all lie on a $k$-dimensional hyperplane, and can thus be viewed as a $k$-dimensional spherical code. For these codes the matrix $\mat{D} \in \mathbb{R}^{(2k - 1) \times (2k - 1)}$ has a block structure described below. 

\begin{proposition}[Collision probabilities for expanded simplices] \label{prop:expsimplex}
Let $\cC \subset \cS^{k-1}$ consist of the $k (k + 1)$ vertices of the expanded simplex, and let $\theta \in (0, \frac{\pi}{2})$. Then the corresponding project--and--partition hash family $\cH$ is $(\theta, p_1, p_2)$-sensitive, with $p_2 = 1/k(k+1)$ and:
\begin{align}
p_1 &= k (k+1) \cdot \Pr_{\vc{z} \sim \No(\vc{0},\mat{\Sigma})}\left(\vc{z} \geq \vc{0}\right), \ \mat{\Sigma} = \begin{bmatrix} 1 & \cos \theta \\ \cos \theta & 1 \end{bmatrix} \otimes \frac{1}{2} \begin{bmatrix}
\mat{J}_{k-1} + \mat{I}_{k-1} & -\mat{I}_{k-1} \\
-\mat{I}_{k-1} & \mat{J}_{k-1} + \mat{I}_{k-1} \end{bmatrix}.
\end{align}
\end{proposition}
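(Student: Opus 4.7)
The plan is to mirror the arguments for simplices and orthoplices (Propositions~\ref{prop:simplex} and~\ref{prop:orthoplex}): establish isogonality, identify the relevant vectors of a single representative code word, compute the resulting reduced Gram matrix, and then apply Corollary~\ref{thm:isogonal} together with Proposition~\ref{prop:relevant}.

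First I would observe that $\cC = \{\vc{e}_i - \vc{e}_j : i \neq j\}$ is isogonal, since every permutation $\pi \in S_{k+1}$ acts as an isometry sending $\vc{e}_i - \vc{e}_j$ to $\vc{e}_{\pi(i)} - \vc{e}_{\pi(j)}$, and $S_{k+1}$ acts transitively on ordered pairs $(i,j)$ with $i \neq j$. This immediately yields $p_2 = 1/|\cC| = 1/(k(k+1))$ and reduces the evaluation of $p_1$ to a single orthant probability attached to the fixed code word $\vc{c}_1 := \vc{e}_1 - \vc{e}_2$. As in the simplex proof, working in the ambient $\mathbb{R}^{k+1}$ is harmless because only inner products of the difference vectors $\vc{c}_1 - \vc{c}_m$ enter the Gram matrix.

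Next I would identify the relevant vectors of $\vc{c}_1$. A direct computation of $\langle \vc{c}_1,\vc{e}_i-\vc{e}_j\rangle = \delta_{1i}-\delta_{1j}-\delta_{2i}+\delta_{2j}$ shows that the $2(k-1)$ nearest neighbours of $\vc{c}_1$ in $\cC$ are exactly $A_j := \vc{e}_1-\vc{e}_j$ and $B_i := \vc{e}_i-\vc{e}_2$ for $i,j \in \{3,\dots,k+1\}$. To justify that no other $\vc{c}_m$ is Voronoi-relevant, I would show that each remaining difference $\vc{c}_1-\vc{c}_m$ is a nonnegative combination of $\vc{c}_1-A_j = \vc{e}_j-\vc{e}_2$ and $\vc{c}_1-B_i = \vc{e}_1-\vc{e}_i$. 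A brief case analysis on the type of $\vc{c}_m$ (the antipode $\vc{e}_2-\vc{e}_1$; vectors $\vc{e}_2-\vc{e}_j$ or $\vc{e}_i-\vc{e}_1$ with $i,j \geq 3$; and vectors $\vc{e}_i-\vc{e}_j$ with $\{i,j\}\cap\{1,2\}=\emptyset$) yields an explicit decomposition as a sum of at most two nearest-neighbour differences, so the inequality $\langle \vc{x}, \vc{c}_1-\vc{c}_m\rangle \geq 0$ follows from the corresponding nearest-neighbour inequalities.

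With the row-reduction justified, a direct computation of $\hat{\mat{C}}_1 \hat{\mat{C}}_1^T$ using the order $(\vc{e}_j-\vc{e}_2)_{j=3}^{k+1}$ followed by $(\vc{e}_1-\vc{e}_i)_{i=3}^{k+1}$ exploits the identities $\langle \vc{e}_j-\vc{e}_2,\vc{e}_{j'}-\vc{e}_2\rangle = 1+\delta_{jj'}$, $\langle \vc{e}_1-\vc{e}_i,\vc{e}_1-\vc{e}_{i'}\rangle = 1+\delta_{ii'}$, and $\langle \vc{e}_j-\vc{e}_2,\vc{e}_1-\vc{e}_i\rangle = -\delta_{ij}$, producing precisely the block matrix
\[
\begin{bmatrix} \mat{J}_{k-1} + \mat{I}_{k-1} & -\mat{I}_{k-1} \\ -\mat{I}_{k-1} & \mat{J}_{k-1} + \mat{I}_{k-1} \end{bmatrix}.
\]
Since orthant probabilities are invariant under positive scalar multiplication of the correlation matrix, I may pull out the factor $\tfrac{1}{2}$ as displayed in the proposition; Corollary~\ref{thm:isogonal} combined with Proposition~\ref{prop:relevant} then delivers the stated formulas for $p_1$ and $p_2$.

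The main obstacle is the Voronoi-relevance step: while geometrically transparent from the well-known fact that the Voronoi cell of the origin in the $A_k$ root lattice is a permutohedron whose facets are indexed by the minimal vectors, a self-contained proof within this paper requires the nonnegative-combination argument outlined above. The remaining calculations are routine bookkeeping.
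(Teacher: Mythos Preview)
Your proposal is correct and follows exactly the route the paper intends: the paper omits the proof entirely, stating only that it is ``analogous to the proofs of Propositions~\ref{prop:simplex} and~\ref{prop:orthoplex},'' and your isogonality\,/\,relevant-vectors\,/\,Gram-matrix computation is precisely that analogue. If anything you are more careful than the paper, which in the orthoplex case simply asserts $\cR_1 = \cC \setminus \{\pm \vc{e}_1\}$ without the nonnegative-combination justification you supply here.
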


We omit the proof, which is analogous to the proofs of Propositions~\ref{prop:simplex} and \ref{prop:orthoplex}.

Observe that, as hashing in the expanded simplex corresponds to mapping to the two largest coordinates of a vector, for large $d$ up to order terms this is essentially equivalent to using two independent simplex hashes. For expanded simplices we therefore again get the optimal scaling for large $k$ also obtained with orthoplices and simplices, as illustrated in Table~\ref{tab:overview}.

\subsection{Rectified orthoplices ($D_k$)}

Rectified orthoplices, related to the root lattices $D_k$, contain as vertices all pairwise combinations of unit vectors; in contrast to $A_k$, where the signs must be opposite, here any combination of signs is included. As a subset of $\mathbb{R}^k$, this code thus contains the $2 k (k-1)$ vertices of the form $\pm \vc{e}_i \pm \vc{e}_j$ for $i \neq j$. For these codes the matrix $\mat{D} \in \mathbb{R}^{(2k - 1) \times (2k - 1)}$ has a $2 \times 2$ block structure described below. 
\begin{proposition}[Collision probabilities for rectified orthoplices] \label{prop:rectorthoplex}
Let $\cC \subset \cS^{k-1}$ consist of the $2 k (k - 1)$ vertices of the rectified orthoplex, and let $\theta \in (0, \frac{\pi}{2})$. Then the corresponding project--and--partition hash family $\cH$ is $(\theta, p_1, p_2)$-sensitive, with $p_2 = 1/2k(k-1)$ and:
\begin{align}
p_1 &= 2 k (k-1) \cdot \Pr_{\vc{z} \sim \No(\vc{0},\mat{\Sigma})}\left(\vc{z} \geq \vc{0}\right), \quad \mat{\Sigma} = \begin{bmatrix} 1 & \cos \theta \\ \cos \theta & 1 \end{bmatrix} \otimes \tfrac{1}{2} \begin{bmatrix}
\mat{J} + \mat{I} \ & \ \mat{J} - \mat{I} \ & \mat{I} & -\mat{I} \\
\mat{J} - \mat{I} \ & \ \mat{J} + \mat{I} \ & -\mat{I} & \mat{I} \\
\mat{I} & -\mat{I} & \ \mat{J} + \mat{I} \ & \ \mat{J} - \mat{I} \ \\
-\mat{I} & \mat{I} & \ \mat{J} - \mat{I} \ & \ \mat{J} + \mat{I} \ \\
 \end{bmatrix}.
\end{align}
All matrices $\mat{I} = \mat{I}_{k-2}$ and $\mat{J} = \mat{J}_{k-2}$ above are square matrices of size $(k - 2) \times (k - 2)$.
\end{proposition}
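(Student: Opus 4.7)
The plan is to follow the isogonal template used for Propositions~\ref{prop:simplex}, \ref{prop:orthoplex} and \ref{prop:expsimplex}. The rectified orthoplex is vertex-transitive, so Corollary~\ref{thm:isogonal} together with Proposition~\ref{prop:relevant} immediately yields $p_2 = 1/(2k(k-1))$ and reduces $p_1$ to a single orthant probability with correlation matrix of the Kronecker form stated, whose inner block equals $\hat{\mat{C}}_1 \hat{\mat{C}}_1^T$ up to a positive scalar (irrelevant for the orthant probability). I would pick the canonical representative $\vc{c}_1 = \vc{e}_1 + \vc{e}_2$, identify its set of relevant vectors $\cR_1 \subset \cC$, and then compute $\hat{\mat{C}}_1 \hat{\mat{C}}_1^T$ block by block, matching the displayed $4 \times 4$ structure after pulling out the overall diagonal scalar $2$ (which produces the factor $\tfrac{1}{2}$ in the statement).

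First I would tabulate the inner products $\langle \vc{c}_1, \vc{c} \rangle$ over the other $2k(k-1) - 1$ codewords: they take values in $\{\pm 2, \pm 1, 0\}$, with $+1$ achieved exactly by the $4(k-2)$ vertices $\vc{e}_1 \pm \vc{e}_j$ and $\vc{e}_2 \pm \vc{e}_j$ for $j \geq 3$. I would then argue these are precisely the relevant vectors, echoing the orthoplex argument behind Proposition~\ref{prop:orthoplex}: the antipode $-\vc{c}_1$ and the two orthogonal mates $\pm(\vc{e}_1 - \vc{e}_2)$ cannot contribute a codimension-$1$ facet to $\cV_1$, because the tighter constraints coming from the $+1$-neighbors push their bisecting hyperplanes off the boundary of $\cV_1$. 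A clean way to see this for $\vc{e}_1 - \vc{e}_2$: on its bisector $\{x_2 = 0\}$, the inequalities $x_2 \geq \pm x_j$ coming from the neighbors $\vc{e}_1 \pm \vc{e}_j$ force $x_j = 0$ for every $j \geq 3$, so $\cV_1 \cap \{x_2 = 0\}$ lies in the $x_1$-axis and has dimension $1 < k-1$. The remaining two excluded vertices are handled symmetrically, yielding $|\cR_1| = 4(k-2)$ and matching the $8(k-2) \times 8(k-2)$ size of $\mat{\Sigma}$ implicit in the statement.

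Finally I would order $\cR_1$ into four groups of size $k-2$ indexed by $j \in \{3,\ldots,k\}$, namely A: $\vc{e}_1 + \vc{e}_j$, B: $\vc{e}_1 - \vc{e}_j$, C: $\vc{e}_2 + \vc{e}_j$, D: $\vc{e}_2 - \vc{e}_j$, and compute the sixteen pairwise Gram blocks of the associated difference vectors. Each difference has squared norm $2$, giving the overall factor $2$; within a group one gets $\mat{J}_{k-2} + \mat{I}_{k-2}$; pairs of groups sharing the same $\vc{e}_1/\vc{e}_2$ slot but differing in the sign of the $\vc{e}_j$-coordinate yield $\mat{J}_{k-2} - \mat{I}_{k-2}$; and pairs of groups living in different $\vc{e}_1/\vc{e}_2$ slots yield $\pm \mat{I}_{k-2}$, with sign determined by whether the two corresponding $\vc{e}_j$-signs agree. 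Assembling the sixteen blocks in the order (A, B, C, D) reproduces the displayed matrix, and Corollary~\ref{thm:isogonal} completes the proof. The main obstacle, as the author already flags for Proposition~\ref{prop:orthoplex}, is justifying the identification of $\cR_1$; once this is settled the rest is routine bookkeeping of sixteen inner-product blocks.
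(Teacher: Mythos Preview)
Your proposal is correct and follows precisely the approach the paper intends: the paper omits the proof entirely, stating only that it is analogous to the proofs of Propositions~\ref{prop:simplex} and \ref{prop:orthoplex}, and your plan---pick $\vc{c}_1 = \vc{e}_1 + \vc{e}_2$, identify the $4(k-2)$ relevant vectors as the inner-product-$+1$ neighbors, and compute $\hat{\mat{C}}_1 \hat{\mat{C}}_1^T$ in sixteen blocks---is exactly that analogy carried out. One small remark: besides $\pm(\vc{e}_1 - \vc{e}_2)$ and the antipode, you should also rule out the remaining inner-product-$0$ vertices $\pm\vc{e}_i \pm \vc{e}_j$ with $i,j \geq 3$ as relevant vectors, but the same style of argument (the $+1$-neighbor inequalities force equalities that drop the dimension of the intersection below $k-1$) handles them immediately.
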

We omit the proof, which is analogous to the proofs of Propositions~\ref{prop:simplex} and \ref{prop:orthoplex}. Note again that, as we are mapping to the two largest coordinates after projection, the hash function for the rectified orthoplex has the same asymptotic scaling as the simplex and orthoplex.

\subsection{Comparison}

To compare these families of spherical codes in terms of their performance for nearest neighbor searching, Figure~\ref{fig:high1} depicts the resulting parameters $\rho$ for different values $k$, focusing on the two cases $\theta = \frac{\pi}{3}$ and $\theta = \frac{\pi}{4}$. As observed previously there is a cross-over between simplices and orthoplices around $k = 4$, after which the orthoplex consistently performs better than the simplex. The hypercube, as shown in Proposition~\ref{prop:hypercube}, remains stable at the same performance as the antipodal code, regardless of the choice of dimension $k$.


\section{Utopian spherical cap estimates}
\label{sec:lb}

To get an idea how good the spherical codes considered above are, ideally we would find a matching lower bound, stating that any spherical code of dimension at most $k$ must satisfy $\rho(\theta) \geq \rho_k(\theta)$, for some monotonically increasing function $\rho_k: (0, \frac{\pi}{2}) \to (0,1)$. However, ruling out the existence of exotic spherical codes which happen to perform better than the ones we considered seems difficult.

As has been observed several times before~\cite{andoni14, becker16lsf, andoni17}, using spherical caps to partition the sphere into regions seems to work best, so that the regions have nice, symmetric shapes that ultimately seem to minimize the exponent $\rho$. We can therefore get lower bounds on the parameter $\rho$ by assuming that the sphere can be perfectly divided into an integer number of spherical caps, and estimating the resulting parameter $\rho$ for these utopian\footnote{Note that for $k \geq 3$ it is impossible to partition the sphere into spherical caps, as regions either overlap, or part of the sphere remains uncovered. These conjectured lower bounds therefore do not correspond to actual, achievable bounds, and are likely not tight.} partitions of the sphere. This leads to the following result, where we assume such perfect spherical cap partitions exist, and where we compute collision probabilities $p_1, p_2$ and the query exponents $\rho$ accordingly.
\begin{theorem}[Spherical cap lower bounds] \label{thm:lb2}
Let $\cC \subset \cS^{k-1}$ be a spherical code, and let $\cH$ be the associated project--and--partition hash family. Then the parameter $\rho$ for $\cC$, for angle $\theta \in (0, \frac{\pi}{2})$, must satisfy:
\begin{align}
\rho(\theta) \geq \rho_k(\theta) := \min_{c \geq 2} \rho_k(c, \theta),
\end{align}
where $\rho_k(c, \theta)$ is given by:
\begin{align}
\rho_k(c, \theta) := \log\left\{\Pr_{\vc{x}, \vc{y} \sim \No(0,1)^k} \left(\frac{x_1}{\|\vc{x}\|} \geq \alpha_k(c), \frac{x_1 \cos \theta + y_1 \sin \theta}{\|\vc{x} \cos \theta + \vc{y} \sin \theta\|} \geq \alpha_k(c) \right)\right\} \bigg/ \log\left(\frac{1}{c}\right), \label{eq:47}
\end{align}
and $\alpha_k(c)$ is the solution to $\frac{1}{2} \cdot I_{1 - \alpha^2}(\frac{k-1}{2}, \frac{1}{2}) = \frac{1}{c}$.
\end{theorem}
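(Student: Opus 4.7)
The approach combines a spherical rearrangement inequality of Baernstein--Taylor type with the elementary bound $p_2 \geq 1/c$, applied cell-by-cell to the Voronoi partition induced by $\cC$, followed by a reduction to equal-measure spherical caps.

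\textbf{Step 1 --- Translate to spherical measures.} Each Voronoi cell $\cV_i$ of a spherical code is a cone, so it is determined by its spherical footprint $A_i := \cV_i \cap \cS^{k-1}$, of relative measure $\mu_i := \mu(A_i)/\mu(\cS^{k-1})$. Since a standard Gaussian $\vc{x} \sim \No(\vc{0}, \mat{I}_k)$ has uniform direction $\vc{x}/\|\vc{x}\|$ on $\cS^{k-1}$ independent of its norm, we have $\Pr[\vc{x} \in \cV_i] = \mu_i$, and the normalised pair
\begin{align*}
\vc{u} = \frac{\vc{x}}{\|\vc{x}\|}, \qquad \vc{v} = \frac{\vc{x}\cos\theta + \vc{y}\sin\theta}{\|\vc{x}\cos\theta + \vc{y}\sin\theta\|}
\end{align*}
with $\vc{x}, \vc{y} \sim \No(\vc{0}, \mat{I}_k)$ independent forms a rotationally invariant, $\theta$-correlated pair on $\cS^{k-1}$. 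Hence from Theorem~\ref{thm:main} we may rewrite
\begin{align*}
p_2 = \sum_{i=1}^c \mu_i^2, \qquad p_1 = \sum_{i=1}^c \Pr[\vc{u} \in A_i,\, \vc{v} \in A_i].
\end{align*}

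\textbf{Step 2 --- Rearrangement and reduction to equal caps.} By the spherical rearrangement inequality of Baernstein--Taylor~\cite{baernstein76}, for any measurable $A \subset \cS^{k-1}$ of fixed relative measure $\mu$, the functional $A \mapsto \Pr[\vc{u} \in A, \vc{v} \in A]$ is maximized when $A$ is a spherical cap. Denote by $G_k(\mu, \theta)$ the resulting extremal value, attained by $A = \{\vc{w} \in \cS^{k-1}: w_1 \geq \alpha\}$ where $\alpha$ is the unique solution of $\tfrac{1}{2}\, I_{1-\alpha^2}(\tfrac{k-1}{2}, \tfrac{1}{2}) = \mu$. Applying this cell-by-cell gives $p_1 \leq \sum_i G_k(\mu_i, \theta)$. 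Combined with the Cauchy--Schwarz bound $p_2 \geq 1/c$ and a Jensen-type inequality exploiting concavity of $G_k(\cdot, \theta)$ on the relevant range, $\sum_i G_k(\mu_i, \theta) \leq c \cdot G_k(1/c, \theta)$, one arrives at
\begin{align*}
\rho(\theta) \;=\; \frac{\log p_1}{\log p_2} \;\geq\; \frac{\log\bigl[c \cdot G_k(1/c,\theta)\bigr]}{\log(1/c)}.
\end{align*}
Identifying $G_k(1/c, \theta)$ with the two-cap collision probability in the theorem (via the definition of $\alpha_k(c)$) and noting that this inequality holds for $c$ equal to the size of the given code, we infimise over $c \geq 2$ to conclude $\rho(\theta) \geq \min_{c \geq 2} \rho_k(c, \theta)$.

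\textbf{Main obstacle.} The principal hurdle is the rearrangement step: one must verify that the joint law of $(\vc{u}, \vc{v})$ satisfies the hypotheses required by Baernstein--Taylor, i.e.\ that the conditional density of $\vc{v}$ given $\vc{u}$ depends only on $\langle \vc{u}, \vc{v}\rangle$ and is non-increasing in the angular distance. This reduces to analysing the normalization of correlated Gaussians on $\cS^{k-1}$, which is straightforward but needs to be carried out carefully. A secondary technicality is concavity of $\mu \mapsto G_k(\mu, \theta)$ on $(0,1)$, needed for the Jensen-type reduction from $\sum_i G_k(\mu_i, \theta)$ to $c \cdot G_k(1/c, \theta)$; should this concavity fail on part of its domain, one can instead leave the bound in the form $\rho(\theta) \geq \log[\sum_i G_k(\mu_i,\theta)] / \log[\sum_i \mu_i^2]$ and absorb the allocation $\vc{\mu}$ directly into the outer minimisation.
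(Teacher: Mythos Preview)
Your strategy matches the paper's: both (i) apply the Baernstein--Taylor rearrangement inequality cell-by-cell to replace each Voronoi region by a spherical cap of the same measure (the paper's Part~1), (ii) argue that among such cap allocations the equal-measure one is optimal (the paper's Part~2), and (iii) unwind the resulting formula (the paper's Part~3). Your identification of the ``main obstacle'' is apt and, if anything, more careful than the paper: the paper instantiates Baernstein--Taylor with $h(t)=\mathds{1}\{t\geq\cos\theta\}$ over i.i.d.\ uniform sphere points, whereas the relevant joint law is that of the normalised correlated Gaussians $(\vc{u},\vc{v})$, for which one should take $h$ equal to the density of that law as a function of $\langle\vc{u},\vc{v}\rangle$ and verify monotonicity --- exactly what you flag.

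The one substantive difference is in step~(ii). The paper argues heuristically that the overall $\rho$ is ``a weighted average of the individual performance parameters $\rho$'' of the caps, hence minimised at the best single-cap size; this is not literally true, since $\rho=\log(\sum_i G_k(\mu_i,\theta))/\log(\sum_i\mu_i^2)$ is not a convex combination of the per-cap exponents. You instead combine $p_2\geq 1/c$ (Cauchy--Schwarz on $\sum\mu_i=1$) with a Jensen bound $\sum_i G_k(\mu_i,\theta)\leq c\,G_k(1/c,\theta)$, which is more explicit but requires concavity of $\mu\mapsto G_k(\mu,\theta)$. That concavity genuinely fails: as $\theta\to\pi/2$ the pair $(\vc{u},\vc{v})$ becomes independent and $G_k(\mu,\theta)\to\mu^2$, which is strictly convex, so for $\theta$ near $\pi/2$ the Jensen step goes the wrong way. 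Your proposed fallback --- absorbing the allocation $\vc{\mu}$ into the outer minimisation --- is sound but yields a bound that is a priori weaker than the theorem as stated, unless one separately proves that the optimum over allocations is the uniform one. In short, you and the paper share the same skeleton and the same soft spot at step~(ii); your treatment of it is more transparent but not more complete.
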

\begin{proof}
Suppose we use a spherical code in $k$ dimensions, partitioning the $k$-sphere into $c$ regions $\cV_1, \dots, \cV_c$. First we will argue that the performance parameter $\rho$ of any such spherical code is at most the corresponding parameter where each region $\cV_i$ was replaced by a spherical cap of the same volume. Then we will argue that if indeed we used spherical caps to form our code, it is optimal to use spherical caps of equal size. Finally we will tie up the loose ends to derive the stated lower bounds.

\subparagraph{Proof, part 1: Spherical caps are optimal.} First, suppose we fix the size of the code $c$ as well as the volumes of each Voronoi region. It is clear that the collision probability $p_2 = \sum_{i=1}^c \vol(\cV_i)^2 / \vol(\cS^{k-1})^2$ remains invariant under any changes we make to $\cC$ adhering to these constraints, and so the only question is whether $p_1$ is largest when each region is a spherical cap. 

To prove this we use a classical result of Baernstein--Taylor~\cite{baernstein76}, which extends Riesz' inequality~\cite{riesz30} to the Euclidean sphere. Suppose we have functions $f,g: \cS^{d-1} \to \R$ and $h: [-1,1] \to \R$, and let the surface measure $\sigma$ on $\cS^{d-1}$ be normalized so that $\sigma(\cS^{d-1}) = 1$. Let $h$ further be non-decreasing, bounded, and measurable, and let $f,g \in L_1(\cS^{d-1})$. Let:
\begin{align}
    \cT(f,g,h) := \iint\limits_{\cS^{k-1} \times \cS^{k-1}} f(\vc{x}) g(\vc{y}) h(\vc{x} \cdot \vc{y}) d\sigma(\vc{x}) d\sigma(\vc{y}).
\end{align}
Then \cite[Theorem 2]{baernstein76} states:
\begin{align}
    \cT(f,g,h) \leq \cT(f^*, g^*, h),
\end{align}
where $f^* = f^*(x_1)$ is a non-decreasing function which only depends on the first coordinate $x_1$, $g^* = g^*(y_1)$ is a non-decreasing function which only depends on $y_1$, and $\sigma(\{f^* > \lambda\}) = \sigma(\{f > \lambda\})$ and $\sigma(\{g^* > \lambda\}) = \sigma(\{g > \lambda\})$ for all $\lambda \in \R$.

To instantiate it, for each Voronoi region $i = 1, \dots, c$, we apply \cite[Theorem 2]{baernstein76} with $f(\vc{x}) := \mathds{1}\{\vc{x} \in \cV_i\}$, $g(\vc{x}) := \mathds{1}\{\vc{y} \in \cV_i\}$, and $h(\vc{x} \cdot \vc{y}) = \mathds{1}\{\vc{x} \cdot \vc{y} \geq \cos \theta\}$. Note that for these functions $f$ and $g$ we have $f^*(\vc{x}) = \mathds{1}(\vc{x} \cdot \vc{e}_1 \geq \alpha_i\}$ and $g^*(\vc{x}) = \mathds{1}(\vc{x} \cdot \vc{e}_1 \geq \alpha_i\}$ with $\alpha_i$ depending on the volume of $\cV_i$: the functions $f^*$ and $g^*$ correspond to spherical caps $\cC_i$.
\begin{align}
\cT(f,g,h) &= \iint\limits_{\cS^{k-1} \times \cS^{k-1}} \mathds{1}\{\vc{x} \in \cV_i\} \mathds{1}\{\vc{y} \in \cV_i\} \mathds{1}\{\vc{x} \cdot \vc{y} \geq \cos \theta\} d\sigma(\vc{x}) d\sigma(\vc{y}) \\
&= \Pr_{\vc{x}, \vc{y} \sim \cS^{k-1}} \left(\vc{x} \in \cV_i, \vc{y} \in \cV_i, \vc{x} \cdot \vc{y} \geq \cos \theta\right).
\end{align}
On the other hand, for $\cT(f^*,g^*,h)$ we obtain:
\begin{align}
\cT^*(f,g,h) &= \iint\limits_{\cS^{k-1} \times \cS^{k-1}} \mathds{1}\{\vc{x} \cdot \vc{e}_1 \geq \alpha_i\} \mathds{1}\{\vc{y} \cdot \vc{e}_1 \geq \alpha_i\} \mathds{1}\{\vc{x} \cdot \vc{y} \geq \cos \theta\} d\sigma(\vc{x}) d\sigma(\vc{y}) \\
&= \Pr_{\vc{x}, \vc{y} \sim \cS^{k-1}} \left(\vc{x} \in \cC_i, \vc{y} \in \cC_i, \vc{x} \cdot \vc{y} \geq \cos \theta\right).
\end{align}
By the inequality $\cT(f,g,h) \leq \cT(f^*,g^*,h)$ it follows that each such probability is upper bounded by the corresponding spherical cap probability. Thus, if our original hash function $h$ uses a spherical code $\cC$ with Voronoi regions $\cV_i$, and a utopian spherical code $\cC^*$ were to consist of induced Voronoi regions in the shape of spherical caps $\cC_i$ and generated a hash function $h^*$, then the previous results would show that:
\begin{align}
\Pr_{\vc{x}, \vc{y} \sim \cS^{k-1}} \left(h(\vc{x}) = h(\vc{y}) \mid \vc{x} \cdot \vc{y} \geq \cos \theta\right) &= \sum_{i=1}^c \Pr_{\vc{x}, \vc{y} \sim \cS^{k-1}} \left(\vc{x}, \vc{y} \in \cV_i \mid \vc{x} \cdot \vc{y} \geq \cos \theta\right) \\
& \leq \sum_{i=1}^c \Pr_{\vc{x}, \vc{y} \sim \cS^{k-1}} \left(\vc{x}, \vc{y} \in \cC_i \mid \vc{x} \cdot \vc{y} \geq \cos \theta\right) \\
& = \Pr_{\vc{x}, \vc{y} \sim \cS^{k-1}} \left(h^*(\vc{x}) = h^*(\vc{y}) \mid \vc{x} \cdot \vc{y} \geq \cos \theta\right).
\end{align}
In other words, in a utopian world we would always replace hash regions with spherical caps of the same volume if we could, as the resulting parameter $\rho$ can only get smaller.

\subparagraph{Proof, part 2: Equal spherical caps are optimal.} The previous derivation shows that the parameter $\rho$ can be lower-bounded by replacing each region with a spherical cap of the same volume. The next question is: if we do use such spherical cap-shaped hash regions, does it ever make sense to use regions of different volumes? 

In the above derivation, where each spherical region is independent of the others (we assume in our utopian setting that the spherical caps are perfectly disjoint, and the only boundary conditions for regions $\cV_i$ are defined by $\alpha_i$), the performance parameter $\rho$ of the entire code is a weighted average of the individual performance parameters $\rho$ of each of these spherical caps on parts of the sphere. This weighted average is minimized when we choose all the regions to be of the size that minimizes $\rho$, i.e.\ we find the spherical cap size that offers the best balance between the collision probabilities $p_1$ and $p_2$ and use that cap size for all our spherical caps.\footnote{Note that the resulting optimal code size may technically not be integral, and so the optimal integral-sized code may not consist of equal regions. However, minimizing over real-valued $c$ will always provide a lower bound on the parameter $\rho$ of any $k$-dimensional spherical code.}

\subparagraph{Proof, part 3: Obtaining the final expression.} Finally, observe that w.l.o.g.\ we may model two vectors $\vc{v}, \vc{w}$ at angle $\theta$ in $d$-dimensional space by the two vectors $\vc{v} = \vc{e}_1$ and $\vc{w} = \vc{e}_1 \cos \theta + \vc{e}_2 \sin \theta$. A projection via a random Gaussian matrix $A \sim \No(0,1)^{d \times k}$, whose first two columns are denoted $\vc{x}$ and $\vc{y}$, then projects $\vc{v}$ onto $A \vc{v} = \vc{x}$ and $\vc{w}$ onto $A \vc{w} = \vc{x} \cos \theta + \vc{y} \sin \theta$, with $\vc{x}, \vc{y} \sim \No(0,1)^k$. The condition of being contained in a spherical cap then requires an additional normalization, as these vectors $\vc{x}$ and $\vc{y}$ are not necessarily normalized. This finally leads to the given expression for $\rho_k(c,\theta)$, which denotes the parameter $\rho$ for a utopian code consisting of $c$ equal-sized spherical caps. Note that the given parameter $\alpha_k(c)$ is such that $\Pr_{\vc{z} \in \cS^{k-1}}(z_1 \geq \alpha_k(c)) = 1/c$.  
\end{proof}
%

For analyzing actual spherical codes, we obtained conditions of the form $\|\vc{x} - \vc{c}_i\| \geq \|\vc{x} - \vc{c}_j\|$, which then translated to conditions $\langle \vc{x}, \vc{c}_i - \vc{c}_j \rangle \geq 0$, leading to orthant probabilities. Here however we have different conditions of the form $\|\vc{x}/\|\vc{x}\| - \vc{c}_i\| \leq u$ for $\vc{x}$ to be contained in the Voronoi cell (spherical cap) defined by code word $\vc{c}_i$. Assuming w.l.o.g.\ $\vc{c} = \vc{e}_1$, the formulas reduce to those given in \eqref{eq:47}, but these unfortunately cannot be further simplified or expressed in terms of orthant probabilities.

For small values $k = 1, \dots, 6$, and for the angles $\theta \in \frac{\pi}{12} \{1, \dots, 5\}$, we numerically computed values $\rho_k(c, \theta)$, and taking the minimum over $c$, we obtained the values $\rho_k(\theta)$ given in Table~\ref{tab:overview}. The superscripts in Table~\ref{tab:overview} indicate the value $c$ leading to the minimum value $\rho$, i.e.\ the ideal code size for these utopian spherical cap codes. Although these codes are not actually instantiable, these lower bounds may be useful as a guideline for choosing the code size $c$, when trying to come up with the best spherical codes in this framework.

From Table~\ref{tab:overview}, observe that as the angle $\theta$ decreases, and the nearest neighbor is expected to lie very close to the query vector, the optimal code size $c$ increases; for problem instances where the neighbor lies very close to the query vector, a fine-grained partition will likely work better. Also note that as $k$ increases, the optimal code size $c$ for these utopian codes increases as well, and whereas for small $k$ the optimal code size is always less than the number of vertices of the simplex and orthoplex, for larger $k$ the optimal code size increases beyond that of the simplex, and also beyond that of the orthoplex. This is in line with our observation that as $k$ increases, orthoplices start outperforming simplices, and this also hints at the idea that as $k$ increases further, even denser codes (likely with a superlinear number of vertices) will exhibit a better performance (smaller query exponents $\rho$) than partitions based on the cross-polytopes.

\begin{conjecture}[Orthoplices are suboptimal for large $k$] \label{conj:ortho} 
There exists a dimension $k_0 \in \mathbb{N}$ such that, for all dimensions $k \geq k_0$, there exist spherical codes $\cC \subset \cS^{k-1}$ whose query exponents $\rho$ in the project--and--partition framework are smaller than the exponents $\rho$ of the $k$-orthoplex.
\end{conjecture}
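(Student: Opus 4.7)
The plan is to combine two ingredients: the spherical cap lower bound of Theorem~\ref{thm:lb2}, which quantifies what utopian partitions could achieve, and a probabilistic existence argument showing that random spherical codes of suitable size closely approach this utopian bound. The underlying intuition, already visible in Table~\ref{tab:overview}, is that the optimal utopian code size $c^*(k,\theta)$ -- the value of $c$ achieving the minimum in Theorem~\ref{thm:lb2} -- grows strictly faster than $2k$ as $k$ increases, while the orthoplex is constrained to exactly $c = 2k$ vertices.

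The first step is to establish a quantitative gap between the orthoplex's exponent $\rho_O(k,\theta)$ and the utopian value $\rho_k(c^*(k,\theta),\theta)$. Using the explicit block correlation matrix from Proposition~\ref{prop:orthoplex}, one can extract a sharp asymptotic expansion of $\rho_O(k,\theta)$ as $k\to\infty$ via Mehler-type expansions for orthant probabilities of Gaussians with patterned correlation structure. On the utopian side, by differentiating $\rho_k(c,\theta)$ from \eqref{eq:47} with respect to $c$ and analyzing the behavior near $c = 2k$, I would show that $c^*(k,\theta) > 2k$ for all sufficiently large $k$. Combined with Theorem~\ref{thm:lb2}, which gives $\rho_O(k,\theta) \geq \rho_k(2k,\theta)$, this produces a strict separation $\rho_k(c^*(k,\theta),\theta) < \rho_O(k,\theta)$, of size bounded below by a function of $k$ and $\theta$ that I would track through the expansion.

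The second step is constructive: draw $c = c^*(k,\theta)$ code points $\vc{c}_1,\ldots,\vc{c}_c$ of $\cC \subset \cS^{k-1}$ independently and uniformly at random. I would argue that with positive probability, $\rho(\cC,\theta) < \rho_O(k,\theta)$. To this end, one shows that the collision probabilities for such a random code concentrate around their spherical cap analogues: the Voronoi cells of a random code are not exact caps, but typical cells have volume close to $\mu(\cS^{k-1})/c$, and their boundaries perturb the orthant-type probabilities in Theorem~\ref{thm:main} only by lower-order amounts. More precisely, the target would be to show $\rho(\cC,\theta) = \rho_k(c^*(k,\theta),\theta) + o(1)$ as $k\to\infty$; combined with the strict gap from the previous step, this yields the conjecture for all sufficiently large $k$.

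The principal obstacle lies in the concentration analysis of the random code. Whereas $p_2$ only depends on cell volumes and is relatively easy to handle via Chernoff-type bounds, the collision probability $p_1$ is a joint event involving two correlated random projections falling in the same Voronoi cell, and is therefore sensitive to the geometry of the cell boundaries. Controlling this would require sharp bounds on the covering radius of random spherical codes together with a perturbation analysis showing that small boundary deformations change the relevant orthant probability by only a negligible amount. An alternative, less probabilistic route would be to identify an explicit structured code family of superlinear size -- perhaps derived from the dense lattice constructions of \cite{vladut19} mentioned in the introduction, or from orbits of finite reflection groups beyond the Coxeter families already considered -- and analyze its parameters directly using Theorem~\ref{thm:main}; the difficulty then shifts from probabilistic concentration to the combinatorial analysis of a specific code.
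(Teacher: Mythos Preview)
The statement you are attempting to prove is labeled a \emph{conjecture} in the paper and is left open; there is no proof in the paper to compare your proposal against. The paper only offers informal evidence (the numerics in Table~\ref{tab:overview}, the growth of the utopian optimal code size beyond $2k$) and explicitly says it ``informally conjecture[s] that in high dimensions \dots\ random spherical codes may be close to optimal.'' Your two-step plan --- quantify the utopian gap via Theorem~\ref{thm:lb2}, then realize something close to it with random codes --- is exactly the heuristic the paper gestures at, so in spirit you are aligned with the author's intuition.

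That said, as a proof your outline has a genuine gap at the point where the two steps are combined. Both the orthoplex exponent $\rho_O(k,\theta)$ and the utopian minimum $\rho_k(c^*(k,\theta),\theta)$ converge to the same limit $(1-\cos\theta)/(1+\cos\theta)$ as $k\to\infty$, so the ``strict separation'' you extract in step one is itself an $o(1)$ quantity. Your step two only claims $\rho(\cC,\theta) = \rho_k(c^*,\theta) + o(1)$ for the random code, with no control on the \emph{rate}. Unless you establish that the random-code approximation error is of strictly smaller order than the orthoplex--utopian gap, the two $o(1)$ terms cannot be compared and the conclusion does not follow. Concretely, you would need second-order expansions of both $\rho_O(k,\theta)$ and $\rho_k(c^*,\theta)$ in $k$ (not just the common leading term), together with a matching second-order concentration bound for the random code --- and it is precisely this quantitative layer that constitutes the open problem. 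Your own final paragraph essentially concedes this: the boundary-perturbation analysis for $p_1$ that you would need is not a routine technicality but the heart of the difficulty.
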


An interesting open question, which may well be closely related to other sphere packing and covering problems, is to find higher-dimensional spherical codes that are more suitable for nearest neighbor tasks than using the vertices of simplices or orthoplices. 


\section{Choosing a code in practice}
\label{sec:practice}

With the above framework in mind, and looking at the results from Table~\ref{tab:overview} and Figures~\ref{fig:comp} and \ref{fig:high1}, one might ask: how do I choose the best spherical code for my application? Is this purely a matter of choosing the code with the smallest exponent $\rho$? Or is the practical performance affected by other aspects of the code as well?

To look for an answer to this question, recall that in locality-sensitive hashing (as in Equation~\eqref{eq:lshpars}) there are several parameters that play a role:
\begin{align}
\rho = \frac{\log 1/p_1}{\log 1/p_2} \, , \qquad \ell = \frac{\log n}{\log 1/p_2} \, , \qquad t = O(n^{\rho}) \, . 
\end{align}
Here the constant in $t$ determines the success probability of the scheme. Now, the concrete costs for finding nearest neighbors with locality-sensitive hashing can be summarized as follows:
\begin{description}
\item[Projections:] For a $k$-dimensional code, we use Gaussian projection matrices $\mat{A} \sim \No(0,1)^{k \times d}$, and without further optimizations to $\mat{A}$, computing a product $\mat{A} \vc{x}$ requires $k \cdot d$ multiplications. Generating such a matrix $\mat{A}$ can also be done in time proportional to $k \cdot d$.
\item[Decoding:] Although for some spherical codes faster decoding algorithms may exist, in general the decoding cost for a spherical code of size $c$ in dimension $k$ is proportional to $k \cdot c$ multiplications.
\item[Hash tables:] We need $t = O(n^{\rho})$ hash tables for our data structure.
\item[Hash functions:] For each of the $t$ tables, we need to initialize $\ell$ random hash functions, for a cost of $\ell \cdot t$ hash functions. Each of these requires initializing a random projection matrix $\mat{A} \sim \No(0,1)^{k \times d}$, which can all be generated in time proportional to $\ell \cdot t \cdot k \cdot d$ multiplications.
\item[Preprocessing time:] For each of $n$ vectors, and each of $t$ hash tables, we need to compute the $k$-concatenated hash value, and insert the vector in the corresponding bucket. Timewise, computing all hash values for all vectors takes time proportional to $\ell \cdot t \cdot k \cdot d \cdot n$.
\item[Memory requirement:] We need to store $t$ hash tables, each of which contains all $n$ vectors, for a total cost of $t \cdot n$ entries in the hash tables.\footnote{Observe that we can keep a single copy of the entire list in memory, and only store pointers of essentially constant space to these vectors in the hash tables.}
\item[Query time -- Hashing:] Given a vector, we need to compute its hash values in all $t$ hash tables, for a cost of $\ell \cdot t \cdot k \cdot d$ multiplications.
\item[Query time -- Comparisons:] In total we expect that, apart from the nearest neighbor, approximately $t$ vectors will collide with a query vector in the $t$ hash tables. Going through these vectors to find the actual nearest neighbor can be done in time proportional to $t \cdot d$.
\end{description}
Here, an important observation is that the projections generally become more costly as the dimension $k$ of the spherical code increases; in higher dimensions, spherical codes with better parameters $\rho$ exist, but at the same time we pay for these more complex spherical codes in the decoding costs. The highest time and space complexities for the different parts of the nearest neighbor data structure can be summarized as follows:
\begin{align}
T_{\text{query}} = O(\ell \cdot t \cdot k \cdot d), \qquad T_{\text{preprocessing}} = O(n \cdot T_{\text{query}}), \qquad S = O(t \cdot n),
\end{align}
For the query time complexity $T_{\text{query}}$, the factor $d$ is constant and does not depend on the choice of the spherical code $\cC$. Similarly, the term $\ell$ contains a factor $\log n$ which does not depend on the choice of $\cC$. Dividing out a factor $n \log_2 n$, the product $t \cdot \ell \cdot k / \log_2 n$ has the following form for isogonal codes $\cC$:
\begin{align}
\frac{t \cdot \ell \cdot k}{\log_2 n} = O\left(n^{\rho} \cdot \frac{k}{\log_2 c}\right).
\end{align}
Now, for large $n$ this product is clearly minimized especially when $\rho$ is smallest. For concrete values of $n$ however, the other main factor that determines the query complexity is the ratio $\log_2(c) / k$. Here $\log_2(c)$ can be seen as the number of hash bits that are extracted from a single hash value $h(\vc{x}) \in \{1, \dots, c\}$, while the factor $k$ comes from the cost of computing a projection onto a $k$-dimensional subspace. In general, codes with a large ratio $\log_2(c) / k$ require fewer random projections to extract the necessary number of hash bits to determine the hash bucket, while small codes in high dimensions are in that sense rather costly. This quantitatively explains why e.g.\ hyperplane hashing is fast in practice, as $\log_2(c) / k = 1$, but cross-polytope hashing has a high cost for computing projections, as $\log_2(c) / k = O(\log(k) / k)$ is almost a linear factor smaller. 

To compare all codes in terms of their extracted number of bits per projection, as well as their query exponents $\rho$, Figure~\ref{fig:to} shows the resulting trade-offs for the spherical codes considered in this paper. As we can see, there is often a concrete trade-off between the query exponent $\rho$ and the ratio of extracted hash bits $\log_2(c) / k$; codes with small $\rho$ generally have a high hash cost, and codes which partition a low-dimensional space in many regions pay the price with a higher value $\rho$. Figure~\ref{fig:to} also paints a different picture as to which codes are best suited in practice, compared to Table~\ref{tab:overview}; the lattice-based codes $D_k$ may be better choices in practice than the pure orthoplex hashing of \cite{andoni15cp}, achieving better values $\rho$ while keeping the extracted hash bits per projection vector larger than for orthoplex hashing. 

For an even more concrete comparison of different codes in practice, Figure~\ref{fig:prac} describes concrete cost estimates for processing queries for different spherical codes, for different parameters $n$. Here the query cost estimate is simplified to $t \cdot \ell \cdot k \cdot d$, without taking into account further multiplicative order terms (which are likely to be roughly the same for different spherical codes). As the figure shows, for large $n$ the query exponent $\rho$ dominates, and codes with small values $\rho$ achieve the best query cost estimates. For smaller $n$ (e.g. $n = 10^5$, which might be the size of the database in applications in cryptanalysis~\cite{laarhoven15crypto, becker16lsf}), the cross-over point with hyperplane hashing quickly goes down, and using full-size cross-polytopes may not be as practical as using hyperplanes to partition the space; the smaller $\rho$ of cross-polytope hashing does not outweight the higher cost of the many random projections. This matches practical results from e.g.~\cite{laarhoven15crypto, becker16cp, becker16lsf, mariano15, mariano17}, which established that in applications with rather small $n$, hyperplane hashing dominates. For other applications, where $n$ is larger, using $D_k$ lattices or e.g. the $2_{21}$-polytope seems to give the best performance.

As a generalization of the (rectified) orthoplex families of hash functions $O_k$ and $D_k$, we finally propose the following hash function families, mapping a vector to the $m$ absolute largest coordinates.
\begin{definition}[$m$-max locality-sensitive hash functions]
Let $1 \leq m \leq k \leq d$. Then we define the $m$-max hash function family $\cH$ as the project--and--partition hash family with code:
\begin{align}
\cC = \left\{\sum_{i=1}^k s_i \vc{e}_i: \vc{s} \in \{-1, 0, 1\}^k, \|\vc{s}\|_1 = m\right\}. 
\end{align} 
For arbitrary $k$ and $m$, this spherical code has size $c = 2^m \binom{k}{m}$.
\end{definition}
As special cases, $m = 1$ corresponds to orthoplex hashing ($O_k$), $m = 2$ corresponds to the rectified orthoplices ($D_k$), and $m = k$ corresponds to hypercube hashing ($C_k$). As Figure~\ref{fig:to} suggests that larger codes ($m = 2$) may work better in practice than smaller codes ($m = 1$), we predict that for large $k$, larger values $m$ will outperform both orthoplex hashing and rectified orthoplex hashing. An interesting open question would be to estimate how $m$ should ideally scale with $k$ to obtain the best results in practice.


\bibliography{Database}


\end{document}